\newtheorem{theorem}{Theorem}
\newtheorem{lemma}{Lemma}
\newtheorem{define}{Definition}
\newtheorem{corollary}{Corollary}
\title{Stackelberg vs. Nash in the Lottery Colonel Blotto Game}
\author{
Yan Liu$^1$
\and
Bonan Ni$^2$
\and
Weiran Shen$^{1}$
\and
Zihe Wang$^1$
\and
Jie Zhang$^3$
\affiliations
$^1$Renmin University of China \\
$^2$Tsinghua University\\
$^3$University of Bath\\
\emails
liuyan5816@ruc.edu.cn,
bricksern@gmail.com,
shenweiran@ruc.edu.cn,
wang.zihe@ruc.edu.cn,
jz2558@bath.ac.uk
}
\begin{document}
\maketitle

\begin{abstract}
Resource competition problems are often modeled using Colonel Blotto games, where players take simultaneous actions.
However, many real-world scenarios involve sequential decision-making rather than simultaneous moves.

To model these dynamics, we represent the Lottery Colonel Blotto game as a Stackelberg game, in which one player, the leader, commits to a strategy first, and the other player, the follower, responds.
We derive the Stackelberg equilibrium for this game, formulating the leader's strategy as a bi-level optimization problem.

To solve this, we develop a constructive method based on iterative game reductions, which allows us to efficiently compute the leader's optimal commitment strategy in polynomial time.
Additionally, we identify the conditions under which the Stackelberg equilibrium coincides with the Nash equilibrium.
Specifically, this occurs when the budget ratio between the leader and the follower equals a certain threshold, which we can calculate in closed form.
In some instances, we observe that when the leader's budget exceeds this threshold, both players achieve higher utilities in the Stackelberg equilibrium compared to the Nash equilibrium.
Lastly, we show that, in the best case, the leader can achieve an infinite utility improvement by making an optimal first move compared to the Nash equilibrium.
\end{abstract}

\section{Introduction}
Competing for resources with limited budgets in strategic settings has diverse and impactful applications. 
These scenarios arise in areas such as electoral competition, security, crowdsourcing and recommendation systems \citep{Behnezhad-2018,Pagey-2023,Haggiag-2022,Anwar-2024}.
Many of these applications are modeled by Colonel Blotto games \citep{Borel-1921}, which have garnered significant attention from researchers across disciplines, including computer science, economics, and sociology \citep{Kohli-2012,Jackson-2015,Fu-2019}.

Colonel Blotto games have numerous variants, where players allocate limited resources across multiple battlefields.
Resources can be discrete, like troops, or continuous, like bid prices in auctions.
Player utility on each battlefield depends on the invested resources and the battlefield's outcome function, which can follow either a \emph{winner-takes-all} rule or a \emph{proportional} rule.
The latter, known as the \emph{Lottery Colonel Blotto game}, is the focus of this paper.

Colonel Blotto games are highly versatile, and most existing research treats them as normal-form games, emphasizing equilibrium existence and computation in restricted settings \citep{Roberson-2006,Rafael-2012,Macdonell-2015,Perchet-2022}.
However, many real-world scenarios involve sequential decision-making rather than simultaneous play.
For example, in marketplaces such as advertising auctions, e-commerce, and cloud services, large enterprises often act as leaders, committing to budget distribution strategies that smaller retailers observe and respond to.
These dynamics highlight the importance of studying Colonel Blotto games in sequential settings, like Stackelberg games, where one player commits to a strategy before the other makes its decision \citep{Hicks-1935,Schelling-1960}.
In this paper, we model the Lottery Colonel Blotto game as a Stackelberg game.

The Stackelberg game can be viewed as a sequential game.
Many previous studies have focused on scenarios involving an attacker as the follower and a defender as the leader, both operating under cost constraints rather than budget limitations.
In these models, each player determines their effort to maximize their individual payoffs \citep{Zhuang-2007,Cavusoglu-2008,Hausken-2008,Hausken-2012}.
However, \cite{Iliaev-2023} point out that models without budget constraints are easier to solve for Nash equilibria, as they reduce to a single-battlefield model with multiple values.
\cite{Iliaev-2023} also consider budget constraints in their comparison of sequential and simultaneous games.
However, they only impose budget constraints on the follower, while the first mover incurs costs without any budget limitations, and both players have symmetric valuations for the battlefield.
In contrast, we propose a more general model for broader applicability, in which both the leader and the follower have budgets, and the battlefield values are asymmetric.
We also compare the sequential game (the Stackelberg equilibrium) with the simultaneous game (Nash equilibria) to explore the differences in outcomes under these conditions.

To compare sequential and simultaneous games, we analyze the Stackelberg equilibrium and Nash equilibria.
However, obtaining a closed-form representation of Nash equilibria in Colonel Blotto games is a known challenge \citep{Perchet-2022,Li-2022}.
Such representations have only been derived in specific cases, such as when battlefield values are identical \citep{Roberson-2006} or when players have symmetrical budgets \citep{Rafael-2012,Boix-2020}.
We focus on comparing the Stackelberg equilibrium with Nash equilibria, particularly in cases where Nash equilibria can be explicitly solved.
In these solvable instances, we have made some interesting observations.
For the Stackelberg equilibrium, the key challenge is computing the leader's optimal commitment.
Previous work has suggested negative conclusions about this problem.
Specifically, finding an optimal pure strategy for the leader is NP-hard in normal-form games \citep{Conitzer-2006,Korzhyk-2010,Letchford-2010}, as it involves solving a bi-level optimization problem \citep{Renou-2009}.
To address this, we explore new insights into optimal commitment and best-response dynamics, which allow us to compare the Stackelberg equilibrium with Nash equilibria in these games.

\subsection{Our Contribution}
Our contributions are multi-fold:
\begin{itemize}
    \item We offer a novel understanding of the follower's best response using a water-filling approach.
    \item We construct a series of game reductions by splitting battlefields, ensuring that the follower's valuation for each sub-battlefield is uniform.
    We have proven that there exists an injective mapping between players' strategies before and after each reduction of the game.
    Additionally, the players' utilities remain unchanged.
    By analyzing the leader's commitment in the reduced game, we characterize the optimal commitment in the original game.
    Ultimately, we reduce the support of the follower's best response strategy from $2^n - 1$ to $n$, where $n$ is the number of battlefields.
    \item By reformulating the leader's objective and reducing the search space for the follower's best response strategy, we can compute the leader's optimal commitment strategy by a polynomial number of iterations. 
    \item We provide the sufficient and necessary conditions, under which the Stackelberg equilibrium coincides with the Nash equilibria of the simultaneous-move game.
    \item Finally, we address our motivating questions by constructing extreme cases that compare the leader's utility when using the optimal commitment strategy to its utility in Nash equilibria. 
    We observe that this ratio is also dependent on the leader's budget relative to the follower's budget.
    Furthermore, we provide an example illustrating how the follower's utility may still increase when faced with the leader's optimal commitment.
\end{itemize}

\subsection{Related Work}
{\bf Colonel Blotto Games and Their Variants.} There is an extensive body of literature on Colonel Blotto games.

When the outcome function is winner-takes-all, the player who allocates the most resources to a battlefield wins that battlefield.
In this line of work, a pure Nash equilibrium does not always exist, while a mixed Nash equilibrium always does \citep{Roberson-2006,Macdonell-2015}.
Specifically, each player's strategy is a complex joint distribution over an n-dimensional simplex.
\cite{Roberson-2006} provides a closed-form solution for the Nash equilibrium in the case of two players and multiple battlefields with equal value.
\cite{Macdonell-2015} provide a detailed and comprehensive analysis of the Nash equilibrium in the case of two players and two battlefields with different values.
In view of the difficulty in analyzing Nash equilibrium, researchers propose the \textit{General Lotto game}, a well-known variant.
Specifically, each player's strategy is a joint distribution over the n-dimensional simplex, ensuring that the expected allocation of resources does not exceed the budget, rather than strictly staying within the budgets \citep{Hart-2008,Dziubinski-2013,Kovenock-2021}.
The Nash equilibrium solution of the General Lotto game has been obtained \citep{Kovenock-2021}.

When the outcome function is proportional, the probability of a player winning a battlefield depends on the proportion of resources they allocate compared to the total resources allocated by all players.
This type of Colonel Blotto games is also called the \textit{Lottery Colonel Blotto game}.
In this line of work, it is called \textit{symmetric} if all players have equal budgets; otherwise, it is \textit{asymmetric}.
When all battlefields have the same value, and this value is consistent for all players, the game is \textit{homogeneous}.
If the battlefields possess different values, but these values remain the same for every player, the game is \textit{heterogeneous}.
Finally, when the battlefields' values differ, and these values vary for each player, the game is termed a \textit{generalized} game.
The \textit{Lottery Colonel Blotto game} has been proven to possess pure Nash equilibria \citep{Kim-2018}.
\cite{Friedman-1958} shows that the pure Nash equilibrium is unique and exhibits proportionality features in a two-player setup with heterogeneous battlefield values and asymmetric player budgets.
\cite{Duffy-2015} extend Friedman's analysis from two players to multiple players.
\cite{Kim-2018} provide a method to identify all pure Nash equilibria in the case of two players with asymmetric budgets and multiple battlefields with generalized values, and present an example where the pure Nash equilibrium is not unique.
\cite{Kovenock-2019} provide the best response given the strategy of the other player.
However, none of the existing work has provided a closed-form representation of the pure Nash equilibria \citep{Kim-2018,Kovenock-2019}.
In a variant where the outcome function of a battlefield is determined by the Tullock rent-seeking contest success function, the proportional rule is parameterized by a battle-specific discriminatory power and a battle-and-contestant-specific lobbying effectiveness \citep{Xu-2018,Osorio-2013,Li-2022}.
In this variant, the existence of a pure Nash equilibrium is not always guaranteed.
\cite{Xu-2022} provide sufficient conditions for the existence of the pure Nash equilibria, while \cite{Li-2022} analyze the properties of the pure Nash equilibria and provide sufficient conditions for its uniqueness.

\noindent {\bf Sequential Colonel Blotto Games. } There are two types of sequentiality: one is the sequentiality of the actions taken by players, and the other is the sequentiality of the appearance of the battlefields.

In the sequentiality of the actions taken by players, the first player makes a move, followed by the second player.
Many studies have compared sequential games and simultaneous games, in other contexts.
For instance, \cite{Zhuang-2007} identify equilibrium strategies for both attacker and defender in simultaneous and sequential games, although their model does not consider players' budgets.
\cite{Chandan-2020} consider three-stage Colonel Blotto games with stronger and weaker players.
In their model, the weaker player has the option to pre-commit resources to a single battlefield of its choice, and the stronger player can choose whether to allocate resources to win that battlefield.
\cite{Chandan-2022} propose a two-stage General Lotto game, in which one of the players has reserved resources that can be strategically pre-allocated across the battlefields in the first stage and the players then engage by simultaneously allocating their real-time resources against each other.

In the sequentiality of the appearance of the battlefields, all players simultaneously allocate resources in the first battlefield, then allocate resources in the second battlefield, and so on.
\cite{Anbarci-2023} analyze dynamic Blotto contests, where battlefields are presented to players in a predetermined sequential order.
They focus on the sub-game perfect equilibrium, exploring the existence and uniqueness of this solution concept.
\cite{Xie-2022} construct a pure strategy Markov perfect equilibrium (when it exists) and provide closed-form solutions for players’ strategies and winning probabilities.
\cite{Klumpp-2019} explore the strategic allocation of resources in a dynamic setting where winning a majority of battlefields is the goal.
They provide the optimal strategies for both players in sub-game perfect equilibrium.
\cite{Li-2021} reveal that the even-split strategy is robust when players have incomplete information about the other player's resource allocation.

\noindent {\bf Other Variants and Applications.} 
The Colonel Blotto games and their variants, along with the Tullock contest, all-pay auction, and their generalizations, can find wide application in various domains.
These include competition design \citep{Deng-2023}, contest design \citep{Ghosh-2014,Levy-2017,Letina-2023,Dasgupta-1998}, and auctions \citep{Wang-2015,Branzei-2012}.

\section{Preliminaries}
In the Stackelberg model of the Lottery Colonel Blotto game, let $a$ and $b$ represent the leader and the follower, respectively.
Both players have limited budget constraints, where $x_a > 0$ and $x_b > 0$.
These players allocate their budgets across $n$ battlefields, represented by the set $[n] = \{1, 2, \dots, n\}$.
For player $i \in \{a, b\}$, let $x_{ij} \geq 0$ denote the budget invested by player $i$ in battlefield $j$.
Throughout this paper, we consider players' pure strategies, which are denoted as $\bm{x}_i = (x_{i1}, x_{i2}, \cdots, x_{in})$.
The strategy set for player $i$ is:
$\bm{X}_i \triangleq \{\bm{x}_i: \sum_{j = 1}^{n} x_{ij} = x_i \text{ and } x_{ij} \geq 0 \}.$

Player $i \in \{a, b\}$ assigns a value to battlefield $j$ as $v_{ij} \in \mathbb{Q}_{>0}$.
Hence, the game can be represented as
$$\mathcal{G} := \langle \{a, b\}, [n], x_a, x_b, (v_{aj})_{j = 1}^{n}, (v_{bj})_{j = 1}^{n} \rangle.$$

Given a strategy profile $(\bm{x}_a, \bm{x}_b)$, the utility of player $i \in \{a, b\}$ on battlefield $j \in [n]$ is defined as the proportion of the budget allocated by player $i$ relative to the total budgets allocated by both players.
That is,
\begin{equation}\label{sec2-equ-uij}
    u_{ij}(x_{ij}, x_{-ij}) = \frac{x_{ij}}{x_{ij} + x_{-ij}} \cdot v_{ij}.
\end{equation}
In the event that both players allocate zero budget to a battlefield $j$, we assume that the follower will win the entire battlefield. 
This assumption is based on the rationale that the follower can achieve this outcome by allocating even an arbitrarily small amount of budget to $j \in [n]$. 
This assumption guarantees the existence of the follower's best response strategy.
Using a linear aggregation function, player $i$'s utility in the game is given by:
\begin{equation}\label{sec2-equ-ui}
    u_i(\bm{x}_a, \bm{x}_b) = \sum_{j = 1}^{n}u_{ij} = \sum_{j = 1}^{n} \frac{x_{ij} \cdot v_{ij}}{x_{ij} + x_{-ij}} , \; \forall i \in \{a, b\}.
\end{equation}

When $x_{aj} > 0$, for all $j \in [n]$, \cite{Kovenock-2019} characterize the other player's best response function as follows.
\begin{lemma}\label{sec2-lemma-br}
    \cite{Kovenock-2019}
    Given the leader's strategy $\bm{x}_a = (x_{aj})_{j = 1}^{n}$, where $x_{aj} > 0$, $\forall j \in [n]$, assume without loss of generality that all battlefields are ordered such that $\frac{v_{b1}}{x_{a1}} \geq \frac{v_{b2}}{x_{a2}} \geq \cdots \geq \frac{v_{bn}}{x_{an}}$.
    The unique optimal budget allocation of the follower, $\bm{x}_b = (x_{b1}, x_{b2}, \cdots, x_{bn})$, to battlefield $j \in [n]$ is characterized as follows:
    \begin{equation}\label{sec2-lem-equ-br}
        \begin{aligned}
            x_{bj} = \begin{cases}
                \frac{\left( x_{aj} v_{bj} \right)^{\frac{1}{2}} \left( x_b + \sum\limits_{j' \in K(\bm{x}_a)} x_{aj'} \right)}{\sum\limits_{j' \in K(\bm{x}_a)}\left( x_{aj'} v_{bj'} \right)^{\frac{1}{2}}} - x_{aj}, & \text{if } j \in K(\bm{x}_a); \\
                0, & \text{if } j \in [n] \backslash K(\bm{x}_a),
            \end{cases}
        \end{aligned}
    \end{equation}
    where $K(\bm{x}_a) = \{ 1, \cdots, k^* \}$ is such that
    \begin{equation}\label{shorteststave}
        \displaystyle
        k^* = \max \left\{ k \in [n]: \frac{v_{bj}}{x_{aj}} > \frac{\left( \sum\limits_{l = 1}^j (x_{al} v_{bl})^{\frac{1}{2}} \right)^2}{\left( x_b + \sum\limits_{l = 1}^j x_{al} \right)^2}, \forall j \leq k \right\}.
    \end{equation}
\end{lemma}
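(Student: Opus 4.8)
The plan is to treat the follower's problem $\max_{\bm{x}_b \in \bm{X}_b} u_b(\bm{x}_a,\bm{x}_b)$ as a strictly concave program, read its solution off the KKT conditions, and then pin down the support of $\bm{x}_b$ by a water-filling argument. Since $x_{aj} > 0$ for all $j$, each summand $f_j(t) = t\,v_{bj}/(t+x_{aj})$ satisfies $f_j''(t) = -2 v_{bj} x_{aj}/(t+x_{aj})^3 < 0$, so $u_b$ is strictly concave on the simplex $\bm{X}_b$; hence the maximiser exists, is unique, and is characterised by the KKT conditions. First I would write stationarity: with $\lambda$ the multiplier of $\sum_j x_{bj} = x_b$ and $\mu_j \ge 0$ those of $x_{bj}\ge 0$, it reads $v_{bj} x_{aj}/(x_{bj}+x_{aj})^2 = \lambda - \mu_j$. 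On a battlefield with $x_{bj} > 0$, complementary slackness gives $\mu_j = 0$, hence $x_{bj} = \sqrt{v_{bj} x_{aj}/\lambda} - x_{aj}$, which is positive exactly when $v_{bj}/x_{aj} > \lambda$; on one with $x_{bj} = 0$ it gives $v_{bj}/x_{aj} = f_j'(0) \le \lambda$. Thus the support $K$ of $\bm{x}_b$ is governed by the single cutoff $\lambda$: $j \in K$ iff $v_{bj}/x_{aj} > \lambda$. Under the ordering $v_{b1}/x_{a1} \ge \cdots \ge v_{bn}/x_{an}$ this support is a prefix $\{1,\dots,k^*\}$, and substituting the expression for $x_{bj}$ into the budget constraint $\sum_{j\in K} x_{bj} = x_b$ solves for $\sqrt\lambda = \big(\sum_{j\in K}\sqrt{x_{aj}v_{bj}}\big)/\big(x_b+\sum_{j\in K} x_{aj}\big)$; plugging this back gives exactly~\eqref{sec2-lem-equ-br}.

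What remains is to identify $k^*$ with the index in~\eqref{shorteststave}. For each prefix $\{1,\dots,k\}$ let $\lambda_k$ denote the corresponding cutoff, i.e.\ the right-hand side of the strict inequality in~\eqref{shorteststave} at $j = k$ (so $\lambda_0 = 0$). The crucial step is a mediant inequality: $\sqrt{\lambda_k}$ is the ratio of the sum of numerators to the sum of denominators of the two fractions $\sqrt{\lambda_{k-1}}$ and $\sqrt{v_{bk}/x_{ak}} = \sqrt{x_{ak}v_{bk}}/x_{ak}$, hence lies weakly between them; consequently $\lambda_{k-1} < \lambda_k < v_{bk}/x_{ak}$ when $v_{bk}/x_{ak} > \lambda_{k-1}$, and $v_{bk}/x_{ak} \le \lambda_k \le \lambda_{k-1}$ otherwise, so in either case $v_{bk}/x_{ak} > \lambda_k \iff v_{bk}/x_{ak} > \lambda_{k-1}$. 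Because the battlefields are sorted by $v_{bj}/x_{aj}$, this gives: for every $j \le k^*$ the defining condition of~\eqref{shorteststave} yields $v_{bj}/x_{aj} \ge v_{bk^*}/x_{ak^*} > \lambda_{k^*}$, so the formula makes $x_{bj} > 0$; and if $k^* < n$, the first failure of that condition occurs at $j = k^*+1$, i.e.\ $v_{b,k^*+1}/x_{a,k^*+1} \le \lambda_{k^*+1}$, which by the equivalence above means $v_{b,k^*+1}/x_{a,k^*+1} \le \lambda_{k^*}$, whence by the ordering $v_{bj}/x_{aj} \le \lambda_{k^*}$ for all $j > k^*$. Taking $\lambda = \lambda_{k^*}$ and $K = \{1,\dots,k^*\}$ therefore satisfies every KKT condition, so by strict concavity it is the unique optimum --- which is the claim.

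The main obstacle is the monotonicity step: one must verify carefully that the cutoff $\lambda_k$ moves in the correct direction as each battlefield enters the support, and that once the inequality in~\eqref{shorteststave} first fails it cannot be restored for larger $k$, so that a single threshold $k^*$ genuinely describes the support. Everything else is routine manipulation of the KKT system together with the budget identity.
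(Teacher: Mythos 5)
Your proposal is correct and follows essentially the route the paper attributes to \cite{Kovenock-2019} (the paper itself only cites this lemma rather than reproving it): the follower's problem is a strictly concave program whose KKT conditions are necessary and sufficient, the stationarity conditions yield the single cutoff $\lambda$, and the budget identity pins $\sqrt{\lambda}$ down as $\bigl(\sum_{j\in K}\sqrt{x_{aj}v_{bj}}\bigr)/\bigl(x_b+\sum_{j\in K}x_{aj}\bigr)$, which is exactly the water-filling level the paper's Technical Remark describes informally. Your mediant-inequality argument identifying $k^*$ with the prefix threshold in \eqref{shorteststave} is sound and supplies the only step the paper leaves implicit; the sole (trivial) omission is noting that the set defining $k^*$ is nonempty because $x_b>0$ makes the condition hold at $k=1$.
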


This characterization is a useful tool for our subsequent analysis. 
To use this result to characterize the leader's optimal commitment, we first assume that the leader will allocate a positive budget to every battlefield. 
Following a series of constructions in Section \ref{BRsupport}, we identify the support of the follower's best response strategies and provide a closed-form expression of the leader's optimal commitment strategy in Section \ref{OptCommitment}.
To complete this analysis, we will verify that the leader indeed allocates a positive budget to every battlefield in the optimal commitment strategy. 
We focus on the pure strategy of the leader.

The uniqueness of the follower's best response $\bm{x}_b$ will facilitate the subsequent analysis. 
However, providing a closed-form representation of the leader's optimal commitment strategy remains very challenging. 
This difficulty primarily arises from the uncertainty with regard to which battlefields the follower will abandon, specifically where $x_{bj} = 0$ for $j \in [n] \backslash K(\bm{x}_a)$. 
For simplicity, we denote $\overline{K(\bm{x}_a)} = [n] \backslash K(\bm{x}_a)$ and the support $k^* = |K(\bm{x}_a)|$.
In the following section, we characterize the support of the follower's best response strategy when the leader employs optimal commitment strategy. 
This analysis, in turn, will aid in fully computing the leader's optimal commitment strategy.

\section{The Support of the Follower's Best Response Strategy} \label{BRsupport}
The computation of the leader's optimal commitment strategy can be formulated as a bi-level optimization problem. 
In this scenario, the leader, assuming the follower is a utility maximizer, selects an optimal strategy in the upper-level optimization task, while the follower provides the best response in the lower-level optimization task.
The complexity is compounded by the fact that the support of each player's strategy profile can have up to $2^n - 1$ possible combinations. 
Additionally, it is challenging to determine the exact budget allocation for each battlefield.

In this section, we demonstrate that when the leader employs optimal commitment strategy, the support of the follower's best response strategy is limited to at most $n$ possible combinations. 
This significantly reduces the search space for the leader's optimal commitment strategy.

We achieve this characterization by constructing a series of auxiliary games.
\begin{enumerate}
    \item Consider a game $\mathcal{G}$, with the leader's commitment strategy denoted by $\bm{x}_a$ (which may not be optimal), and the follower's best response denoted by $\bm{x}_b$. 
    If we divide a single battlefield into multiple sub-battlefields in such a way that the leader and follower's valuations of the original battlefield are evenly distributed among these sub-battlefields, the resulting game is denoted by $\mathcal{G}^{(1)}$. 
    By evenly distributing their budgets across these sub-battlefields, the utilities for both the leader and the follower remain unchanged in $\mathcal{G}^{(1)}$.
    The formal statement and its proof are deferred to Lemma 6.
    \item By repeatedly performing this battlefield-splitting process, we create a game in which the follower values all battlefields equally.
    Without loss of generality, we can rename the battlefields so that the leader values them in increasing order. 
    Denote this game by $\mathcal{G}^{(2)}$. 
    According to Lemma 7, in the optimal commitment strategy, the leader should allocate more or at least equal budget to battlefields that they value strictly higher in order to achieve a strictly greater payoff.
    \item Furthermore, we observe that, in the optimal commitment strategy, when the leader values two battlefields equally, they should allocate an equal budget to both battlefields. 
    For a formal statement and proof, see Lemma 8.
    \item Following the above procedure, we obtain a game where $v_{b1} = v_{b2} = \dots$, meaning the follower values all battlefields equally.
    Meanwhile, the leader values the battlefields in increasing order, with some values strictly increasing and others remaining equal. 
    Our final operation is to merge these battlefields that split from the same battlefield.
    In this new game $\mathcal{G}^{(3)}$, the leader and follower's values and budgets across these sub-battlefields are merged into the original battlefield. 
    Their utilities remain unchanged from $\mathcal{G}^{(2)}$. 
    The formal statement and proof are deferred to Lemma 9.
\end{enumerate}

With these constructions and observations, we now provide the main result of this section.
\begin{theorem}\label{theorem-characterization}
    In a Stackelberg game
    $$\mathcal{G} := \langle \{a, b\}, [n], x_a, x_b, (v_{aj})_{j = 1}^{n}, (v_{bj})_{j = 1}^{n} \rangle,$$
    assume without loss of generality that the battlefields are ordered such that $\frac{v_{a1}}{v_{b1}} \leq \frac{v_{a2}}{v_{b2}} \leq \cdots \leq \frac{v_{an}}{v_{bn}}$.
    Then, when the leader employs the optimal commitment strategy $\bm{x}_a$, the support of the follower's best response strategy has at most $n$ possibilities.
    Specifically, $K(\bm{x}_a) \in \{ \{1\}, \{1,2\}, \dots, \{1,2,\dots,n\} \}$.
    Additionally, if for $j, h \in [n]$, $\frac{v_{aj}}{v_{bj}} = \frac{v_{ah}}{v_{bh}}$, then the optimal commitment $\bm{x}_a$ satisfies $\frac{x_{aj}}{x_{ah}} = \frac{v_{aj}}{v_{ah}}$.
\end{theorem}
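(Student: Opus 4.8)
The plan is to pass from $\mathcal{G}$ to the fully split game $\mathcal{G}^{(2)}$, in which the follower values every battlefield equally, to pin down the structure of an optimal commitment there using the follower's best-response characterisation (Lemma~\ref{sec2-lemma-br}) together with Lemmas~7 and~8, and then to merge back via Lemma~9. Since each $v_{bj}\in\mathbb{Q}_{>0}$, there are a positive rational $c$ and positive integers $m_1,\dots,m_n$ with $v_{bj}=m_jc$; splitting battlefield $j$ into $m_j$ equal sub-battlefields (each of leader value $v_{aj}/m_j$ and follower value $c$) and iterating, Lemma~6 yields $\mathcal{G}^{(2)}$ together with a utility-preserving injection between strategy profiles. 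Hence the Stackelberg value is preserved, and, by the merge of Lemma~9, an optimal commitment of $\mathcal{G}^{(2)}$ that is constant across the sub-battlefields of each original battlefield corresponds to an optimal commitment $\bm{x}_a$ of $\mathcal{G}$ with the follower's best-response support transported accordingly. Throughout, the optimal commitment is taken strictly positive on every battlefield, which is verified afterwards.

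I would then analyse $\mathcal{G}^{(2)}$. Relabel its sub-battlefields in nondecreasing order of the leader's value, breaking ties by original battlefield index; the assumed ordering $v_{a1}/v_{b1}\le\cdots\le v_{an}/v_{bn}$ then places the $m_j$ sub-battlefields of original battlefield $j$ consecutively and before those of $j+1$, all sharing the common leader value $\rho_jc$ with $\rho_j:=v_{aj}/v_{bj}$ and $\rho_1\le\cdots\le\rho_n$. By Lemma~7 an optimal commitment of $\mathcal{G}^{(2)}$ is weakly increasing in the leader's value, and by Lemma~8 it is constant on each level set of that value; hence it assigns a common budget $\beta_j$ to all sub-battlefields of battlefield $j$, with $\beta_1\le\cdots\le\beta_n$. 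Applying Lemma~\ref{sec2-lemma-br} in $\mathcal{G}^{(2)}$, where the follower's values are uniform ($=c$), the Kovenock order is by increasing leader budget and thus coincides with the block order above, and $K$ is a prefix of it. Since the follower sees only $v_b$ and $x_a$, sub-battlefields carrying equal leader budget are indistinguishable to it, so uniqueness of its best response forbids $K$ from splitting such a group; therefore $K$ is a union of leading blocks, i.e.\ it consists of the sub-battlefields of $1,\dots,k$ for some $k\ge 1$ (that $k\ge 1$ follows because the condition of Lemma~\ref{sec2-lemma-br} for the first battlefield reduces to $x_b>0$).

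Merging back via Lemma~9 gives the first claim: $x_{aj}=m_j\beta_j=(v_{bj}/c)\,\beta_j$, so $x_{aj}/v_{bj}=\beta_j/c$ is nondecreasing in $j$, the Kovenock order of Lemma~\ref{sec2-lemma-br} in $\mathcal{G}$ now agrees with the index order, and the follower's best-response support is $K(\bm{x}_a)=\{1,\dots,k\}$ --- one of the $n$ sets $\{1\},\{1,2\},\dots,\{1,\dots,n\}$. For the additional claim, if $\rho_j=\rho_h$ then every sub-battlefield of $j$ and of $h$ has leader value $\rho_jc$, so Lemma~8 forces $\beta_j=\beta_h$; hence $x_{aj}/x_{ah}=m_j/m_h=v_{bj}/v_{bh}$, and since $\rho_j=\rho_h$ is equivalent to $v_{aj}/v_{ah}=v_{bj}/v_{bh}$, we obtain $x_{aj}/x_{ah}=v_{aj}/v_{ah}$.

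The main obstacle is the interface between the reductions: one must verify that passing to $\mathcal{G}^{(2)}$ preserves the Stackelberg value and that a block-constant optimal commitment of $\mathcal{G}^{(2)}$ legitimately pulls back to an optimal commitment of $\mathcal{G}$, so that Lemmas~7--9 may be applied to it. The delicate point is that, against an even split, the follower's unique best response is itself even within each block --- which relies on the strict concavity of $z\mapsto z/(z+c)$ --- giving the leader a guarantee of the original value in $\mathcal{G}^{(2)}$, with the reverse inequality furnished by the injection of Lemma~6. A secondary point is the indistinguishability argument promoting ``$K$ is a prefix of the Kovenock order'' to ``$K$ is a union of whole blocks'', which is exactly what caps the number of supports at $n$; its subtlety is that two blocks may carry equal leader budget even when the leader values them differently, yet the follower still cannot tell them apart.
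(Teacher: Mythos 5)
Your proposal is correct and follows essentially the same route as the paper: split battlefields (using rationality of the values) to reach a game $\mathcal{G}^{(2)}$ with uniform follower valuations, use the utility-preserving split/merge maps (Lemmas 6 and 9) to sandwich the Stackelberg value and transfer optimality, invoke Lemmas 7 and 8 to get monotone, block-constant allocations, and then read off the prefix structure of $K(\bm{x}_a)$ from Lemma \ref{sec2-lemma-br} after merging, with the proportionality claim coming from Lemma 8 exactly as in the paper's proof.
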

{\bf Technical Remark:}
The constructions in this section to derive the main result rely on an interpretation different from the one provided by \cite{Kovenock-2019}.
To prove Lemma \ref{sec2-lemma-br}, they represent the second player's best response strategy as the solution to an optimization problem.
By verifying that \eqref{sec2-lem-equ-br} satisfies the Kuhn-Tucker conditions, which are both necessary and sufficient, they conclude that \eqref{sec2-lem-equ-br} is the unique global constrained maximizer of the problem.

In contrast, we interpret the follower's optimization problem as a {\bf water-filling} process. 
Initially, the follower allocates their budget to the battlefields with the highest marginal utility. 
Note that the first and second derivatives of the follower's utility from battlefield $j$ are:
\begin{equation}\label{equation-marginal-utility-follower}
    \begin{aligned}
        &\frac{\partial u_{bj}}{\partial x_{bj}} = \frac{\partial ( \frac{x_{bj} v_{bj}}{x_{aj} + x_{bj}} )}{\partial x_{bj}} = \frac{x_{aj} v_{bj}}{(x_{aj} + x_{bj})^2}, \\
        &\frac{\partial^2 u_{bj}}{\partial x_{bj}^2} = \frac{-2x_{aj}v_{bj}}{(x_{aj} + x_{bj})^3} < 0.
    \end{aligned}
\end{equation}

Hence, the utility $u_{bj}$ is a concave function with respect to $x_{bj}$. 
As the budget allocation to battlefield $j$ increases, its marginal utility decreases until it matches the marginal utility of other battlefields with initially lower marginal utility. 
From this point, the follower distributes its budget across these battlefields, maintaining equal marginal utility, until it eventually decreases to an even lower level. 
Ultimately, the follower exhausts its entire budget. 
The lowest marginal utility is given by $\frac{\left( \sum_{l = 1}^{k^*} (x_{al} v_{bl})^{\frac{1}{2}} \right)^2}{\left( x_b + \sum_{l = 1}^{k^*} x_{al} \right)^2}$, as defined in $k^*$ in \eqref{shorteststave}.

It is essential to consider whether these game-splitting and merging operations, as well as the leader's strategies $\bm{x}_a$, alter the structure of the set $K(\bm{x}_a)$. 
This set, $K(\bm{x}_a)$, represents the battlefields on which the follower allocates positive budgets. 
Essentially, there is a mapping between the leader's commitment set and the follower's strategy set $\bm{x}_b$ across the original game $\mathcal{G}$ and the subsequent games $\mathcal{G}^{(k)}$ for $k = 1, 2, 3$.

\section{The Leader's Optimal Commitment Strategy} \label{OptCommitment}
Theorem \ref{theorem-characterization} characterizes $K(\bm{x}_a)$, the support of the follower's best response strategy when the leader employs its optimal commitment strategy $\bm{x}_a$.
In this section, we formulate the leader's objective as an optimization problem for any given set $K(\bm{x}_a)$.
By solving the optimization problem for all $n$ possible realizations of the set $K(\bm{x}_a)$ as described in Theorem \ref{theorem-characterization}, and comparing the optimal values, we identify the optimal commitment strategy $\bm{x}_a$.

Assume the support of the follower's best response strategy, when the leader employs its optimal commitment strategy, is given by $K$.
Given a specific set $K(\bm{x}_a) = K$, i.e., the set of battlefields in which the follower participates when playing a best response, the leader's optimal commitment strategy can be determined by solving the following optimization problem ({\bf OC}).
\begin{align}
    \max_{\bm{x}_a \in \bm{X}_a} \quad & u_a(\bm{x}_a, \bm{x}_b) = \sum_{j \in K} \frac{x_{aj} \cdot v_{aj}}{x_{aj} + x_{bj}} + \sum_{j \in \overline{K}} v_{aj} & \label{optimal-commitment} \\
    \mbox{s.t.} \quad &\sum_{j = 1}^{n} x_{aj} = x_a, \label{budgetconstraint} &  \\
    &x_{aj} > 0,  \forall j \in [n], &  \label{xaassumption} \\
    &x_{bj} = \frac{(x_{aj} v_{bj})^\frac{1}{2} (x_b + \sum\limits_{j' \in K} x_{aj'})}{\sum_{j' \in K}(x_{aj'} \cdot v_{bj'})^\frac{1}{2}} - x_{aj} > 0, \forall j \in K, \label{bestresponse1}   \\
    &(\frac{v_{bj}}{x_{aj}})^\frac{1}{2} \leq \frac{\sum_{l \in K} (x_{al} v_{bl})^\frac{1}{2}}{x_b + \sum_{l \in K} x_{al}}, \forall j \notin K, \label{noteq-conatrain} \\
    &x_{bj} = 0, \forall j \in \overline{K}. \label{bestresponse2} &
\end{align}
The leader's utility is derived from two components: the battlefield set $K$, where the leader and the follower share the battlefield proportional to their resources allocated, and the set $\overline{K}$, where the leader has completely won the battlefield. 
Equation \eqref{budgetconstraint} represents the leader's budget constraint. 
Inequalities \eqref{xaassumption} assume the leader allocates a positive amount of budget to every battlefield in the optimal commitment strategy. 
Constraints \eqref{bestresponse1} to \eqref{bestresponse2} describe the follower's best response strategy as outlined in Lemma \ref{sec2-lemma-br}.

Next, we characterize the budget that the leader allocates to the battlefields where the follower does not compete. 
This characterization is based on interpreting the follower's best response strategy through a water-filling approach, but from the leader's perspective. 
Consider any battlefield $j \in \overline{K}$.
On one hand, the leader must allocate sufficient budget to these battlefields so that the follower's marginal utility on these battlefields is lower than on other battlefields, making competition unappealing. 
On the other hand, the leader does not need to allocate an excessive amount of resources to force the follower to withdraw. 
Therefore, there exists a minimum threshold of resources that the leader must allocate to these battlefields in $\overline{K}$, which is sufficient but not wasteful. 
This threshold value is provided below.
\begin{lemma}\label{sec4-lemma-leader-threshold}
    In the leader's optimal commitment $\bm{x}_a$, we have $x_{aj} = \frac{v_{bj} \cdot \left( x_b + \sum_{l \in K(\bm{x}_a)} x_{al} \right)^2}{\left( \sum_{l \in K(\bm{x}_a)} (x_{al} v_{bl})^{\frac{1}{2}} \right)^2}, \forall j \in \overline{K(\bm{x}_a)}.$
\end{lemma}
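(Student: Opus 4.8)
The plan is a perturbation/exchange argument making precise the intuition stated before the lemma: any budget the leader puts on a battlefield $j\in\overline{K}$ in excess of the stated threshold is wasted, since it can be reinvested in $K$ to strictly raise the leader's payoff. First I would rewrite the feasibility constraint \eqref{noteq-conatrain}: squaring and rearranging it gives exactly $x_{aj}\ge \tau_j:=\frac{v_{bj}\,(x_b+\sum_{l\in K}x_{al})^2}{\left(\sum_{l\in K}(x_{al}v_{bl})^{1/2}\right)^2}$ for every $j\in\overline{K}$, so the lemma asserts that this inequality is tight at the optimum. Writing $c=\sum_{l\in K}x_{al}$, $S=\sum_{l\in K}(x_{al}v_{bl})^{1/2}$ and $\mu=(x_b+c)^2/S^2$, note $\tau_j=\mu\,v_{bj}$, so all the thresholds share the common factor $\mu$; hence the claim is equivalent to $\sum_{j\in\overline{K}}x_{aj}=\mu V$ with $V=\sum_{j\in\overline{K}}v_{bj}$, i.e.\ the leader spends the \emph{minimum feasible} total budget on $\overline{K}$. (If $\overline{K}=\emptyset$ there is nothing to prove; also $K\neq\emptyset$, since $x_b>0$ and $x_{aj}>0$ force $k^*\ge1$ in Lemma~\ref{sec2-lemma-br}.)

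The key sub-step I would isolate is that scaling the leader's allocation on $K$ upward strictly increases the leader's utility on $K$ while keeping $K$ as the support. Substituting the best-response identity $x_{aj}+x_{bj}=(x_{aj}v_{bj})^{1/2}(x_b+c)/S$ from \eqref{bestresponse1} into the objective \eqref{optimal-commitment} collapses the leader's $K$-utility to $u_a^K=\frac{S}{x_b+c}\sum_{j\in K}\frac{v_{aj}}{\sqrt{v_{bj}}}\sqrt{x_{aj}}$. Replacing $(x_{aj})_{j\in K}$ by $(\lambda x_{aj})_{j\in K}$ with $\lambda>1$ multiplies $S$ and the sum by $\sqrt{\lambda}$ and $c$ by $\lambda$, hence multiplies $u_a^K$ by $\frac{\lambda(x_b+c)}{x_b+\lambda c}>1$ (using $x_b>0$). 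Moreover each $x_{bj}$, $j\in K$, depends continuously on $\lambda$ and is strictly positive at $\lambda=1$ (by \eqref{bestresponse1} at the optimum), so it stays positive for $\lambda$ near $1$.

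Now suppose for contradiction that at the optimal commitment $\sum_{j\in\overline{K}}x_{aj}>\mu V$. Define $g(\lambda)=(x_a-\lambda c)-\frac{(x_b+\lambda c)^2}{\lambda S^2}V$, the budget left for $\overline{K}$ after rescaling $K$ by $\lambda$; since $g(1)=\sum_{j\in\overline{K}}x_{aj}-\mu V>0$, continuity yields $g(\lambda)>0$ for some $\lambda>1$ close to $1$. Fix such a $\lambda$ and set $x_{aj}'=\lambda x_{aj}$ for $j\in K$ and $x_{aj}'=\frac{(x_b+\lambda c)^2}{\lambda S^2}v_{bj}+\frac{g(\lambda)}{|\overline{K}|}$ for $j\in\overline{K}$. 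A routine check shows $\bm{x}_a'$ is feasible for \textbf{(OC)} with the \emph{same} $K$: the coordinates sum to $x_a$, all are positive, the follower's $K$-allocation is positive by the previous paragraph, and on $\overline{K}$ each $x_{aj}'$ is at least the rescaled threshold $\frac{(x_b+\lambda c)^2}{\lambda S^2}v_{bj}$, so \eqref{noteq-conatrain} holds; reading Lemma~\ref{sec2-lemma-br} as an iff-characterization of the support through \eqref{bestresponse1} and \eqref{noteq-conatrain}, the follower's best response to $\bm{x}_a'$ then has support exactly $K$. Hence the leader's utility at $\bm{x}_a'$ is $\frac{\lambda(x_b+c)}{x_b+\lambda c}\,u_a^K+\sum_{j\in\overline{K}}v_{aj}>u_a^K+\sum_{j\in\overline{K}}v_{aj}$, contradicting optimality. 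Therefore $x_{aj}=\tau_j$ for every $j\in\overline{K}$, which is the claimed formula.

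I expect the main obstacle to be the feasibility verification in the last paragraph — specifically, confirming that the perturbed commitment induces a follower best response whose support is \emph{still exactly} $K$ (neither a proper subset, via some $x_{bj}$ hitting $0$, nor a superset), since this is what justifies evaluating the leader's payoff through the closed form $u_a^K$. This relies on the iff-reading of Lemma~\ref{sec2-lemma-br} noted above, on $K\neq\emptyset$ with strictly positive $x_{aj}$ on $K$ (so the rescaling is well defined), and on the continuity in $\lambda$ of $x_{bj}$, of $g$, and of the thresholds, which confines the whole argument to $\lambda$ slightly above $1$.
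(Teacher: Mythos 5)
Your proposal is correct, but it reaches the conclusion by a genuinely different route than the paper. Both arguments share the same skeleton: constraint \eqref{noteq-conatrain} forces $x_{aj}\ge \tau_j=\frac{v_{bj}(x_b+\sum_{l\in K}x_{al})^2}{(\sum_{l\in K}(x_{al}v_{bl})^{1/2})^2}$ on $\overline{K}$, and any slack is shown to contradict optimality by reinvesting it in $K$, using the collapsed objective $u_a^K=\frac{S}{x_b+c}\sum_{j\in K}v_{aj}\sqrt{x_{aj}/v_{bj}}$ (with $S=\sum_{l\in K}(x_{al}v_{bl})^{1/2}$, $c=\sum_{l\in K}x_{al}$). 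The paper's proof is local: it fixes one over-allocated battlefield in $\overline{K}$, transfers an infinitesimal $\varepsilon$ to a single battlefield $r\in K$, and guarantees the existence of such an $r$ with strictly positive marginal value via an Euler-type identity showing $\sum_{j\in K}x_{aj}\,\partial u_{aK}/\partial x_{aj}>0$ (keeping $K$ unchanged by an auxiliary "for $\varepsilon$ small enough" transfer within $\overline{K}$). Your proof is the global, integrated form of the same fact: rescaling the whole $K$-allocation by $\lambda>1$ multiplies $u_a^K$ by $\frac{\lambda(x_b+c)}{x_b+\lambda c}>1$ (exactly the quantity whose derivative at $\lambda=1$ is the paper's identity), and the rescaling is funded by the aggregate slack $\sum_{j\in\overline{K}}x_{aj}-\mu V>0$, with the $\overline{K}$ allocations reset to the rescaled thresholds plus an even share of the remainder. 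What your version buys is that no derivative computation or choice of a distinguished battlefield is needed, and support preservation is verified explicitly against the recomputed thresholds rather than by an appeal to small $\varepsilon$; the cost is slightly heavier feasibility bookkeeping ($g(\lambda)>0$ and $x_{bj}(\lambda)>0$ simultaneously near $\lambda=1$, which continuity does give). Both proofs rest on the same reading of Lemma \ref{sec2-lemma-br}, namely that \eqref{bestresponse1} together with \eqref{noteq-conatrain} characterizes the follower's best response having support exactly $K$ — the paper's formulation of (\textbf{OC}) already adopts this reading, so your reliance on it is at the same level of rigor as the paper's own proof.
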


We simplify the optimization problem ({\bf OC}) in three aspects: (i) we add the expression for $x_{aj}, j \in \overline{K}$ as a constraint in ({\bf OC}); (ii) we substitute $x_{bj}$ as outlined in \eqref{bestresponse1} and \eqref{bestresponse2} into the objective function \eqref{optimal-commitment}, thereby removing $x_{bj}$ from the leader's optimization problem; (iii) we note that the second term of \eqref{optimal-commitment} can be removed, as it becomes a constant when the set $\overline{K}$ is fixed.
Denote the objective $\hat{u}_a(\bm{x}_a) = u_a(\bm{x}_a, \bm{x}_b) - \sum_{j \in \overline{K}} v_{aj}$.

Therefore, given the set $K(\bm{x}_a) = K$, the optimization problem ({\bf OC}) can be reformulated as (\text{\bf{OC'}}), as shown below.
\begin{align}
    \max_{\bm{x}_a \in \bm{X}_a} \quad &\hat{u}_a(\bm{x}_a) = \left( \sum_{j \in K} (\frac{x_{aj}}{v_{bj}})^{\frac{1}{2}} v_{aj} \right) \left( \frac{\displaystyle\sum_{j \in K} (x_{aj} v_{bj})^\frac{1}{2}}{x_b + \sum\limits_{j \in K} x_{aj}} \right)  \label{rewrite-optimal-commitment} \\
    \mbox{s.t.} \quad & x_{aj} = \frac{v_{bj} \cdot \left( x_b + \sum_{l \in K} x_{al} \right)^2}{\left( \sum_{l \in K} (x_{al} \cdot v_{bl})^{\frac{1}{2}} \right)^2}, \quad \forall j \in \overline{K}, & \label{xajjinkbar} \\
    &\sum_{j = 1}^{n} x_{aj} = x_a, &  \label{atotalbudget} \\
    &x_{aj} > 0, \quad \forall j \in [n]. &  \nonumber
\end{align}

Solving ({\bf OC'}) is still challenging, as it is not a typical convex programming problem that can be solved in polynomial time.
Additionally, heuristics that approximate the optimal solution do not help us in ultimately comparing the Stackelberg equilibrium with the Nash equilibria. 
To address this challenge, we develop the following characterization of the leader's optimal commitment strategy $x_{aj}$, when $j \in K$.
\begin{lemma}\label{sec4-lemma-parameters}
    Let $\bm{x}_a$ be the leader's optimal commitment. There exist two parameters, $\alpha$ and $\beta$, such that $\frac{v_{aj}}{\sqrt{v_{bj}}} - \alpha \sqrt{v_{bj}} = \sqrt{x_{aj}} \beta$, for all $ j \in K(\bm{x}_a)$.
\end{lemma}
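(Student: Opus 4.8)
The plan is to read the claimed identity off the first-order (Lagrange) conditions of the reduced program ({\bf OC'}). First I would note that, once the support $K$ is fixed, the coordinates $(x_{aj})_{j\in\overline{K}}$ are completely pinned down by \eqref{xajjinkbar}, so the only genuine decision variables are $(x_{aj})_{j\in K}$. Substituting \eqref{xajjinkbar} into \eqref{atotalbudget} collapses the budget requirement to a single smooth equality $g\big((x_{aj})_{j\in K}\big)=0$: writing $S=\sum_{j\in K}x_{aj}$, $W=x_b+S$, $T=\sum_{j\in K}(x_{aj}v_{bj})^{1/2}$ and $V=\sum_{j\in\overline{K}}v_{bj}$, one has $\sum_{j\in\overline{K}}x_{aj}=W^{2}V/T^{2}$, hence $g=W^{2}V/T^{2}+S-x_a$. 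The constraints $x_{aj}>0$ for $j\in K$ are strict at the optimum by the standing assumption, and \eqref{bestresponse1}--\eqref{noteq-conatrain} only express that the follower's best response has support exactly $K$; I would argue these are inactive in the $(x_{aj})_{j\in K}$ coordinates at the optimum, so that a Lagrange multiplier $\lambda$ exists with $\nabla\hat u_a=\lambda\,\nabla g$.

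Next I would compute the two gradients. With $P=\sum_{j\in K}(x_{aj}/v_{bj})^{1/2}v_{aj}$ the objective is $\hat u_a=PT/W$, and it is convenient to differentiate after the substitution $y_j=(x_{aj}v_{bj})^{1/2}$, under which $P=\sum_{j\in K}v_{aj}y_j/v_{bj}$, $T=\sum_{j\in K}y_j$, $S=\sum_{j\in K}y_j^{2}/v_{bj}$, so that $\partial P/\partial y_j=v_{aj}/v_{bj}$, $\partial T/\partial y_j=1$, $\partial W/\partial y_j=2y_j/v_{bj}$. A direct computation then expresses $\partial\hat u_a/\partial y_j$ and $\partial g/\partial y_j$ as rational functions of $P,T,W,V,v_{aj},v_{bj},y_j$.

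The key step is to multiply the stationarity equation $\partial\hat u_a/\partial y_j=\lambda\,\partial g/\partial y_j$ through by $v_{bj}W/T$ and regroup. The point is that after this rescaling every surviving term is proportional either to $v_{bj}$ or to $y_j$, with coefficients built only from the global quantities $P,T,W,V,\lambda$ --- none of which depends on $j$. Reading off these coefficients, the equation becomes $v_{aj}=\alpha v_{bj}+\beta y_j$ with
\[
\alpha=-\frac{P}{T}-\lambda\,\frac{2W^{3}V}{T^{4}},\qquad
\beta=\frac{2P}{W}+\lambda\Big(\frac{4W^{2}V}{T^{3}}+\frac{2W}{T}\Big)
\]
(the case $K=[n]$ is simply $V=0$). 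Since $y_j=(x_{aj}v_{bj})^{1/2}$, dividing by $\sqrt{v_{bj}}$ gives $\tfrac{v_{aj}}{\sqrt{v_{bj}}}-\alpha\sqrt{v_{bj}}=\beta\sqrt{x_{aj}}$ for all $j\in K(\bm{x}_a)$, which is the assertion.

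The part I expect to demand the most care is the reduction in the first paragraph: certifying that at the optimum the only active constraint (in the reduced variables) is the single budget equation, and that $\nabla g\neq 0$ there, so that the multiplier genuinely exists. Positivity of $x_{aj}$ for $j\in K$ is granted by the standing assumption, but one still has to rule out that \eqref{bestresponse1} or \eqref{noteq-conatrain} bind --- intuitively, a binding \eqref{bestresponse1} would push a battlefield out of $K(\bm{x}_a)$, contradicting the fixed support --- and to check the budget map $(x_{aj})_{j\in K}\mapsto W^{2}V/T^{2}+S$ has nonvanishing gradient (immediate when $V=0$, and a short argument rules out simultaneous vanishing of all partials otherwise). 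Once this is settled, the gradient computation and the regrouping into the $v_{bj}$-part and the $\sqrt{x_{aj}v_{bj}}$-part are routine.
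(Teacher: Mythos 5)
Your route is sound and reaches the correct identity, but it is genuinely different from the paper's argument. The paper proves the lemma variationally: it perturbs only the coordinates $(x_{aj})_{j\in K}$ while holding the $\overline{K}$-allocations fixed, encodes the requirement that the follower's support not change through the auxiliary quantity $L(\bm{x}_a)=\sum_{h\in K}(x_{ah}v_{bh})^{1/2}$, and concludes from the two active conditions (budget within $K$, and the constraint on $L$) that $\bigl(\partial \hat u_a/\partial x_{aj}\bigr)_{j\in K}$ lies in the span of $\nabla L$ and the all-ones vector; since $\partial L/\partial x_{aj}=\tfrac12\sqrt{v_{bj}/x_{aj}}$, the two-parameter identity drops out. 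You instead eliminate the $\overline{K}$-variables via Lemma~\ref{sec4-lemma-leader-threshold} (constraint \eqref{xajjinkbar}), so the support condition is maintained by construction along your feasible manifold, and you are left with a single smooth budget equality $g=0$; a one-multiplier Lagrange condition $\nabla\hat u_a=\lambda\nabla g$, written in the coordinates $y_j=(x_{aj}v_{bj})^{1/2}$, then regroups into $v_{aj}=\alpha v_{bj}+\beta y_j$ — I checked your computation of $\alpha$ and $\beta$ and it is correct, including the $V=0$ specialization for $K=[n]$. What your approach buys is that it sidesteps the paper's delicate case analysis of how $K(\bm{x}_a)$ can change under perturbations (and the direction of the inequality on $L$, which is the least transparent step of the paper's proof); what it costs is that all the burden shifts to the constraint qualification you flagged.

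Two points still need to be nailed down to make this airtight. First, the transfer of optimality: you should say explicitly that, near the optimum, every point of $\{g=0\}$ with positive coordinates corresponds to a genuine commitment whose follower best response is supported in $K$ (the $\overline{K}$-battlefields sit exactly at the threshold, where Lemma~\ref{sec2-lemma-br} still assigns them zero, and \eqref{bestresponse1} stays strict by continuity), so the true optimum is a local maximizer of $\hat u_a$ on $\{g=0\}$ — this is the content of your "inactive constraints" remark and is fine once stated. Second, $\nabla g\neq 0$ is not automatic when $V>0$: the partials $\partial g/\partial y_j$ all vanish exactly at points with $y_j\propto v_{bj}$ satisfying a specific scalar relation, so such degenerate points do exist and a genuine (if short) argument is needed. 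Note, however, that if at the optimum one also had $\nabla\hat u_a=0$, your regrouping with $\lambda=0$ already gives the identity, so only the case $\nabla g=0$ with $\nabla\hat u_a\neq 0$ must be excluded, e.g.\ by checking that $\{g=0\}$ is not locally a single point there. This residual gap is comparable in rigor to the step the paper itself asserts without a full Farkas-type justification, so I regard your proposal as a valid alternative proof modulo that verification.
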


Together with \eqref{xajjinkbar} and \eqref{atotalbudget}, we can eliminate the parameter $\beta$ in Lemma \ref{sec4-lemma-parameters} by establishing an equation through the leader's optimal commitment strategy $\bm{x}_a$. 
By substituting $x_{aj}$'s into \eqref{rewrite-optimal-commitment}, we reformulate the leader's utility as a function of the parameter $\alpha$. 
This transforms the non-convex optimization problem (\textbf{OC'}) into a single-variable function maximization problem. 
Note that $K(\bm{x}_a)$ can be one of the sets $\{ \{1\}, \{1,2\}, \dots, \{1,2,\dots,n\} \}$.
By considering one possible realization of $K(\bm{x}_a)$ at a time and taking the derivative of the objective function with respect to $\alpha$, we can identify $x_{aj}$'s as below. 
We refer to each of these strategies $\bm{x}_a$ that corresponds to a set $K(\bm{x}_a)$ as a \textit{candidate optimal commitment strategy}.

\begin{theorem}\label{sec4-theorem-optimal-commitment}
In a Stackelberg game
$$\mathcal{G} := \langle \{a, b\}, [n], x_a, x_b, (v_{aj})_{j = 1}^{n}, (v_{bj})_{j = 1}^{n} \rangle,$$
for each set $K \in \{\{1\}, \{1, 2\}, \cdots, \{1,2,\cdots,n\}\}$, we can, in $O(n)$ steps, formulate an optimization problem involving a univariate continuous function defined over the union of two half-open intervals. The solution to this problem represents the leader's optimal commitment strategy.
\end{theorem}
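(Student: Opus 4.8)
The plan is to use Lemma~\ref{sec4-lemma-parameters} to parametrize the optimal commitment by a single scalar. Fix one of the $n$ candidate supports $K=\{1,\dots,k\}$. Lemma~\ref{sec4-lemma-parameters} supplies scalars $\alpha,\beta$ with $\sqrt{x_{aj}}\,\beta=\frac{v_{aj}}{\sqrt{v_{bj}}}-\alpha\sqrt{v_{bj}}$ for all $j\in K$, so $x_{aj}=\frac{1}{\beta^{2}}\big(\frac{v_{aj}}{\sqrt{v_{bj}}}-\alpha\sqrt{v_{bj}}\big)^{2}$. Introduce the auxiliary quantities $P(\alpha)=\sum_{j\in K}\big(\frac{v_{aj}}{\sqrt{v_{bj}}}-\alpha\sqrt{v_{bj}}\big)^{2}$, $Q(\alpha)=\sum_{j\in K}(v_{aj}-\alpha v_{bj})$, $W(\alpha)=\sum_{j\in K}\big(\frac{v_{aj}^{2}}{v_{bj}}-\alpha v_{aj}\big)$, each a polynomial in $\alpha$ of degree at most two, together with the constant $R=\sum_{j\in\overline{K}}v_{bj}$; these cost $O(n)$ to assemble. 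The parametrization immediately gives $\sum_{j\in K}x_{aj}=P(\alpha)/\beta^{2}$ and $\sum_{j\in K}(x_{aj}v_{bj})^{1/2}=Q(\alpha)/\beta$, and, by Lemma~\ref{sec4-lemma-leader-threshold} (constraint \eqref{xajjinkbar}), $x_{aj}=v_{bj}\,\beta^{2}\big(x_{b}+P(\alpha)/\beta^{2}\big)^{2}/Q(\alpha)^{2}$ for every $j\in\overline{K}$. Hence, once $\alpha$ is fixed, the entire allocation is a function of $\beta$ alone.

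The next step is to eliminate $\beta$ using the budget equation \eqref{atotalbudget}. Substituting the expressions above into $\sum_{j\in K}x_{aj}+\sum_{j\in\overline{K}}x_{aj}=x_{a}$ and clearing denominators produces a quadratic equation in the single unknown $z:=\beta^{2}$, whose coefficients are polynomials in $\alpha$ built from $P,Q,R,x_{a},x_{b}$ in $O(n)$ operations:
\[
R x_{b}^{2}\,z^{2}+\big(2Rx_{b}P(\alpha)-x_{a}Q(\alpha)^{2}\big)\,z+P(\alpha)\big(Q(\alpha)^{2}+R\,P(\alpha)\big)=0
\]
(when $\overline{K}=\varnothing$ this degenerates to $z=P(\alpha)/x_{a}$). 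I would show that, on the feasible range of $\alpha$, the branch of this equation consistent with constraints \eqref{bestresponse1}--\eqref{bestresponse2} is uniquely determined, yielding $\beta^{2}=z(\alpha)$ as an explicit algebraic function of $\alpha$. Substituting $\beta^{2}=z(\alpha)$ into the reformulated objective \eqref{rewrite-optimal-commitment}, the powers of $\beta$ recombine into $\beta^{2}$ and the objective collapses to the univariate, continuous function
\[
g(\alpha)=\frac{W(\alpha)\,Q(\alpha)}{z(\alpha)\,x_{b}+P(\alpha)},
\]
from whose maximizer $\alpha^{\star}$ one recovers $x_{aj}$ on $K$ via the parametrization and on $\overline{K}$ via \eqref{xajjinkbar}.

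It remains to describe the domain of $\alpha$. Because $x_{aj}>0$ for all $j\in K$ (constraint \eqref{xaassumption}) and $\sqrt{x_{aj}}\,\beta=\frac{v_{aj}}{\sqrt{v_{bj}}}-\alpha\sqrt{v_{bj}}$, the sign of $\beta$ is forced to be the same for all $j\in K$, giving two disjoint regimes: either $\beta>0$, which requires $\alpha<v_{aj}/v_{bj}$ for every $j\in K$, i.e.\ $\alpha<v_{a1}/v_{b1}$ under the ordering $\frac{v_{a1}}{v_{b1}}\le\cdots\le\frac{v_{an}}{v_{bn}}$; or $\beta<0$, which requires $\alpha>v_{ak}/v_{bk}=\max_{j\in K}v_{aj}/v_{bj}$. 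Intersecting each regime with the remaining feasibility conditions --- positivity of $x_{bj}$ on $K$ (\eqref{bestresponse1}), the no-deviation inequalities \eqref{noteq-conatrain} on $\overline{K}$, and positivity of the selected root $z(\alpha)$, each a monotone-type condition in $\alpha$ --- I would show the admissible set of $\alpha$ is a half-open interval in each regime, so the overall domain is a union of two half-open intervals. Maximizing the continuous function $g$ (plus the constant $\sum_{j\in\overline{K}}v_{aj}$ when one later compares across different $K$) over this union, for each of the $n$ prefixes $K$, then identifies the leader's optimal commitment; all the set-up per $K$ is $O(n)$, which proves the theorem.

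The step I expect to be the main obstacle is the last one: showing that, after the parametrization is substituted, the constraint system really does cut the range of $\alpha$ down to exactly a union of two half-open intervals rather than something more irregular, and --- relatedly --- that the correct branch of the $\beta^{2}$-quadratic is singled out uniformly over each interval (in particular that the induced follower best response has support exactly $K$, so the candidate is genuinely feasible). A smaller but necessary point is to argue that no optimum is lost: Lemma~\ref{sec4-lemma-parameters} gives only a necessary structural condition, so one must check that the true optimal commitment --- which is feasible, satisfies Lemma~\ref{sec4-lemma-parameters}, and induces some prefix $K$ by Theorem~\ref{theorem-characterization} --- is recovered when that particular $K$ is processed.
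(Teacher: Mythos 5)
Your proposal follows essentially the same route as the paper: parametrize the commitment on $K$ via Lemma~\ref{sec4-lemma-parameters}, use Lemma~\ref{sec4-lemma-leader-threshold} on $\overline{K}$, eliminate $\beta$ through the budget constraint—which yields exactly the paper's quadratic in $\beta^2$ (its $A_1,A_2$ are your $P(\alpha)$ and $R/Q(\alpha)^2$, and $K=[n]$ degenerates to $\beta^2=P(\alpha)/x_a$ just as you note)—and then maximize the resulting univariate continuous function of $\alpha$, all in $O(n)$ per prefix $K$. The two obstacles you flag are handled in the paper by (i) selecting the \emph{smaller} root of the $\beta^2$-quadratic because the substituted objective is decreasing in $\beta^2$ for fixed $\alpha$ (rather than by a feasibility argument), and (ii) treating separately, as its Case 1, the degenerate situation where all ratios $v_{aj}/v_{bj}$ coincide on $K$ or $K=\{1\}$ (where $\beta$ may vanish and Lemma~\ref{sec4-lemma-parameters} is vacuous), reducing it via Theorem~\ref{theorem-characterization} to a single quadratic in $x_{aK}$—a case your uniform $\beta\neq 0$ parametrization should also address explicitly.
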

Due to the complexity of the expression of the leader's optimal commitment, we provide its specific formula in the proof of Theorem \ref{sec4-theorem-optimal-commitment}.
Since the optimization objective for each $K$ is a univariate continuous function, this allows us to perform simulation experiments.

{\bf The Leader's Optimal Commitment Strategy.} Following Theorem \ref{sec4-theorem-optimal-commitment}, in a Stackelberg game
$\mathcal{G} := \langle \{a, b\}, [n], x_a,$ $ x_b, (v_{aj})_{j = 1}^{n}, (v_{bj})_{j = 1}^{n} \rangle$, we compute the candidate optimal commitment strategy $\bm{x}_a$ for each set $K(\bm{x}_a) \in \{ \{1\}, \{1,2\},$ $\dots, \{1,2,\dots,n\} \}$.
For each candidate strategy $\bm{x}_a$, we then determine the leader's utility.
The strategy $\bm{x}_a$ that yields the highest utility is identified as the leader's optimal commitment strategy.

Recall that the best response function provided by \citep{Kovenock-2019} requires $x_{aj} > 0$, for all $j \in$ $[n]$. 
To conclude this section, we demonstrate that if there exists a battlefield $j$ such that $x_{aj}=0$ in the leader's optimal commitment strategy, as identified through our process, then increasing $x_{aj}$ to an arbitrarily small budget and carefully adjusting resource allocation in other battlefields will enhance the leader's utility.
Therefore, the leader's optimal commitment strategy must satisfy this prerequisite.

\begin{lemma}\label{sec4-lemma-allocate-positive-resources-allbattlefields}
    In the leader's optimal commitment $\bm{x}_a$, for $\forall j \in [n]$, $x_{aj} > 0$.
\end{lemma}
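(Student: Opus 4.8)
The plan is to argue by contradiction: suppose $\bm{x}_a$ maximizes the leader's utility over $\bm{X}_a$ (with the follower best-responding) yet $x_{aj^*}=0$ for some $j^*\in[n]$, and exhibit a strictly better strategy. First I would record two elementary facts. \emph{Scale-invariance}: each term $\frac{x_{ij}v_{ij}}{x_{ij}+x_{-ij}}$ is unchanged if both players' allocations on battlefield $j$ are multiplied by a common positive constant, so if both players scale their allocations on a subset $S$ of battlefields by a common factor $t$, every utility on $S$ is preserved while the follower's budget spent on $S$ is scaled by $t$. \emph{Monotonicity of water-filling}: for a fixed leader allocation $\bm{y}$ with $y_j>0$, the follower's best response from Lemma~\ref{sec2-lemma-br} fills the battlefields up to a common marginal-utility level $\lambda(B)$ that is non-increasing in the follower's budget $B$; since each $z_j=\sqrt{y_jv_{bj}/\lambda}-y_j$ (or $0$) is non-increasing in $\lambda$, and the leader's utility $\sum_j\frac{y_jv_{aj}}{y_j+z_j}$ is non-increasing in each $z_j$, a \emph{smaller} follower budget yields the leader weakly \emph{higher} utility on those battlefields.

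Next I would set up the perturbation. Let $Z=\{j:x_{aj}=0\}\ni j^*$. Since $x_{aj}=0$ on $Z$, the follower obtains $v_{bj}$ on each such battlefield regardless of its own allocation, so its best response puts $0$ on all of $Z$ (strictly better, as this frees budget for battlefields with positive marginal utility), and the leader earns $0$ on $Z$. Write $s=\frac{x_a-\delta}{x_a}$ and, for small $\delta>0$, define $\bm{x}_a^\delta$ by $x^\delta_{aj^*}=\delta$ and $x^\delta_{aj}=s\,x_{aj}$ for $j\neq j^*$; then $\bm{x}_a^\delta\in\bm{X}_a$ and its entries on $[n]\setminus Z$ are positive. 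I then analyze the follower's best response to $\bm{x}_a^\delta$. Because $v_{bj^*}/\delta\to\infty$ while the water level needed to exhaust $x_b$ on $[n]\setminus\{j^*\}$ stays bounded (it is at most the initial marginal utility $v_{bj}/(s\,x_{aj})$ of any contested battlefield $j\in[n]\setminus Z$, which converges to $v_{bj}/x_{aj}$), for $\delta$ small the follower must contest $j^*$; it spends some $c(\delta)$ there, bringing $j^*$ to the common level $\lambda'(\delta)$, so $(\delta+c(\delta))^2=\delta v_{bj^*}/\lambda'(\delta)$. The quantitative heart is to show $\lambda'(\delta)$ is bounded away from $0$ and $\infty$ as $\delta\to0$: bounded below because $\lambda'(\delta)$ is at least the water level obtained by ignoring $j^*$, which tends to the original level $\lambda^*>0$; bounded above by the previous sentence. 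This yields $c(\delta)=\Theta(\sqrt{\delta})$, in particular $c(\delta)\to0$ and $c(\delta)/\delta\to\infty$.

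Finally I would total the leader's utility under $\bm{x}_a^\delta$ battlefield by battlefield. On $Z\setminus\{j^*\}$ the leader still earns $0$. On $j^*$ the leader earns $\frac{\delta v_{aj^*}}{\delta+c(\delta)}=v_{aj^*}\sqrt{\delta\lambda'(\delta)/v_{bj^*}}>0$. On $[n]\setminus Z$ the leader plays $s\cdot\bm{x}_a|_{[n]\setminus Z}$ and the follower spends exactly $x_b-c(\delta)$ there (it puts $0$ on $Z\setminus\{j^*\}$ and $c(\delta)$ on $j^*$, and the residual must be deployed optimally on $[n]\setminus Z$); since $c(\delta)/\delta\to\infty$ we get $c(\delta)>\tfrac{\delta}{x_a}x_b=x_b-s\,x_b$ for small $\delta$, hence $x_b-c(\delta)<s\,x_b$. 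By scale-invariance, a follower spending $s\,x_b$ against $s\cdot\bm{x}_a|_{[n]\setminus Z}$ would reproduce the original utility $W$ on $[n]\setminus Z$; by monotonicity, spending the strictly smaller budget $x_b-c(\delta)$ gives the leader at least $W$. So the leader's total utility under $\bm{x}_a^\delta$ is at least $W+v_{aj^*}\sqrt{\delta\lambda'(\delta)/v_{bj^*}}>W$, the original total, a contradiction; this also shows that imposing $x_{aj}>0$ in \eqref{xaassumption} was without loss of generality. I expect the main obstacle to be exactly the quantitative control of $c(\delta)$, i.e., confining $\lambda'(\delta)$ to a compact subset of $(0,\infty)$ so that the $\Theta(\sqrt{\delta})$ gain on $j^*$ dominates the $\Theta(\delta)$ budget the leader diverts away from $[n]\setminus Z$; a secondary technicality is handling the possible additional zero coordinates in $Z\setminus\{j^*\}$ and the attendant non-uniqueness of the follower's best response, which the weakly-dominant ``allocate $0$ there'' observation dispatches.
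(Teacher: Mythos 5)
Your proposal is correct, and it shares the paper's overall skeleton --- a contradiction via a small perturbation that funds a previously empty battlefield with $\Theta(\delta)$ budget and harvests a $\Theta(\sqrt{\delta})$ gain there --- but the execution differs in how the two sides of the ledger are controlled. The paper scales down only the follower-contested battlefields $K(\bm{x}_a)$ by $(1-\varepsilon)$, verifies via the inequalities in Lemma~\ref{sec2-lemma-br} that the new support is exactly $K(\bm{x}_a)\cup\{n\}$, and then substitutes the closed-form best response into the leader's utility to compute $u_a(\hat{\bm{x}}_a(\varepsilon))-u_a(\bm{x}_a)$ explicitly, reading off positivity from the dominant $\sqrt{(1-\varepsilon)/\varepsilon}$ term. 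You instead scale \emph{all} positive battlefields proportionally and avoid any explicit utility computation on them: the scale-invariance of the proportional contest plus the pointwise monotonicity of the water-filling solution in the follower's budget show that, because the follower diverts $c(\delta)=\Theta(\sqrt{\delta})>(\delta/x_a)\,x_b$ to the new battlefield, the leader loses \emph{nothing} on the old battlefields, so the $\Theta(\sqrt{\delta})$ gain on $j^*$ stands alone. What your route buys is robustness and brevity --- no support-tracking inequalities and no expansion of the closed-form utility; what the paper's route buys is that everything reduces to one algebraic identity once the support is pinned down, with no need for the compactness argument on the water level. The one place where your sketch still owes full detail is exactly the step you flag: confining $\lambda'(\delta)$ to a compact subset of $(0,\infty)$ uniformly as $\delta\to 0$ (lower bound because adding $j^*$ to the contested set at fixed total budget raises the common level above the level for $[n]\setminus Z$ alone, which converges to $\lambda^*>0$; upper bound because at a sufficiently high level the total water-filling expenditure falls below $x_b$ uniformly in $\delta$), and the routine optimal-substructure observation that the overall best response restricted to $[n]\setminus Z$ is the optimal allocation of the residual budget there; both are correctly outlined and straightforward to complete, so I see no gap. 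Your closing remark that this also justifies constraint \eqref{xaassumption} matches the role the lemma plays in the paper.
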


\section{Stackelberg vs. Nash} \label{Stackelberg vs. Nash}
In this section, we investigate the necessary and sufficient conditions under which the Stackelberg equilibrium coincides with the Nash equilibria.

Let's define $\frac{v_{aj}}{v_{bj}}$ as the \textit{relative value ratio} of the two players over battlefield $j$. 
Our first observation is that the relative value ratio can only have two distinct values if a Stackelberg equilibrium is also a Nash equilibrium.

\begin{lemma}\label{sec5-lemma-nece-suff}
   Let $\bm{x}_a$ be the leader's optimal commitment strategy and $\bm{x}_b$ be the follower's best response to $\bm{x}_a$. 
   If $\bm{x}_a$ is also the best response strategy to $\bm{x}_b$, then the cardinality of the set $\{\frac{v_{aj}}{v_{bj}}, \forall j \in [n] \}$ is at most two.
\end{lemma}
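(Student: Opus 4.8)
The plan is to exploit the two best-response conditions that a Nash equilibrium imposes here: $\bm{x}_b$ is a best response to $\bm{x}_a$ (which always holds, since $\bm{x}_b$ is defined that way) and $\bm{x}_a$ is a best response to $\bm{x}_b$ (the extra hypothesis of the lemma). The first thing I would establish is that, under these conditions, the follower participates on every battlefield, i.e.\ $K(\bm{x}_a) = [n]$. By Lemma~\ref{sec4-lemma-allocate-positive-resources-allbattlefields} the optimal commitment satisfies $x_{aj} > 0$ for every $j$. If some $j_0$ belonged to $\overline{K(\bm{x}_a)}$ then $x_{bj_0} = 0$, so the leader's payoff on $j_0$ equals $v_{aj_0}$ for every positive value of $x_{aj_0}$, while on any $j \in K(\bm{x}_a)$ the term $x_{aj} v_{aj}/(x_{aj}+x_{bj})$ is strictly increasing in $x_{aj}$ because $x_{bj} > 0$. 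Shifting an arbitrarily small amount of budget from $j_0$ to such a $j$ (keeping $x_{aj_0} > 0$) would strictly increase $u_a(\bm{x}_a,\bm{x}_b)$, contradicting that $\bm{x}_a$ best-responds to $\bm{x}_b$; hence $\overline{K(\bm{x}_a)} = \emptyset$.

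With $K(\bm{x}_a) = [n]$, I would then read off the structure of both players' strategies. From Lemma~\ref{sec2-lemma-br}, $x_{aj} + x_{bj} = T\sqrt{x_{aj} v_{bj}}$ for every $j$, where $T := \frac{x_a + x_b}{\sum_{l} \sqrt{x_{al} v_{bl}}}$ is a positive constant. On the leader's side, since $\bm{x}_a$ is a best response to $\bm{x}_b$, all $x_{bj} > 0$, and each summand $x_{aj} v_{aj}/(x_{aj}+x_{bj})$ of $u_a$ is then strictly concave in $x_{aj}$ (its second derivative is $-2x_{bj}v_{aj}/(x_{aj}+x_{bj})^3 < 0$, mirroring \eqref{equation-marginal-utility-follower}), so the interior KKT conditions give a constant $\lambda > 0$ with $x_{bj} v_{aj}/(x_{aj}+x_{bj})^2 = \lambda$ for all $j$. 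Substituting $(x_{aj}+x_{bj})^2 = T^2 x_{aj} v_{bj}$ and $x_{bj} = T\sqrt{x_{aj}v_{bj}} - x_{aj}$ and solving for $\sqrt{x_{aj}}$ yields $\sqrt{x_{aj}} = \frac{T v_{aj}\sqrt{v_{bj}}}{v_{aj} + \lambda T^2 v_{bj}}$ for every $j$.

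The final step is to combine this with Lemma~\ref{sec4-lemma-parameters}, which (with $K(\bm{x}_a) = [n]$) supplies constants $\alpha, \beta$ satisfying $\beta\sqrt{x_{aj}} = (v_{aj} - \alpha v_{bj})/\sqrt{v_{bj}}$ for all $j$. If $\beta = 0$ then $v_{aj}/v_{bj} = \alpha$ for all $j$ and the cardinality is $1$, so I would assume $\beta \neq 0$. Equating the two expressions for $\sqrt{x_{aj}}$, clearing denominators, and dividing through by $v_{bj}^2$, I expect to obtain the single identity $r_j^2 + (\lambda T^2 - \alpha - \beta T)\,r_j - \alpha\lambda T^2 = 0$ in $r_j := v_{aj}/v_{bj}$, whose coefficients do not depend on $j$. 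Every relative value ratio is therefore a root of one fixed quadratic, so $\bigl|\{\,v_{aj}/v_{bj} : j \in [n]\,\}\bigr| \le 2$.

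I expect the main obstacle to be the first step — excluding $\overline{K(\bm{x}_a)} \neq \emptyset$ — since it relies on the tie-breaking convention and on viewing the leader's best response as a water-filling that never wastes budget on battlefields the follower has already conceded. The only other point needing care is the interior KKT characterization used in the second step, but that is routine once $x_{bj} > 0$ for all $j$ makes $u_a(\cdot\,,\bm{x}_b)$ strictly concave.
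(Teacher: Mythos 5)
Your proposal is correct and follows essentially the same route as the paper: establish that the follower has full support ($K(\bm{x}_a)=[n]$), apply the best-response conditions on both sides (your interior KKT condition for the leader is exactly the content of Lemma~\ref{sec2-lemma-br} applied symmetrically), combine with Lemma~\ref{sec4-lemma-parameters}, and conclude that every ratio $v_{aj}/v_{bj}$ is a root of one fixed quadratic. The only minor differences are that you prove full support by a direct exchange argument rather than citing prior Nash-equilibrium results, and you treat the $\beta=0$ case explicitly, both of which are fine.
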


To prove this lemma, we apply Lemma \ref{sec2-lemma-br} twice, since $\bm{x}_a$ and $\bm{x}_b$ are best response strategies to each other.
Together with Lemma \ref{sec4-lemma-parameters}, we can establish a quadratic equation whose single variable is $\frac{v_{aj}}{v_{bj}}, \forall j \in [n]$. 
Since a quadratic equation can have at most two distinct roots, the lemma is proved.

Lemma \ref{sec5-lemma-nece-suff} allows us to merge battlefields with identical relative value ratios. This step is essential for proving the main result of this section, as detailed below. 
\begin{figure}[ht]
    \centering
    \includegraphics[width=0.3\columnwidth]{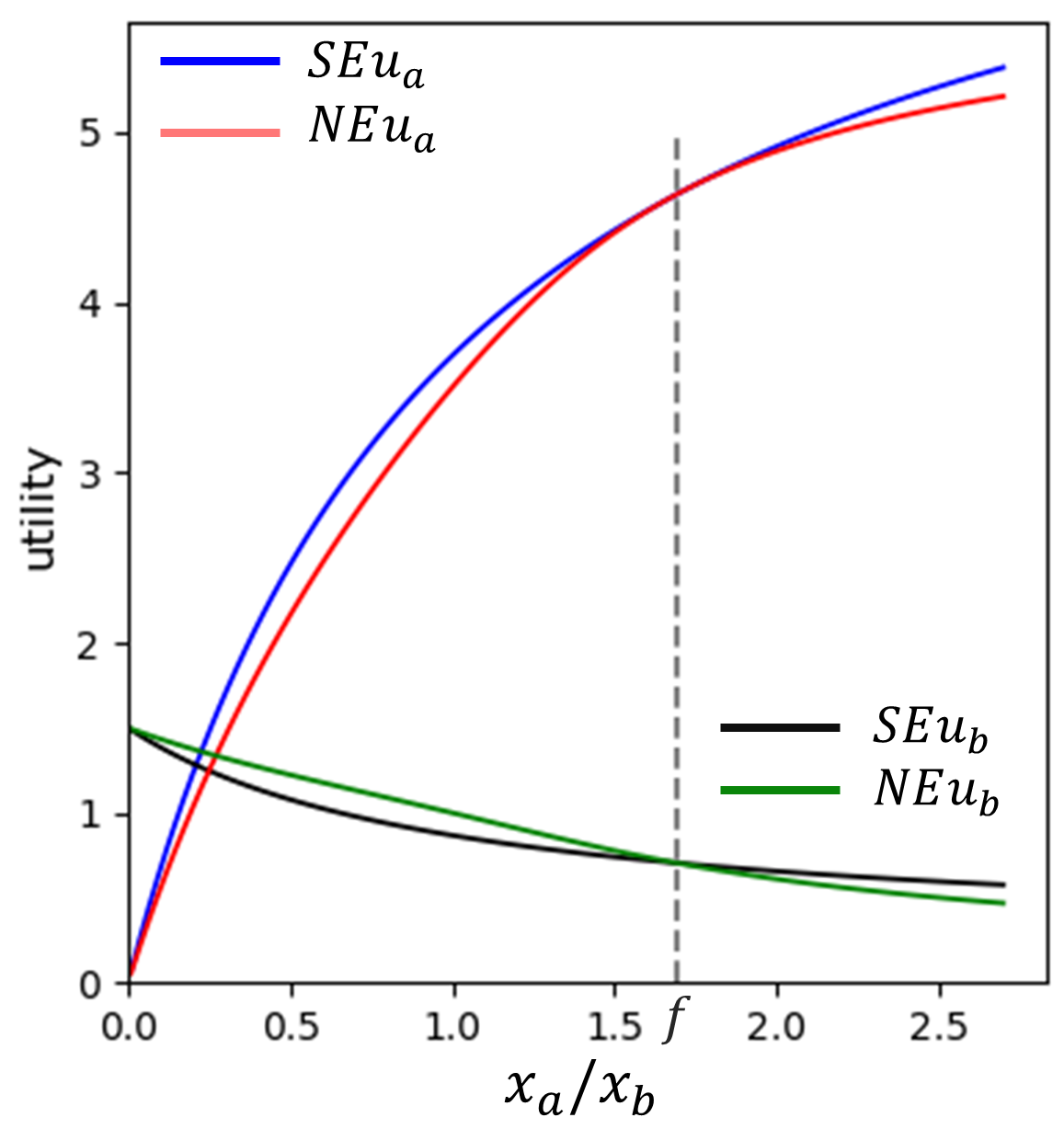}
    \caption{Utility Curves in Two Equilibria. }
    \label{utility:example}
\end{figure}

\begin{theorem}\label{finally-theorem}
    There exists a function $f: \mathbb{R}^4 \rightarrow \mathbb{R}$ such that
    a Stackelberg equilibrium is also a Nash equilibrium if and only if one of the following conditions hold:
    \begin{itemize}
        \item The relative value ratio is consistent across all battlefields. 
        That is, there exists a constant $c$ such that $\frac{v_{aj}}{v_{bj}} = c$, for $\forall j \in [n]$.
        \item There exists a set $M \subsetneq [n]$ and its complement $\overline{M}$ such that (1) $\frac{v_{aj}}{v_{bj}} = \frac{v_{ak}}{v_{bk}}$ for $\forall j, k \in M$, and $\frac{v_{aj}}{v_{bj}} = \frac{v_{ak}}{v_{bk}}$ for $\forall j, k \in \overline{M}$, and (2)
        $\frac{x_a}{x_b} = f\left(\sum_{j \in M}v_{aj}, \sum_{j \in M}v_{bj}, \sum_{j \in \overline{M}}v_{aj}, \sum_{j \in \overline{M}}v_{bj}\right)$.
    \end{itemize}
\end{theorem}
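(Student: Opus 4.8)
The plan is to reduce the $n$-battlefield problem to a game with at most two battlefields, where both the Nash and the Stackelberg equilibria admit closed forms, and then to read off the coincidence condition. By Lemma~\ref{sec5-lemma-nece-suff}, if a Stackelberg equilibrium is also a Nash equilibrium then the set $\{v_{aj}/v_{bj} : j \in [n]\}$ has cardinality one or two. If it has cardinality one we are in the first bullet; if it has cardinality two, letting $M$ be the battlefields attaining the smaller ratio and $\overline{M}$ the rest, part~(1) of the second bullet holds automatically and $M \subsetneq [n]$, with both $M$ and $\overline{M}$ nonempty. In either case I would invoke the battlefield-merging construction (the inverse of the splitting operation of Section~\ref{BRsupport}, as used in the discussion following Lemma~\ref{sec5-lemma-nece-suff}): merging all battlefields that share a common relative value ratio preserves every player's utility and carries strategy profiles back and forth injectively, and — this needs a short separate argument — it preserves both the ``leader's optimal commitment'' property and the ``mutual best response'' property, so that Stackelberg-equals-Nash in the original game is equivalent to Stackelberg-equals-Nash in the merged game, which lives on one or two battlefields.

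For the ``if'' direction under the first bullet, after merging everything into a single battlefield (leader value $\sum_j v_{aj}$, follower value $\sum_j v_{bj}$), the leader has a unique strategy (all of $x_a$), the follower's unique best response is all of $x_b$, and this is trivially the unique Nash equilibrium as well; un-merging returns the profile $x_{aj} = \frac{v_{aj}}{\sum_l v_{al}} x_a$, $x_{bj} = \frac{v_{bj}}{\sum_l v_{bl}} x_b$, which a direct check via Lemma~\ref{sec2-lemma-br} shows is a mutual best response (all battlefields are tied in the ordering $v_{aj}/x_{bj}$, the support is full, and the best-response formula collapses to the proportional allocation) and which is the Stackelberg outcome because commitment confers no advantage on a single battlefield. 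For the ``if'' direction under the second bullet, after merging we obtain a two-battlefield game determined by the four numbers $V_a^M=\sum_{j\in M}v_{aj}$, $V_b^M=\sum_{j\in M}v_{bj}$, $V_a^{\overline M}=\sum_{j\in\overline M}v_{aj}$, $V_b^{\overline M}=\sum_{j\in\overline M}v_{bj}$; the two-battlefield analysis described below exhibits a unique feasible value of $x_a/x_b$ at which the two-battlefield Stackelberg and Nash equilibria coincide, and defining $f$ to be that value (as a function of the four numbers) makes the claim hold after un-merging.

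The ``only if'' direction is the mirror image: Lemma~\ref{sec5-lemma-nece-suff} gives at most two ratios; in the one-ratio case the first bullet already holds, and in the two-ratio case part~(1) holds, and passing to the merged two-battlefield game (where Stackelberg-equals-Nash still holds by the transport property) the two-battlefield analysis forces $x_a/x_b$ to equal the distinguished value $f(V_a^M,V_b^M,V_a^{\overline M},V_b^{\overline M})$, giving part~(2). Thus everything hinges on one self-contained computation: in a two-battlefield Lottery Blotto game, determine when the leader's optimal commitment $\bm{x}_a$ — which, by Theorems~\ref{theorem-characterization} and~\ref{sec4-theorem-optimal-commitment} together with Lemmas~\ref{sec4-lemma-leader-threshold} and~\ref{sec4-lemma-parameters} and the positivity guarantee of Lemma~\ref{sec4-lemma-allocate-positive-resources-allbattlefields}, has a closed form for each candidate support $K\in\{\{1\},\{1,2\}\}$ — is simultaneously a best response to the follower's best response $\bm{x}_b$. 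Imposing the latter means applying Lemma~\ref{sec2-lemma-br} a second time with the roles of leader and follower swapped; combining the two best-response systems with $x_{a1}+x_{a2}=x_a$ and eliminating $\bm{x}_b$ and the auxiliary parameters $\alpha,\beta$ of Lemma~\ref{sec4-lemma-parameters} yields a single polynomial equation in $\sqrt{x_a/x_b}$ with coefficients built from the four value sums, whose unique feasible root is $f$.

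I expect the main obstacle to be precisely this two-battlefield computation. The best-response formulas carry square roots, so combining ``$\bm{x}_b$ best responds to $\bm{x}_a$'' with ``$\bm{x}_a$ best responds to $\bm{x}_b$'' produces a fairly heavy algebraic system, and one must (i) argue that the relevant support is $K=\{1,2\}$, i.e.\ that no battlefield is abandoned in a Nash equilibrium of this game — with the stated tie-breaking an abandoned battlefield would admit no leader best response, so full support is forced; (ii) verify that the resulting polynomial has exactly one root compatible with positivity and with the full-support condition, so that $f$ is a genuinely well-defined single-valued function; and (iii) confirm that the merging / un-merging correspondence transports Nash equilibria, not merely optimal commitments, faithfully in both directions, which amounts to re-running the best-response calculation of Lemma~\ref{sec2-lemma-br} restricted to the proportional sub-profiles inside each block $M$ and $\overline{M}$ and checking it agrees with the merged single-block best response.
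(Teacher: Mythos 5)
Your overall route is the same as the paper's: use Lemma~\ref{sec5-lemma-nece-suff} to restrict the relative value ratios to at most two values, merge battlefields sharing a ratio into a (at most) two-battlefield game, take the closed-form optimal commitment for the full-support case from Theorem~\ref{sec4-theorem-optimal-commitment}, and impose mutual best response (Lemma~\ref{sec2-lemma-br} applied in both directions) to pin down the budget ratio that defines $f$. One reassurance on your anticipated obstacle (ii): the elimination is lighter than you expect, because the Nash proportionality condition combined with the follower's best-response formula yields a relation that is linear in $x_a/x_b$ (the paper's $f = t_4/t_3$), so no root-selection argument for a higher-degree polynomial in $\sqrt{x_a/x_b}$ is required.

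The genuine gap is in the ``if'' direction of the second bullet. Imposing ``$\bm{x}_a$ is a best response to $\bm{x}_b$'' on the full-support candidate commitment only shows that this particular profile is a Nash equilibrium when $x_a/x_b = f$; it does not show that this profile is the Stackelberg equilibrium. In the merged two-battlefield game there are two candidate commitments, one whose induced follower support is $\{1,2\}$ and one whose induced support is $\{1\}$, and the leader's optimal commitment is whichever yields the higher utility. If at the ratio $f$ the optimum were the support-$\{1\}$ candidate, the Stackelberg equilibrium would leave the follower abandoning a battlefield and hence could not coincide with any Nash equilibrium, so condition (2) would be necessary but not sufficient. Your obstacle (i) only establishes the Nash-side full-support fact, which serves the ``only if'' direction; it does not substitute for proving that at $x_a/x_b = f$ the full-support candidate actually dominates the other candidate. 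The paper closes exactly this step at the end of its proof: it considers the alternative commitment $\bm{x}'_a$ that drives the follower out of battlefield $2$, computes $\partial u_a/\partial x_{a1}$, and shows that at $\bm{x}'_a$ the marginal utility on battlefield $1$ strictly exceeds that on battlefield $2$, whereas the candidate $\hat{\bm{x}}_a$ equalizes the two marginals; hence $\bm{x}'_a$ is not optimal and $\hat{\bm{x}}_a$ is. You need this comparison (or an equivalent utility comparison of the two candidates) for sufficiency; with it added, and with your transport argument (iii) spelled out, your plan matches the paper's proof.
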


The function $f$ in Theorem \ref{finally-theorem} is complicated, so we provide its specific expression in the proof.
To reinforce the understanding of Theorem \ref{finally-theorem}, particularly the second case, we construct a class of instances and demonstrate how the utilities of both players vary across the two equilibria as their relative budgets change, as shown in Figure \ref{utility:example}.

Let the budgets of the two players be $r$ and $1$, respectively, with their relative budget ratio defined as $r = \frac{x_a}{x_b}$.
Let $v_{a1}=1$, $v_{a2}=5$, $v_{b1}=1$, and $v_{b2}=0.5$. 
Clearly, this game satisfies the second case of Theorem \ref{finally-theorem}. In Figure \ref{utility:example}, $SEu_a$ and $SEu_b$ represent the utilities of players $a$ and $b$ in the Stackelberg equilibrium, while $NEu_a$ and $NEu_b$ denote their utilities in the Nash equilibrium.
As $r$ changes, the utilities of the two players in both equilibria also change. Notably, there exists a value of $r$ such that $r = \frac{x_a}{x_b} = f(1,1,5,0.5)$, at which both players achieve the same utility in both equilibria.

Furthermore, several interesting observations emerge from these instances.
When $\frac{x_a}{x_b} > f$, both players experience higher utilities in the Stackelberg equilibrium than in the Nash equilibrium. In other words, an increase in the leader's utility does not necessarily reduce the follower's utility. Specifically, the leader's optimal commitment in the Stackelberg equilibrium allocates more resources to the higher-value battlefield (Battlefield 2) and fewer resources to the lower-value battlefield (Battlefield 1) compared to the Nash equilibrium.
Since the follower values the battlefields in the opposite way, they prefer Battlefield 1. This allows the follower to derive more utility from Battlefield 1, leading to an increase in their utility under the Stackelberg equilibrium.
Conversely, when the leader's budget is relatively small (i.e., when the ratio is less than $f$), the situation reverses. This is illustrated by the following calculation:
\begin{itemize}
    \item If $\frac{x_a}{x_b} = 0.5$, which is less than $f$, the unique Nash equilibrium is $\bm{x}_a^* = (0.025, 0.475)$, $\bm{x}_b^* = (0.340, 0.660)$, where $NEu_a = 2.161$, $NEu_b = 1.223$.
    The Stackelberg equilibrium is $\bm{x}_a = (0.136, 0.364)$, $\bm{x}_b = (0.559, 0.441)$, where $SEu_a = 2.458$, $SEu_b = 1.079$.
    \item If $\frac{x_a}{x_b} = 2$, which is greater than $f$, the unique Nash equilibrium is $\bm{x}_a^* = (0.667, 1.333)$, $\bm{x}_b^* = (0.833, 0.167)$, where $NEu_a = 4.889$, $NEu_b = 0.611$.
    The Stackelberg equilibrium is $\bm{x}_a = (0.543, 1.457)$, $\bm{x}_b = (0.847, 0.153)$, where $SEu_a = 4.915$, $SEu_b = 0.657$.
\end{itemize}

\section{Leader's Advantage} \label{leader's Advantage}
In this section, we analyze the advantages of the leader as a first mover, with these advantages being contingent upon the leader's budget relative to that of the follower.
Specifically, we consider three cases: (1) the number of battlefields $n > 2$, we find that when the leader's budget is relatively large, the advantages of the leader is marginal; (2) $n=2$, we provide an upper bound and a lower bound on the ratio of the leader's utility under the Stackelberg equilibrium to that under the Nash equilibrium; and (3) $n=2$ and the leader's budget is relatively small, we find that the advantages of the leader could be infinite.

{\bf{There are $n > 2$ battlefields.}} The following theorem gives a lower bound for the leader's utility in Nash equilibria.

\begin{theorem}\label{theorem-leader-utility-NE}
    For any game 
    $$\mathcal{G} := \langle \{a, b\}, [n], x_a, x_b, (v_{aj})_{j = 1}^{n}, (v_{bj})_{j = 1}^{n} \rangle,$$
    the leader's utility in the Nash equilibrium is greater than or equal to $\frac{x_a}{x_a + x_b} \cdot (\sum_{j = 1}^n v_{aj})$.
\end{theorem}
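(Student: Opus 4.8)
The plan is to exhibit one explicit deviation for the leader and invoke the Nash equilibrium inequality $u_a(\bm{x}_a^*,\bm{x}_b^*)\ge u_a(\bm{x}_a,\bm{x}_b^*)$ for all feasible $\bm{x}_a$. Let $(\bm{x}_a^*,\bm{x}_b^*)$ be any pure Nash equilibrium (one exists by \citep{Kim-2018}), put $S=\{j\in[n]:x_{bj}^*>0\}$ and $\overline{S}=[n]\setminus S$, and note $S\neq\emptyset$ since $\sum_j x_{bj}^*=x_b>0$. The key observation is that if the leader allocates its budget on the battlefields in $S$ \emph{proportionally to the follower's allocation}, i.e.\ $x_{aj}=c\,x_{bj}^*$ for a common constant $c>0$, then on each such battlefield the leader captures exactly the fraction $\frac{c x_{bj}^*}{c x_{bj}^*+x_{bj}^*}=\frac{c}{c+1}$ of $v_{aj}$, which is independent of $j$.

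Concretely, for a small parameter $\delta>0$ (take $\delta=0$ if $\overline{S}=\emptyset$) I would consider the strategy $\bm{x}_a^\delta$ with $x_{aj}^\delta=\delta$ for $j\in\overline{S}$ and $x_{aj}^\delta=\frac{x_a-\delta|\overline{S}|}{x_b}\,x_{bj}^*$ for $j\in S$; this is feasible, since $x_{aj}^\delta>0$ for all $j$ once $\delta<x_a/|\overline{S}|$, and $\sum_j x_{aj}^\delta=\delta|\overline{S}|+\frac{x_a-\delta|\overline{S}|}{x_b}\sum_{j\in S}x_{bj}^*=x_a$. Against $\bm{x}_b^*$, every $j\in\overline{S}$ is won outright (the tie-breaking clause does not apply, as $x_{aj}^\delta>0$), contributing $v_{aj}$, while every $j\in S$ contributes $\frac{x_a-\delta|\overline{S}|}{x_a-\delta|\overline{S}|+x_b}\,v_{aj}$. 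Hence
\[
u_a(\bm{x}_a^\delta,\bm{x}_b^*)=\sum_{j\in\overline{S}}v_{aj}+\frac{x_a-\delta|\overline{S}|}{x_a-\delta|\overline{S}|+x_b}\sum_{j\in S}v_{aj}.
\]
This is decreasing in $\delta$, so the Nash condition $u_a(\bm{x}_a^*,\bm{x}_b^*)\ge u_a(\bm{x}_a^\delta,\bm{x}_b^*)$ for all admissible $\delta$ gives, in the limit $\delta\to 0^+$, $u_a(\bm{x}_a^*,\bm{x}_b^*)\ge \sum_{j\in\overline{S}}v_{aj}+\frac{x_a}{x_a+x_b}\sum_{j\in S}v_{aj}$. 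Finally, bounding $v_{aj}\ge\frac{x_a}{x_a+x_b}v_{aj}$ on each $j\in\overline{S}$ and recombining the two sums yields $u_a(\bm{x}_a^*,\bm{x}_b^*)\ge\frac{x_a}{x_a+x_b}\sum_{j=1}^{n}v_{aj}$, as claimed.

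The only delicate point is the degenerate battlefields in $\overline{S}$: the naive proportional deviation $x_{aj}=\frac{x_a}{x_b}x_{bj}^*$ puts zero budget on them and, by the paper's tie-breaking convention, hands them to the follower, so it does \emph{not} by itself give the bound. The $\delta$-perturbation followed by the monotone limit is exactly what circumvents this, and since $u_a(\bm{x}_a^\delta,\bm{x}_b^*)$ is computed in closed form for each $\delta$, the limiting step needs only continuity of an elementary rational function and no appeal to general continuity of $u_a$. Apart from this edge case the argument is a direct computation, so I anticipate no substantive obstacle; the bound holds for \emph{every} Nash equilibrium, not merely a distinguished one, which is what the subsequent comparison with the Stackelberg value will use.
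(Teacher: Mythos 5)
Your proposal is correct and takes essentially the same route as the paper: deviate to the leader strategy that mirrors the follower's equilibrium allocation scaled by $x_a/x_b$ and apply the Nash best-response inequality $u_a(\bm{x}_a^*,\bm{x}_b^*)\ge u_a(\hat{\bm{x}}_a,\bm{x}_b^*)$. The paper dispenses with your $\delta$-perturbation because in any pure Nash equilibrium both players allocate strictly positive budget to every battlefield (a fact it cites from Kim (2018) and Li and Zheng (2022) in the proof of Lemma~\ref{sec5-lemma-nece-suff}), so your set $\overline{S}$ is empty; your extra limiting step is harmless but not needed.
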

\begin{proof}
    Let $(\bm{x}^*_a, \bm{x}^*_b)$ denote the Nash equilibrium.
    Consider a leader's strategy $\hat{\bm{x}}_{aj} = \frac{x_a}{x_b} x^*_{bj}$, we have $u_a(\bm{x}^*_a, \bm{x}^*_b) \geq u_a(\hat{\bm{x}}_a, \bm{x}^*_b)$. Furthermore, we have
    \begin{align*}
        u_a(\hat{\bm{x}}_a, \bm{x}^*_b) = \sum_{j = 1}^n \frac{\frac{x_a}{x_b} x^*_{bj}}{\frac{x_a}{x_b} x^*_{bj} + x^*_{bj}} \cdot v_{aj} = \frac{x_a}{x_a + x_b} \sum_{j = 1}^n v_{aj}.
    \end{align*}
    It shows that the utility of the leader in the Nash equilibrium is greater than or equal to $\frac{x_a}{x_a + x_b} \cdot (\sum_{j = 1}^n v_{aj})$.
\end{proof}

By Theorem \ref{theorem-leader-utility-NE}, we have the following corollary.
\begin{corollary}\label{corollary-marginal}
    Let $u_a^{NE}$ and $u_a^{SE}$ denote the leader's utility under the Nash equilibrium and Stackelberg equilibrium, respectively.
    We have $\frac{u_a^{SE}}{u_a^{NE}} \leq \frac{x_a + x_b}{x_a}$.
\end{corollary}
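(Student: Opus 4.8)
The plan is to sandwich both utilities between $0$ and the trivial ceiling $\sum_{j=1}^n v_{aj}$, and then divide. The key observation is that the leader's utility in \emph{any} strategy profile $(\bm{x}_a,\bm{x}_b)$ is at most $\sum_{j=1}^n v_{aj}$: by \eqref{sec2-equ-ui}, $u_a(\bm{x}_a,\bm{x}_b)=\sum_{j=1}^n \frac{x_{aj}}{x_{aj}+x_{bj}}\,v_{aj}$, and each proportional weight $\frac{x_{aj}}{x_{aj}+x_{bj}}$ lies in $[0,1]$. In particular this applies to the Stackelberg equilibrium profile, so $u_a^{SE}\le \sum_{j=1}^n v_{aj}$, with no need to invoke any structural property of the optimal commitment strategy derived in Sections \ref{BRsupport}--\ref{OptCommitment}.

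Next I would invoke Theorem \ref{theorem-leader-utility-NE}, which gives the lower bound $u_a^{NE}\ge \frac{x_a}{x_a+x_b}\sum_{j=1}^n v_{aj}$. Since $v_{aj}\in\mathbb{Q}_{>0}$ we have $\sum_{j=1}^n v_{aj}>0$, and since $x_a,x_b>0$ the right-hand side is strictly positive, so $u_a^{NE}>0$ and division is legitimate. Combining the two bounds yields
\[
\frac{u_a^{SE}}{u_a^{NE}}\;\le\;\frac{\sum_{j=1}^n v_{aj}}{\frac{x_a}{x_a+x_b}\sum_{j=1}^n v_{aj}}\;=\;\frac{x_a+x_b}{x_a},
\]
which is exactly the claimed inequality.

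There is essentially no obstacle here: the statement is a direct corollary, and the only points requiring a word of care are (i) that the ceiling $\sum_j v_{aj}$ is attained as an upper bound on $u_a^{SE}$ purely from the form of the utility function, independent of which equilibrium is played, and (ii) that the denominator $u_a^{NE}$ is positive so the division preserves the inequality. Both are immediate from the model assumptions ($v_{aj}>0$, $x_a,x_b>0$).
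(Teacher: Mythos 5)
Your proof is correct and follows essentially the same route as the paper: bound $u_a^{SE}$ above by $\sum_{j=1}^n v_{aj}$, bound $u_a^{NE}$ below via Theorem \ref{theorem-leader-utility-NE}, and divide. The paper merely inserts the (unneeded for the final bound) intermediate inequality $u_a^{NE}\le u_a^{SE}$ into the same chain, so there is no substantive difference.
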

\begin{proof}
   Let $(\Tilde{\bm{x}}_a, \Tilde{\bm{x}}_b)$ denote the Stackelberg equilibrium, then by Theorem \ref{theorem-leader-utility-NE}, we have $$\frac{x_a}{x_a + x_b} \sum_{j = 1}^n v_{aj} \leq u_a(\bm{x}^*_a, \bm{x}^*_b) \leq u_a(\Tilde{\bm{x}}_a, \Tilde{\bm{x}}_b) < \sum_{j = 1}^n v_{aj}.$$
   Rearranging the inequality yields $\frac{u_a(\Tilde{\bm{x}}_a, \Tilde{\bm{x}}_b)}{u_a(\bm{x}^*_a, \bm{x}^*_b)} \leq \frac{x_a + x_b}{x_a}.$
\end{proof}
This corollary shows that when $x_a$ is relatively large, $\frac{u_a(\Tilde{\bm{x}}_a, \Tilde{\bm{x}}_b)}{u_a(\bm{x}^*_a, \bm{x}^*_b)}$ approaches 1. 
Hence, when the leader's budget is relatively large, the leader can make a commitment to enhance its own benefits, but the improvement is marginal.

{\bf{There are two battlefields.}} We consider a game with two battlefields. 
Analyzing two-battlefield scenario can provide insights into the leader's advantages in games with multiple battlefields.
Let $\Tilde{\mathcal{G}} := \langle \{a, b\}, \{1, 2\}, x_a, 1, (v_{a1}, 1), (v_{b1},$ $1) \rangle$.
Without loss of generality, let $v_{a1} \leq v_{b1}$.
For the game $\Tilde{\mathcal{G}}$, \cite{Li-2022} present an approach to compute Nash equilibria. 
We can utilize this approach to compute Nash equilibria.
The following theorem gives the range of the ratio.

\begin{theorem}\label{theorem-ratio}
    In game $\Tilde{\mathcal{G}}$, let $\hat{\bm{x}}_a$ denote the leader's optimal commitment, $\hat{\bm{x}}_b$ denote the follower's best response, and $u_a^{NE}$ denote the leader's utility in the Nash equilibrium. Then, the ratio of Player $a$'s utility in the Stackelberg equilibrium to its utility in the Nash equilibrium is bounded by  
    \begin{align*}
        \frac{v_{a1} + 1}{v_{a1} + \frac{v_{b1}(x_a+1)}{v_{b1}x_a+v_{a1}}} \leq \frac{u_a(\hat{\bm{x}}_a, \hat{\bm{x}}_b)}{u_a^{NE}} \leq \frac{v_{a1} + 1}{\frac{x_av_{a1}^2}{x_av_{a1}+v_{b1}} + \frac{x_a}{x_a+1}}.
    \end{align*}
\end{theorem}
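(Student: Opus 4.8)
The plan is to prove the two displayed inequalities by bounding the Stackelberg utility $u_a(\hat{\bm x}_a,\hat{\bm x}_b)$ and the Nash utility $u_a^{NE}$ \emph{separately}, since in the two‑battlefield game $\tilde{\mathcal G}$ neither quantity has a simple closed form. Write $u_a^{SE}=u_a(\hat{\bm x}_a,\hat{\bm x}_b)$. For the upper bound on the ratio I need an upper bound on $u_a^{SE}$ together with a lower bound on $u_a^{NE}$; for the lower bound on the ratio, the reverse pairing.

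\emph{Bounding $u_a^{SE}$.} The crude bound $u_a^{SE}<v_{a1}+v_{a2}=v_{a1}+1$ supplies the numerator of both fractions: the leader never collects more than the total battlefield value, and the inequality is strict because the follower has a positive budget and so contests at least one battlefield. For the lower bound on $u_a^{SE}$ needed in the ratio's lower bound, I use that committing to the Nash leader strategy $\bm x_a^\ast$ forces the follower to its \emph{unique} best response (Lemma~\ref{sec2-lemma-br}, valid since $\bm x_a^\ast$ is positive on every battlefield), which is precisely $\bm x_b^\ast$; hence $u_a^{SE}\ge u_a^{NE}$. Combined with the estimate on $u_a^{NE}$ below this yields the ratio's lower bound; alternatively one can exhibit an explicit feasible commitment and compute its value via Lemma~\ref{sec4-lemma-leader-threshold}.

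\emph{Bounding $u_a^{NE}$.} For the lower bound I invoke the deviation principle in the Nash equilibrium $(\bm x_a^\ast,\bm x_b^\ast)$: $u_a^{NE}\ge u_a(\bm x',\bm x_b^\ast)$ for every $\bm x'\in\bm X_a$. Taking the proportional deviation $\bm x'=(x_a/x_b)\bm x_b^\ast$ as in Theorem~\ref{theorem-leader-utility-NE} gives $u_a^{NE}\ge\frac{x_a}{x_a+1}(v_{a1}+1)$, and a one‑line computation shows this exceeds $\frac{x_av_{a1}^2}{x_av_{a1}+v_{b1}}+\frac{x_a}{x_a+1}$ exactly because $v_{a1}\le v_{b1}$ — this is the denominator of the ratio's upper bound. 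For the upper bound on $u_a^{NE}$, decompose $u_a^{NE}=u_{a1}^\ast+u_{a2}^\ast$ and bound $u_{a1}^\ast\le v_{a1}$, $u_{a2}^\ast\le\frac{v_{b1}(x_a+1)}{v_{b1}x_a+v_{a1}}$; the latter again rests on $v_{a1}\le v_{b1}$, and a sharper form of it is obtained by feeding the equilibrium best‑response (water‑filling) conditions of Lemma~\ref{sec2-lemma-br} — applied both to the leader responding to $\bm x_b^\ast$ and to the follower responding to $\bm x_a^\ast$ — to control $x_{b2}^\ast$ relative to $x_{a2}^\ast$ and to the mass on battlefield~$1$. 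Assembling the four estimates gives $\frac{v_{a1}+1}{v_{a1}+\frac{v_{b1}(x_a+1)}{v_{b1}x_a+v_{a1}}}\le\frac{u_a^{SE}}{u_a^{NE}}\le\frac{v_{a1}+1}{\frac{x_av_{a1}^2}{x_av_{a1}+v_{b1}}+\frac{x_a}{x_a+1}}$.

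\emph{Main obstacle.} In contrast with the $n>2$ case (Theorem~\ref{theorem-leader-utility-NE}, Corollary~\ref{corollary-marginal}), the Nash equilibrium of $\tilde{\mathcal G}$ is only characterized implicitly, as a fixed point of the two best‑response maps (cf.\ \cite{Li-2022,Kim-2018}), and $u_a^{SE}$ itself comes from the candidate‑commitment machinery of Section~\ref{OptCommitment} rather than from a formula. So the real work is to extract closed‑form bounds that never refer to $\bm x_a^\ast,\bm x_b^\ast,\hat{\bm x}_a,\hat{\bm x}_b$ explicitly. The lower bound on $u_a^{NE}$ via the proportional deviation is clean; the delicate part is the matching upper bound on $u_a^{NE}$, where one must use the equilibrium marginal‑utility equalities to pin down how the follower's allocation on each battlefield compares to the leader's. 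Once those comparisons are in hand, the remaining algebra — checking that the two auxiliary inequalities both collapse to $v_{a1}\le v_{b1}$ — is routine.
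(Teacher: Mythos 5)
The right-hand inequality in your plan is fine, and it is in fact a lighter route than the paper's: $u_a^{SE}<v_{a1}+1$ together with the proportional-deviation bound $u_a^{NE}\ge\frac{x_a}{x_a+1}(v_{a1}+1)$ (Theorem~\ref{theorem-leader-utility-NE} with $n=2$, $x_b=1$) and the one-line check that $\frac{x_a}{x_a+1}(v_{a1}+1)\ge\frac{x_av_{a1}^2}{x_av_{a1}+v_{b1}}+\frac{x_a}{x_a+1}$ exactly when $v_{a1}\le v_{b1}$ does give the upper bound; the paper obtains the same lower estimate on $u_a^{NE}$ instead from the closed-form Nash characterization of Lemma~\ref{lemma-nash-equilibrium} (Li and Zheng), using that $u_a^{NE}$ is increasing in $\mu^*$ and evaluating at the endpoint $\mu^*=x_a$ of $[x_a,\tfrac{v_{b1}}{v_{a1}}x_a]$.

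The gap is in the left-hand inequality. Your proposed upper bound on $u_a^{NE}$, namely $u_{a1}^{*}\le v_{a1}$ and $u_{a2}^{*}\le\frac{v_{b1}(x_a+1)}{v_{b1}x_a+v_{a1}}$, is weaker than the trivial $u_a^{NE}<v_{a1}+1$ (the second term is $\ge 1$ precisely because $v_{a1}\le v_{b1}$), and since every valid lower bound on $u_a^{SE}$ is strictly below $v_{a1}+1$, the pairing ``lower bound on $u_a^{SE}$ over this upper bound on $u_a^{NE}$'' can never reach the claimed left-hand side, whose numerator is $v_{a1}+1$. The paper gets there by a cancellation you cannot reproduce with these crude estimates: it exhibits the explicit commitment $\bm{x}_a=\left(\frac{v_{b1}}{v_{b1}+1}x_a,\frac{1}{v_{b1}+1}x_a\right)$, computes the induced best response to get $u_a^{SE}\ge\frac{x_a(v_{a1}+1)}{x_a+1}$, and pairs it with the genuinely nontrivial bound $u_a^{NE}\le\frac{x_av_{a1}}{x_a+1}+\frac{v_{b1}x_a}{v_{b1}x_a+v_{a1}}=\frac{x_a}{x_a+1}\left(v_{a1}+\frac{v_{b1}(x_a+1)}{v_{b1}x_a+v_{a1}}\right)$, obtained again from Lemma~\ref{lemma-nash-equilibrium} at the other endpoint $\mu^*=\frac{v_{b1}}{v_{a1}}x_a$; the common factor $\frac{x_a}{x_a+1}$ cancels, which is how $v_{a1}+1$ appears in the numerator. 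The ``delicate part'' you defer --- sharpening the bound on $u_{a2}^{*}$ via the equilibrium conditions --- is exactly this missing estimate, and the paper does it by importing the Li--Zheng closed form rather than by hand, so your flagged main obstacle is resolved there, not worked around. Ironically, your own ingredient $u_a^{SE}\ge u_a^{NE}$ already implies the left-hand inequality by an easier argument (the ratio is $\ge 1$, while the claimed lower bound is $\le 1$ because $\frac{v_{b1}(x_a+1)}{v_{b1}x_a+v_{a1}}\ge1$ when $v_{a1}\le v_{b1}$), but you neither state nor verify that the target is at most one, and the assembly you actually describe does not deliver the stated bound.
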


{\bf{There are two battlefields and the leader's budget is relatively small.}} By Theorem \ref{theorem-ratio}, we can derive the following corollary, which describes the leader's advantage when the leader's budget is relatively small.

\begin{corollary}\label{corollary-infinity}
    In game $\Tilde{\mathcal{G}}$, the ratio of the leader's utility when moving first to its utility in the Nash equilibrium approaches infinity.
\end{corollary}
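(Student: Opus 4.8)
The plan is to produce, for every target $N$, a game in $\tilde{\mathcal{G}}$ on which $u_a(\hat{\bm{x}}_a,\hat{\bm{x}}_b)/u_a^{NE}\ge N$; I would do this by sending the leader's budget $x_a\to 0$ (the follower's budget stays $1$) and, in tandem, choosing the battlefield values, so that the blow-up can be read off from Theorem~\ref{theorem-ratio} together with Theorem~\ref{theorem-leader-utility-NE}. The heuristic is that shrinking $x_a$ forces the leader's Nash payoff down to $0$ roughly like $x_a$, whereas a well-chosen commitment keeps the leader's Stackelberg payoff bounded away from $0$.

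Concretely I would argue in three steps. First, pin the Nash payoff: Theorem~\ref{theorem-leader-utility-NE} gives $u_a^{NE}\ge\frac{x_a}{x_a+1}(v_{a1}+1)$, and the right-hand estimate of Theorem~\ref{theorem-ratio} (equivalently: in the simultaneous-move equilibrium the unit-budget follower out-invests the leader on every battlefield in its support, so each share $x^{*}_{aj}/(x^{*}_{aj}+x^{*}_{bj})$ is $O(x_a)$) gives the matching $u_a^{NE}=O(x_a)$, hence $u_a^{NE}\to 0$. Second, lower-bound the Stackelberg payoff: using the water-filling reading of the follower's best response (Lemma~\ref{sec2-lemma-br}, in the form of constraint~\eqref{noteq-conatrain}), choose $v_{b1}$ large relative to $1/x_a^{2}$ and split $x_a$ between the two battlefields so that $x_{a1}x_{a2}v_{b1}>(1+x_{a1})^{2}$; then the follower's unique best response abandons battlefield~$2$, i.e.\ $K(\hat{\bm{x}}_a)=\{1\}$, so that $u_a(\hat{\bm{x}}_a,\hat{\bm{x}}_b)\ge v_{a2}+\frac{x_{a1}v_{a1}}{x_{a1}+x_{b1}}$, a quantity that does not vanish as $x_a\to 0$ (it is at least $v_{a2}$, and can be made to grow by scaling $v_{a1}$). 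Third, divide the two estimates to get $u_a(\hat{\bm{x}}_a,\hat{\bm{x}}_b)/u_a^{NE}\to\infty$.

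The step I expect to be the real obstacle is making the Nash and Stackelberg estimates compatible. Forcing the follower off battlefield~$2$ in the sequential game needs $v_{b1}$ large compared with $1/x_a^{2}$, and one must check that for such $v_{b1}$ the \emph{simultaneous}-move equilibrium does not already concede battlefield~$2$ to the leader — if it did, $u_a^{NE}$ would itself be bounded below by about $v_{a2}$ and the ratio would stay bounded. So the actual work is to carve out the regime of $(v_{a1},v_{b1},x_a)$ in which $u_a^{SE}=\Omega(1)$ while $u_a^{NE}=o(1)$, comparing the Stackelberg ``forcing'' threshold with the budget split in the Nash profile; this is exactly where Theorem~\ref{theorem-ratio} is invoked — one tracks that its denominator $\frac{x_a v_{a1}^{2}}{x_a v_{a1}+v_{b1}}+\frac{x_a}{x_a+1}\to 0$ along the chosen family while the numerator $v_{a1}+1$ stays controlled, which pins the ratio to infinity — supplemented by the characterization of the Nash equilibria of $\tilde{\mathcal{G}}$ from \citep{Li-2022} for the bookkeeping on the simultaneous-move side.
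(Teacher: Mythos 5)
Your construction is genuinely different from the paper's (the paper takes $v_{a1}=x_a$, $v_{b1}=v_{a1}^2$ with $x_a\to 0$, so that \emph{both} utilities vanish but the Nash one faster, and reads the blow-up off the left-hand inequality of Theorem~\ref{theorem-ratio}; you instead blow up $v_{b1}$ to keep $u_a^{SE}=\Omega(1)$ while trying to push $u_a^{NE}\to 0$), but your route has a gap precisely at the step you flag, and the tool you propose for closing it is misapplied. The quantity $\frac{x_a v_{a1}^2}{x_a v_{a1}+v_{b1}}+\frac{x_a}{x_a+1}$ is the denominator of the \emph{upper} bound in Theorem~\ref{theorem-ratio}, i.e.\ a \emph{lower} bound on $u_a^{NE}$; showing it tends to $0$ only makes the upper bound on the ratio diverge, which says nothing about the ratio itself. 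To pin the ratio to infinity you need an upper bound on $u_a^{NE}$ that vanishes while $u_a^{SE}$ stays bounded below, and neither Theorem~\ref{theorem-leader-utility-NE} nor Theorem~\ref{theorem-ratio} gives you that in your regime; likewise your parenthetical claim that in the simultaneous equilibrium the unit-budget follower out-invests the leader on every battlefield (hence $u_a^{NE}=O(x_a)$) fails exactly where you need it, since by the Li--Zheng characterization $x^*_{aj}/x^*_{bj}=\mu^* v_{aj}/v_{bj}$ can be arbitrarily large on battlefield $2$ when $v_{b1}$ is huge.

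Worse, the obstacle is not just unresolved bookkeeping: the regime you want to carve out is empty. Forcing $K(\hat{\bm{x}}_a)=\{1\}$ needs $x_{a1}x_{a2}v_{b1}\ge(1+x_{a1})^2$, hence $v_{b1}>4/x_a^2$. But once $v_{b1}x_a^2\gtrsim 4$, the simultaneous game admits an interior Nash equilibrium in which the follower essentially concedes battlefield $2$: writing $p=x^*_{a1}=t x_a$, $e=x_a-p$, $d=1-x^*_{b1}$, the two first-order conditions give $d\approx v_{a1}(d+e)^2$, $e\approx p\,v_{b1}(d+e)^2$, hence $t(1-t)\approx 1/(v_{b1}x_a^2)$, which is solvable exactly when $v_{b1}x_a^2\ge 4$ (to leading order); the resulting profile has $x^*_{a2}\approx x_a/2$ against $x^*_{b2}\approx v_{a1}x_a^2/4$, each payoff is concave in own allocation, so this is a genuine equilibrium with $u_a^{NE}\approx v_{a2}=\Theta(1)$. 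Thus whenever your Stackelberg forcing is feasible, there is a Nash equilibrium in which the leader already wins battlefield $2$, the ratio stays bounded, and your claimed estimate $u_a^{NE}=o(1)$ cannot hold (at best you would need an equilibrium-selection argument that the proposal does not supply). The paper avoids this entirely by sending $v_{b1}=v_{a1}^2\le v_{a1}=x_a\to 0$ and using only the lower-bound side of Theorem~\ref{theorem-ratio}, so the divergence there is a statement about shrinking denominators, not about keeping the Stackelberg payoff bounded away from zero.
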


To illustrate this corollary, we consider an extreme case $\Tilde{\mathcal{G}} := \langle \{a, b\}, \{1, 2\}, \epsilon, 1, (\epsilon, 1), (o(\epsilon), 1) \rangle$.
We observe when the leader's budget is dominated by the follower's budget, and the follower's value on the first battlefield is negligible compared to the leader's value on the first battlefield, the ratio of the leader's utility when moving first to its utility in the Nash equilibrium approaches infinity.

Although this corollary focuses on two battlefields, using the reduction method presented in Section 3, we can extend the analysis to multiple battlefields. 
Therefore, this corollary holds for scenarios with more than two battlefields as well.

\section{Conclusion and Future Work}
We explore the advantages of the first mover in the Lottery Colonel Blotto game.
To address this, we reduce the number of supports in the follower's best response strategies from $2^n-1$ to $n$.
We derive a method to calculate the leader's commitment for each possible support, allowing selection of the optimal one from these $n$ potential commitments.
In addition, we provide the necessary and sufficient conditions under which the Stackelberg equilibrium and Nash equilibrium are equivalent.
We find that the two equilibria are equivalent when the budget ratio between the leader and the follower is exactly equal to a certain functional value, which depends on the ratio of the valuations of the battlefields by the leader and the follower. 
Furthermore, we analyze the ratio of the leader's utility under the Stackelberg equilibrium to that under the Nash equilibrium. 

This work opens several directions for further research.
First, analyzing the ratio between the follower's utility under Stackelberg and Nash equilibria may yield deeper understanding of the incentives and trade-offs involved.
Second, it is of interest to study which player benefits more from committing early, and under what conditions a player prefers to act as the leader or the follower.
Finally, extending the model to settings with multiple followers introduces new layers of strategic complexity, where the interaction among followers becomes both theoretically rich and analytically challenging.

\clearpage

\section*{Acknowledgments}
We thank the fnancial support from National Natural Science Foundation of China (Grant No. 62172422 and 72192805), and Key Laboratory of Interdisciplinary Research of Computation and Economics (Shanghai University of Finance and Economics), Ministry of Education.
Jie Zhang was partially supported by a Leverhulme Trust Research Project Grant (2021 -- 2024) and the EPSRC grant (EP/W014912/1).

\bibliographystyle{named}
\bibliography{reference}

\begin{thebibliography}{}

\bibitem[\protect\citeauthoryear{Anbarci \bgroup \em et al.\egroup }{2023}]{Anbarci-2023}
Nejat Anbarci, Kutay Cingiz, and Mehmet~S Ismail.
\newblock Proportional resource allocation in dynamic n-player blotto games.
\newblock {\em Mathematical Social Sciences}, 125:94--100, 2023.

\bibitem[\protect\citeauthoryear{Anwar \bgroup \em et al.\egroup }{2024}]{Anwar-2024}
Md~Sanzeed Anwar, Grant Schoenebeck, and Paramveer~S Dhillon.
\newblock Filter bubble or homogenization? disentangling the long-term effects of recommendations on user consumption patterns.
\newblock In {\em Proceedings of the ACM on Web Conference 2024}, pages 123--134, Singapore, 2024. {ACM}.

\bibitem[\protect\citeauthoryear{Behnezhad \bgroup \em et al.\egroup }{2018}]{Behnezhad-2018}
Soheil Behnezhad, Avrim Blum, Mahsa Derakhshan, MohammadTaghi HajiAghayi, Mohammad Mahdian, Christos~H Papadimitriou, Ronald~L Rivest, Saeed Seddighin, and Philip~B Stark.
\newblock From battlefields to elections: Winning strategies of blotto and auditing games.
\newblock In {\em Proceedings of the Twenty-Ninth Annual ACM-SIAM Symposium on Discrete Algorithms}, pages 2291--2310, New Orleans, Louisiana, USA, 2018. SIAM, {SIAM}.

\bibitem[\protect\citeauthoryear{Boix-Adser{\`a} \bgroup \em et al.\egroup }{2020}]{Boix-2020}
Enric Boix-Adser{\`a}, Benjamin~L Edelman, and Siddhartha Jayanti.
\newblock The multiplayer colonel blotto game.
\newblock In {\em Proceedings of the 21st ACM Conference on Economics and Computation}, pages 47--48, Hungary, 2020. {ACM}.

\bibitem[\protect\citeauthoryear{Borel}{1921}]{Borel-1921}
Emile Borel.
\newblock La th{\'e}orie du jeu et les {\'e}quations int{\'e}grales {\`a} noyau sym{\'e}trique.
\newblock {\em Comptes rendus de l'Acad{\'e}mie des sciences}, 173:1304--1308, 1921.

\bibitem[\protect\citeauthoryear{Br{\^a}nzei \bgroup \em et al.\egroup }{2012}]{Branzei-2012}
Simina Br{\^a}nzei, Clara Forero, Kate Larson, and Peter Bro~Miltersen.
\newblock Equilibria of chinese auctions.
\newblock {\em arXiv e-prints}, pages arXiv--1208, 2012.

\bibitem[\protect\citeauthoryear{Cavusoglu \bgroup \em et al.\egroup }{2008}]{Cavusoglu-2008}
Huseyin Cavusoglu, Srinivasan Raghunathan, and Wei~T Yue.
\newblock Decision-theoretic and game-theoretic approaches to it security investment.
\newblock {\em Journal of Management Information Systems}, 25(2):281--304, 2008.

\bibitem[\protect\citeauthoryear{Chandan \bgroup \em et al.\egroup }{2020}]{Chandan-2020}
Rahul Chandan, Keith Paarporn, and Jason~R Marden.
\newblock When showing your hand pays off: Announcing strategic intentions in colonel blotto games.
\newblock In {\em 2020 American Control Conference (ACC)}, pages 4632--4637, Denver, CO, USA, 2020. IEEE, {IEEE}.

\bibitem[\protect\citeauthoryear{Chandan \bgroup \em et al.\egroup }{2022}]{Chandan-2022}
Rahul Chandan, Keith Paarporn, Mahnoosh Alizadeh, and Jason~R Marden.
\newblock Strategic investments in multi-stage general lotto games.
\newblock In {\em 2022 IEEE 61st Conference on Decision and Control (CDC)}, pages 4444--4448, Cancun, Mexico, 2022. IEEE, {IEEE}.

\bibitem[\protect\citeauthoryear{Conitzer and Sandholm}{2006}]{Conitzer-2006}
Vincent Conitzer and Tuomas Sandholm.
\newblock Computing the optimal strategy to commit to.
\newblock In {\em Proceedings of the 7th ACM conference on Electronic commerce}, pages 82--90, Ann Arbor, Michigan, USA, 2006. {ACM}.

\bibitem[\protect\citeauthoryear{Dasgupta and Nti}{1998}]{Dasgupta-1998}
Ani Dasgupta and Kofi~O. Nti.
\newblock Designing an optimal contest.
\newblock {\em European Journal of Political Economy}, 14(4):587--603, 1998.

\bibitem[\protect\citeauthoryear{Deng \bgroup \em et al.\egroup }{2023}]{Deng-2023}
Xiaotie Deng, Yotam Gafni, Ron Lavi, Tao Lin, and Hongyi Ling.
\newblock From monopoly to competition: Optimal contests prevail.
\newblock In Brian Williams, Yiling Chen, and Jennifer Neville, editors, {\em Thirty-Seventh {AAAI} Conference on Artificial Intelligence, {AAAI} 2023}, pages 5608--5615, Washington, DC, USA, 2023. {AAAI} Press.

\bibitem[\protect\citeauthoryear{Duffy and Matros}{2015}]{Duffy-2015}
John Duffy and Alexander Matros.
\newblock Stochastic asymmetric blotto games: Some new results.
\newblock {\em Economics Letters}, 134:4--8, 2015.

\bibitem[\protect\citeauthoryear{Dziubi{\'n}ski}{2013}]{Dziubinski-2013}
Marcin Dziubi{\'n}ski.
\newblock Non-symmetric discrete general lotto games.
\newblock {\em International Journal of Game Theory}, 42:801--833, 2013.

\bibitem[\protect\citeauthoryear{Friedman}{1958}]{Friedman-1958}
Lawrence Friedman.
\newblock Game-theory models in the allocation of advertising expenditures.
\newblock {\em Operations research}, 6(5):699--709, 1958.

\bibitem[\protect\citeauthoryear{Fu and Iyer}{2019}]{Fu-2019}
Qiang Fu and Ganesh Iyer.
\newblock Multimarket value creation and competition.
\newblock {\em Marketing Science}, 38(1):129--149, 2019.

\bibitem[\protect\citeauthoryear{Ghosh and Kleinberg}{2014}]{Ghosh-2014}
Arpita Ghosh and Robert Kleinberg.
\newblock Behavioral mechanism design: Optimal contests for simple agents.
\newblock {\em CoRR}, abs/1406.1790, 2014.

\bibitem[\protect\citeauthoryear{Haggiag \bgroup \em et al.\egroup }{2022}]{Haggiag-2022}
Coral Haggiag, Sigal Oren, and Ella Segev.
\newblock Picking the right winner: Why tie-breaking in crowdsourcing contests matters.
\newblock In {\em IJCAI}, pages 307--313, Vienna, Austria, 2022. ijcai.org.

\bibitem[\protect\citeauthoryear{Hart}{2008}]{Hart-2008}
Sergiu Hart.
\newblock Discrete colonel blotto and general lotto games.
\newblock {\em International Journal of Game Theory}, 36(3-4):441--460, 2008.

\bibitem[\protect\citeauthoryear{Hausken \bgroup \em et al.\egroup }{2008}]{Hausken-2008}
Kjell Hausken, Vicki~M Bier, and Jun Zhuang.
\newblock Defending against terrorism, natural disaster, and all hazards.
\newblock In {\em Game theoretic risk analysis of security threats}, pages 65--97. Springer, 2008.

\bibitem[\protect\citeauthoryear{Hausken}{2012}]{Hausken-2012}
Kjell Hausken.
\newblock On the impossibility of deterrence in sequential colonel blotto games.
\newblock {\em International Game Theory Review}, 14(02):1250011, 2012.

\bibitem[\protect\citeauthoryear{Hicks}{1935}]{Hicks-1935}
John~R Hicks.
\newblock Marktform und gleichgewicht, 1935.

\bibitem[\protect\citeauthoryear{Hortala-Vallve and Llorente-Saguer}{2012}]{Rafael-2012}
Rafael Hortala-Vallve and Aniol Llorente-Saguer.
\newblock Pure strategy nash equilibria in non-zero sum colonel blotto games.
\newblock {\em International Journal of Game Theory}, 41(2):331--343, 2012.

\bibitem[\protect\citeauthoryear{Iliaev \bgroup \em et al.\egroup }{2023}]{Iliaev-2023}
David Iliaev, Sigal Oren, and Ella Segev.
\newblock A tullock-contest-based approach for cyber security investments.
\newblock {\em Annals of Operations Research}, 320(1):61--84, 2023.

\bibitem[\protect\citeauthoryear{Jackson and Nei}{2015}]{Jackson-2015}
Matthew~O Jackson and Stephen Nei.
\newblock Networks of military alliances, wars, and international trade.
\newblock {\em Proceedings of the National Academy of Sciences}, 112(50):15277--15284, 2015.

\bibitem[\protect\citeauthoryear{Kim \bgroup \em et al.\egroup }{2018}]{Kim-2018}
Geofferey~Jiyun Kim, Jerim Kim, and Bara Kim.
\newblock A lottery blotto game with heterogeneous items of asymmetric valuations.
\newblock {\em Economics letters}, 173:1--5, 2018.

\bibitem[\protect\citeauthoryear{Klumpp \bgroup \em et al.\egroup }{2019}]{Klumpp-2019}
Tilman Klumpp, Kai~A Konrad, and Adam Solomon.
\newblock The dynamics of majoritarian blotto games.
\newblock {\em Games and Economic Behavior}, 117:402--419, 2019.

\bibitem[\protect\citeauthoryear{Kohli \bgroup \em et al.\egroup }{2012}]{Kohli-2012}
Pushmeet Kohli, Michael Kearns, Yoram Bachrach, Ralf Herbrich, David Stillwell, and Thore Graepel.
\newblock Colonel blotto on facebook: The effect of social relations on strategic interaction.
\newblock In {\em Proceedings of the 4th Annual ACM Web Science Conference}, pages 141--150, Evanston, IL, USA, 2012. {ACM}.

\bibitem[\protect\citeauthoryear{Korzhyk \bgroup \em et al.\egroup }{2010}]{Korzhyk-2010}
Dmytro Korzhyk, Vincent Conitzer, and Ronald Parr.
\newblock Complexity of computing optimal stackelberg strategies in security resource allocation games.
\newblock In Maria Fox and David Poole, editors, {\em Proceedings of the Twenty-Fourth {AAAI} Conference on Artificial Intelligence, {AAAI} 2010, Atlanta, Georgia, USA, July 11-15, 2010}, pages 805--810, Atlanta, Georgia, USA, 2010. {AAAI} Press.

\bibitem[\protect\citeauthoryear{Kovenock and Arjona}{2019}]{Kovenock-2019}
Dan Kovenock and David~Rojo Arjona.
\newblock A full characterization of best-response functions in the lottery colonel blotto game.
\newblock {\em Economics letters}, 182:33--36, 2019.

\bibitem[\protect\citeauthoryear{Kovenock and Roberson}{2021}]{Kovenock-2021}
Dan Kovenock and Brian Roberson.
\newblock Generalizations of the general lotto and colonel blotto games.
\newblock {\em Economic Theory}, 71:997--1032, 2021.

\bibitem[\protect\citeauthoryear{Letchford and Conitzer}{2010}]{Letchford-2010}
Joshua Letchford and Vincent Conitzer.
\newblock Computing optimal strategies to commit to in extensive-form games.
\newblock In David~C. Parkes, Chrysanthos Dellarocas, and Moshe Tennenholtz, editors, {\em Proceedings 11th {ACM} Conference on Electronic Commerce (EC-2010), Cambridge, Massachusetts, USA, June 7-11, 2010}, pages 83--92, Cambridge, Massachusetts, USA, 2010. {ACM}.

\bibitem[\protect\citeauthoryear{Letina \bgroup \em et al.\egroup }{2023}]{Letina-2023}
Igor Letina, Shuo Liu, and Nick Netzer.
\newblock Optimal contest design: Tuning the heat.
\newblock {\em Journal of Economic Theory}, 213:105616, 2023.

\bibitem[\protect\citeauthoryear{Levy \bgroup \em et al.\egroup }{2017}]{Levy-2017}
Priel Levy, David Sarne, and Igor Rochlin.
\newblock Contest design with uncertain performance and costly participation.
\newblock In Carles Sierra, editor, {\em Proceedings of the Twenty-Sixth International Joint Conference on Artificial Intelligence, {IJCAI} 2017, Melbourne, Australia, August 19-25, 2017}, pages 302--309, Melbourne, Australia, 2017. ijcai.org.

\bibitem[\protect\citeauthoryear{Li and Zheng}{2021}]{Li-2021}
Xinmi Li and Jie Zheng.
\newblock Even-split strategy in sequential colonel blotto games.
\newblock {\em Available at SSRN 3947995}, 2021.

\bibitem[\protect\citeauthoryear{Li and Zheng}{2022}]{Li-2022}
Xinmi Li and Jie Zheng.
\newblock Pure strategy nash equilibrium in 2-contestant generalized lottery colonel blotto games.
\newblock {\em Journal of Mathematical Economics}, 103:102771, 2022.

\bibitem[\protect\citeauthoryear{Macdonell and Mastronardi}{2015}]{Macdonell-2015}
Scott~T Macdonell and Nick Mastronardi.
\newblock Waging simple wars: a complete characterization of two-battlefield blotto equilibria.
\newblock {\em Economic Theory}, 58(1):183--216, 2015.

\bibitem[\protect\citeauthoryear{Osorio}{2013}]{Osorio-2013}
Antonio Osorio.
\newblock The lottery blotto game.
\newblock {\em Economics Letters}, 120(2):164--166, 2013.

\bibitem[\protect\citeauthoryear{Pagey \bgroup \em et al.\egroup }{2023}]{Pagey-2023}
Rohan Pagey, Mohammad Mannan, and Amr Youssef.
\newblock All your shops are belong to us: security weaknesses in e-commerce platforms.
\newblock In {\em Proceedings of the ACM web conference 2023}, pages 2144--2154, Austin, TX, USA, 2023. {ACM}.

\bibitem[\protect\citeauthoryear{Perchet \bgroup \em et al.\egroup }{2022}]{Perchet-2022}
Vianney Perchet, Philippe Rigollet, and Thibaut Le~Gouic.
\newblock An algorithmic solution to the blotto game using multi-marginal couplings.
\newblock In {\em Proceedings of the 23rd ACM Conference on Economics and Computation}, pages 208--209, Boulder, CO, USA, 2022. {ACM}.

\bibitem[\protect\citeauthoryear{Renou}{2009}]{Renou-2009}
Ludovic Renou.
\newblock Commitment games.
\newblock {\em Games and Economic Behavior}, 66(1):488--505, 2009.

\bibitem[\protect\citeauthoryear{Roberson}{2006}]{Roberson-2006}
Brian Roberson.
\newblock The colonel blotto game.
\newblock {\em Economic Theory}, 29(1):1--24, 2006.

\bibitem[\protect\citeauthoryear{Schelling}{1960}]{Schelling-1960}
Thomas~C. Schelling.
\newblock {\em The Strategy of Conflict}.
\newblock Harvard University Press, Cambridge, MA, 1960.

\bibitem[\protect\citeauthoryear{Tang \bgroup \em et al.\egroup }{2016}]{Wang-2015}
Pingzhong Tang, Zihe Wang, and Xiaoquan~(Michael) Zhang.
\newblock Optimal commitments in asymmetric auctions with incomplete information.
\newblock In {\em Proceedings of the 2016 {ACM} Conference on Economics and Computation, {EC} '16, Maastricht, The Netherlands, July 24-28, 2016}, pages 197--211, Maastricht, The Netherlands, 2016. {ACM}.

\bibitem[\protect\citeauthoryear{Xie and Zheng}{2022}]{Xie-2022}
Haochen Xie and Jie Zheng.
\newblock Dynamic resource allocation in tug-of-war.
\newblock {\em SSRN Electronic Journal}, 2022.
\newblock Available at SSRN: \url{https://ssrn.com/abstract=4090793} or \url{http://dx.doi.org/10.2139/ssrn.4090793}.

\bibitem[\protect\citeauthoryear{Xu and Zhou}{2018}]{Xu-2018}
Jin Xu and Junjie Zhou.
\newblock Discriminatory power and pure strategy nash equilibrium in the lottery blotto game.
\newblock {\em Operations Research Letters}, 46(4):424--429, 2018.

\bibitem[\protect\citeauthoryear{Xu \bgroup \em et al.\egroup }{2022}]{Xu-2022}
Jin Xu, Yves Zenou, and Junjie Zhou.
\newblock Equilibrium characterization and shock propagation in conflict networks.
\newblock {\em Journal of Economic Theory}, 206:105571, 2022.

\bibitem[\protect\citeauthoryear{Zhuang and Bier}{2007}]{Zhuang-2007}
Jun Zhuang and Vicki~M Bier.
\newblock Balancing terrorism and natural disasters—defensive strategy with endogenous attacker effort.
\newblock {\em Operations Research}, 55(5):976--991, 2007.

\end{thebibliography}

\newpage
\appendix
\renewcommand{\thelemma}{\arabic{lemma}}
\section{OMITTED PROOFS FROM SECTION 3}\label{appendix-section3}
\begin{lemma}\label{lemma-splitting}
    If the leader's commitment is $\bm{x}_a = (x_{a1}, \cdots,$ $x_{an})$, the follower's best response is $\bm{x}_b = BR(\bm{x}_a) = (x_{b1}, \cdots, x_{bn})$ in $\mathcal{G}$, and the leader's commitment is $\bm{x}_a^{(1)} = (x_{a1}, \cdots, x_{a(r-1)}, x_{a(r+1)}, \cdots, x_{an}, \frac{x_{ar}}{t}, \cdots, \frac{x_{ar}}{t})$ in $\mathcal{G}^{(1)}$, we have $\bm{x}_b^{(1)} = BR(\bm{x}_a^{(1)}) = (x_{b1}, \cdots, x_{b(r-1)}, x_{b(r+1)},$ $\cdots, x_{bn}, \frac{x_{br}}{t}, \cdots, \frac{x_{br}}{t})$.
    Furthermore, we have that $u_a(\bm{x}_a^{(1)}, $ $\bm{x}_b^{(1)}) = u_a(\bm{x}_a, \bm{x}_b)$.
\end{lemma}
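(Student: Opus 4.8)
The plan is to establish the invariance of the follower's best response under a single splitting operation by exploiting the \emph{marginal-utility} (water-filling) characterization behind Lemma~\ref{sec2-lemma-br}, and then to read off the utility identity by a direct computation. Fix the battlefield $r$ that is split into $t$ copies, each with leader-value $v_{ar}/t$, follower-value $v_{br}/t$, and leader allocation $x_{ar}/t$. The first step is the key algebraic observation: for one of these sub-battlefields, if the follower places budget $x_{br}/t$ on it, its marginal utility equals
\[
\frac{(x_{ar}/t)(v_{br}/t)}{\bigl((x_{ar}/t)+(x_{br}/t)\bigr)^2}=\frac{x_{ar}v_{br}}{(x_{ar}+x_{br})^2},
\]
which is exactly the marginal utility of the original battlefield $r$ under allocation $x_{br}$ (cf.\ \eqref{equation-marginal-utility-follower}); likewise, the marginal utility of a sub-battlefield at zero allocation is $\dfrac{v_{br}/t}{x_{ar}/t}=\dfrac{v_{br}}{x_{ar}}$, the same as that of battlefield $r$.

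Second, I would recall from \eqref{equation-marginal-utility-follower} and the water-filling interpretation that the follower's unique best response is characterized by a single threshold level $\lambda$: every battlefield receives exactly the budget that drives its marginal utility down to $\lambda$, every battlefield whose marginal utility at zero is already at most $\lambda$ receives nothing, and the total budget $x_b$ is exhausted. Since in $\mathcal{G}$ the allocation $\bm{x}_b$ satisfies this at some level $\lambda$, the proposed allocation $\bm{x}_b^{(1)}$ — which keeps $x_{bj}$ on every unsplit battlefield and puts $x_{br}/t$ on each of the $t$ copies — reproduces the identical multiset of marginal-utility values, hence is balanced at the same level $\lambda$ and still respects the zero-allocation condition on abandoned battlefields (battlefield $r$ is abandoned in $\mathcal{G}$ iff $v_{br}/x_{ar}\le\lambda$ iff each copy is abandoned). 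It also uses total budget $\sum_{j\ne r}x_{bj}+t\cdot(x_{br}/t)=\sum_j x_{bj}=x_b$. Therefore $\bm{x}_b^{(1)}$ meets the characterization of Lemma~\ref{sec2-lemma-br}, and by the uniqueness asserted there it equals $BR(\bm{x}_a^{(1)})$. Equivalently, one may verify this through the closed form \eqref{sec2-lem-equ-br} directly, after checking that $K(\bm{x}_a^{(1)})$ is obtained from $K(\bm{x}_a)$ by replacing $r$ with its $t$ copies; the sums $\sum_{j'\in K}\sqrt{x_{aj'}v_{bj'}}$ and $\sum_{j'\in K}x_{aj'}$ are each invariant under the split, from which the formula for $x_{bj}^{(1)}$ collapses to $x_{bj}$ on unsplit battlefields and to $x_{br}/t$ on each copy.

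Third, for the utility identity, note that on every unsplit battlefield $j$ the triple $(x_{aj},x_{bj},v_{aj})$ is untouched, so $u_{aj}$ is unchanged; on each copy of $r$ the leader's utility is
\[
\frac{(x_{ar}/t)(v_{ar}/t)}{(x_{ar}/t)+(x_{br}/t)}=\frac1t\cdot\frac{x_{ar}v_{ar}}{x_{ar}+x_{br}},
\]
and summing the $t$ identical copies recovers exactly $u_{ar}$. Hence $u_a(\bm{x}_a^{(1)},\bm{x}_b^{(1)})=u_a(\bm{x}_a,\bm{x}_b)$ (and, by the same computation with $v_{aj}$ replaced by $v_{bj}$, the follower's utility is preserved as well).

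The main obstacle is making the ``balanced at the same level $\lambda$'' step airtight: one must argue that the follower cannot do better by allocating \emph{unequal} amounts across the $t$ copies. This is precisely where strict concavity of each $u_{bj}$ in $x_{bj}$ (the negativity of the second derivative in \eqref{equation-marginal-utility-follower}) — equivalently, the uniqueness clause of Lemma~\ref{sec2-lemma-br} — does the work: the maximizer of a sum of strictly concave functions under a budget constraint is unique, and equalization of marginal utilities among active battlefields is forced, so the symmetric allocation across the copies is the only candidate. If instead one takes the closed-form route, the corresponding obstacle is the bookkeeping showing that $k^*$ tracks the split correctly, i.e.\ that the copies all enter $K(\bm{x}_a^{(1)})$ precisely when $r\in K(\bm{x}_a)$; this follows because the partial sums in \eqref{shorteststave} that determine membership are monotone along the block of copies and are pinned at both ends by the values they take for battlefield $r$ and its predecessor in $\mathcal{G}$, so monotonicity fills in the intermediate copies.
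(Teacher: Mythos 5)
Your proof is correct, and it reaches the conclusion by the same two pillars the paper uses — Lemma~\ref{sec2-lemma-br} (including its uniqueness clause) for the best-response claim, and a direct per-battlefield computation for the utility claim — but your primary verification step is organized differently. The paper does a case analysis on whether $r \in K(\bm{x}_a)$, writes down $K(\bm{x}_a^{(1)})$ explicitly, and then substitutes the candidate $\bm{x}_b^{(1)}$ into the closed form \eqref{sec2-lem-equ-br} to check it (stated as ``easy to verify''). You instead certify optimality of $\bm{x}_b^{(1)}$ through the equal-marginal-utility/threshold (KKT) conditions: the split leaves both the marginal utility at the allocation $x_{br}/t$ and the marginal utility at zero unchanged, so the same level $\lambda$ balances the new problem, and strict concavity of each $u_{bj}$ (so sufficiency of the first-order conditions and uniqueness of the maximizer) forces the symmetric allocation across the $t$ copies. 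The paper invokes this water-filling picture only as an interpretive remark, so your route makes the verification more self-contained; conversely, your sketch of the closed-form route — based on the invariance of $\sum_{j'\in K}\sqrt{x_{aj'}v_{bj'}}$ and $\sum_{j'\in K}x_{aj'}$ under the split — is exactly the paper's argument, and spelling out those two invariances is arguably cleaner than the paper's bare ``easy to verify.'' One small caution on that alternative route: your claim that membership of the intermediate copies in $K(\bm{x}_a^{(1)})$ follows from partial sums being ``monotone and pinned at both ends'' is a bit loose as stated; the clean justification is a mediant argument (each copy adds increments to numerator and denominator in the fixed ratio $\sqrt{v_{br}/x_{ar}}$, so the threshold quantity in \eqref{shorteststave} stays strictly below $v_{br}/x_{ar}$ along the block whenever it does at the end). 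This is not needed on your primary KKT route, so the gap is cosmetic rather than substantive.
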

\begin{proof}
    Obviously, there are $n_1 = n - 1 + t$ battlefields in $\mathcal{G}^{(1)}$.
    By Lemma \ref{sec2-lemma-br}, there exists a set $K(\bm{x}_a)$ in $\mathcal{G}$.
    Therefore, there are two cases regarding $r$: (1) If $r \notin K(\bm{x}_a)$, we have $x_{bn+1}^{(1)} = x_{bn+2}^{(1)} = \cdots = x_{bn+t}^{(1)} = 0$, and $K(\bm{x}_a^{(1)}) = K(\bm{x}_a)$.
    (2) If $r \in K(\bm{x}_a)$, we have $x_{bn+1}^{(1)} = x_{bn+2}^{(1)} = \cdots = x_{bn+t}^{(1)} = \frac{x_{br}}{t}$, and $K(\bm{x}_a^{(1)}) = K(\bm{x}_a) \backslash \{r\} \cup \{n + 1, \cdots, n + t \}$.
    We substitute $K(\bm{x}_a^{(1)})$, $\bm{x}_b^{(1)}$ and $\bm{x}_a^{(1)}$ into Equation \eqref{sec2-lem-equ-br}, and it is easy to verify that Equation \eqref{sec2-lem-equ-br} is established.
    
    The aggregate utility from battlefields $\{1, 2, \cdots, r - 1, r + 1, \cdots, n\}$ is equal for players in both $\mathcal{G}$ and $\mathcal{G}^{(1)}$.
    Consider the last $t$ battlefields in $\mathcal{G}^{(1)}$, we have that 
    $$\sum_{j = n + 1}^{n + t} u_{bj}(\bm{x}_a^{(1)}, \bm{x}_b^{(1)}) = \sum_{j = n + 1}^{n + t} \frac{\frac{x_{br}}{t}}{\frac{x_{ar}}{t} + \frac{x_{br}}{t}} \frac{v_{br}}{t} = u_{br}(\bm{x}_a, \bm{x}_b).$$
    Similarly, we can also obtain $\sum_{j = n + 1}^{n + t} u_{aj}(\bm{x}_a^{(1)}, \bm{x}_b^{(1)}) = u_{ar}(\bm{x}_a, \bm{x}_b)$.
    Therefore, $u_a(\bm{x}_a^{(1)}, \bm{x}_b^{(1)}) = u_a(\bm{x}_a, \bm{x}_b)$.
\end{proof}

\begin{lemma}\label{lemma-highervalue-resource}
    In $\mathcal{G}^{(2)}$, the leader allocates at least the same amount of resources to the battlefields with higher leader's valuations in the optimal commitment.
\end{lemma}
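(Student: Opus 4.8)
The plan is to argue by contradiction with a resource‑swapping (exchange) argument adapted to the structure of $\mathcal{G}^{(2)}$. Suppose $\bm{x}_a$ is an optimal commitment and there exist battlefields $j,h$ with $v_{aj} > v_{ah}$ but $x_{aj} < x_{ah}$; the goal is to exhibit a feasible commitment with strictly larger leader utility, contradicting optimality.

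First I would record the features of $\mathcal{G}^{(2)}$ that make the swap transparent. Since $v_{bl}$ equals the same constant for every battlefield, Lemma~\ref{sec2-lemma-br} orders battlefields by $x_{al}$ ascending, and the follower's best response depends on $\bm{x}_a$ only through the sorted multiset $(x_{al})_{l\in[n]}$. Setting $C = (x_b + \sum_{l\in K(\bm{x}_a)} x_{al}) / \sum_{l\in K(\bm{x}_a)} \sqrt{x_{al}}$, one has $x_{al}+x_{bl} = C\sqrt{x_{al}}$ for $l\in K(\bm{x}_a)$, so the leader's utility on such a battlefield is $v_{al}\sqrt{x_{al}}/C$; moreover $x_{bl}>0$ forces $x_{al} < C^2$ for $l\in K(\bm{x}_a)$, whereas the defining inequality of $k^*$ in~\eqref{shorteststave} reduces algebraically to $x_{al}\ge C^2$ for $l\in\overline{K(\bm{x}_a)}$. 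Hence every abandoned battlefield carries at least as much leader budget as every contested one.

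Next I would split into cases. If $j,h\in K(\bm{x}_a)$, I swap $x_{aj}$ and $x_{ah}$: the multiset of leader allocations is unchanged, so $C$, the support $K$, and the follower's induced allocation are the same up to relabeling $j\leftrightarrow h$, and the leader's utility changes by $(v_{aj}-v_{ah})(\sqrt{x_{ah}}-\sqrt{x_{aj}})/C > 0$, a contradiction. If $j\in K(\bm{x}_a)$ and $h\in\overline{K(\bm{x}_a)}$, the same swap again leaves the full multiset unchanged, so $C$ is unchanged; now the new support is $(K(\bm{x}_a)\setminus\{j\})\cup\{h\}$, the new profile is feasible for~(\textbf{OC}) because $x_{aj}<C^2\le x_{ah}$ keeps all threshold constraints satisfied, and the leader's utility changes by $(v_{aj}-v_{ah})(1-\sqrt{x_{aj}}/C) > 0$, again a contradiction. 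These are the only cases compatible with $x_{aj}<x_{ah}$: the case $j\in\overline{K}$, $h\in K$ is impossible (then $x_{aj}\ge C^2 > x_{ah}$), and the case $j,h\in\overline{K}$ becomes vacuous once one knows that abandoned battlefields all receive exactly the minimal eviction level $C^2$ in an optimal commitment. Combining the cases gives $v_{aj} > v_{ah} \Rightarrow x_{aj}\ge x_{ah}$, as claimed.

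The hard part will be the mixed case: unlike a textbook exchange argument, swapping leader budgets across the $K$–$\overline{K}$ boundary \emph{reshuffles} the follower's support, so one must verify both that the resulting profile satisfies every constraint of~(\textbf{OC}) for the new support and that Lemma~\ref{sec2-lemma-br} genuinely certifies the claimed follower response to the new commitment. Both follow from the multiset‑invariance observed above, but the threshold bookkeeping — in particular the claim that in an optimum each abandoned battlefield receives \emph{exactly} $C^2$ (otherwise redirecting the surplus onto a contested battlefield strictly improves the leader), which also disposes of the $j,h\in\overline{K}$ case — is where the care is needed.
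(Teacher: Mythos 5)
Your proposal is correct and follows essentially the same route as the paper's proof: a contradiction argument that swaps the leader's allocations on the two battlefields, uses the fact that in $\mathcal{G}^{(2)}$ the follower values all battlefields equally so the best response (and hence the quantity you call $C$) depends only on the multiset of leader allocations, and concludes that the swap raises the leader's utility by $(v_{aj}-v_{ah})$ times a positive difference of leader shares. Your case bookkeeping across the $K$/$\overline{K}$ boundary is in fact more explicit than the paper's own proof, which substitutes the contested-battlefield best-response formulas without separating cases, and the exact-threshold fact you invoke for the both-abandoned case is precisely Lemma~\ref{sec4-lemma-leader-threshold} of the paper, so nothing essential is missing.
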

\begin{proof}
    We prove it by contradiction.
    Let $\bm{x}_{a}^{(2)}$ and $\bm{x}_{b}^{(2)}$ denote the leader's commitment and the follower's best response respectively. 
    If there are two distinct battlefields $j$ and $j'$ satisfying $x_{aj}^{(2)} > x_{aj'}^{(2)}$ and $v_{aj}^{(2)} < v_{aj'}^{(2)}$.
    Consider another commitment $\bm{\hat{x}}_{a}^{(2)}$ where $\hat{x}_{aj}^{(2)} = x_{aj'}^{(2)}$, $\hat{x}_{aj'}^{(2)} = x_{aj}^{(2)}$ and $\hat{x}_{ah}^{(2)} = x_{ah}^{(2)}$ for $\forall h \neq j, j'$.
    By Lemma \ref{sec2-lemma-br}, it is easy to verify that $\bm{\hat{x}}_{b}^{(2)} = BR(\bm{\hat{x}}_{a}^{(2)})$ where $\hat{x}_{bj}^{(2)} = x_{bj'}^{(2)}$, $\hat{x}_{bj'}^{(2)} = x_{bj}^{(2)}$ and $\hat{x}_{bh}^{(2)} = x_{bh}^{(2)}$ for $\forall h \neq j, j'$.
    In $\hat{\bm{x}}_{a}^{(2)}$, the leader and the follower allocate resources to battlefields other than $j$ and $j'$ in the same manner as in $\bm{x}_{a}^{(2)}$; hence, our analysis needs to focus solely on the aggregate utility at battlefields $j$ and $j'$.
    In both $\bm{x}_a^{(2)}$ and $\hat{\bm{x}}_a^{(2)}$, the difference in the aggregate utility from battlefield $j$ and battlefield $j'$ is as follows
    \begin{align}\label{equation-highvalue-resource}
        &\frac{x_{aj'}^{(2)} v_{aj}^{(2)}}{x_{aj'}^{(2)} + x_{bj'}^{(2)}} + \frac{x_{aj}^{(2)} v_{aj'}^{(2)}}{x_{aj}^{(2)} + x_{bj}^{(2)}} - \frac{x_{aj}^{(2)} v_{aj}^{(2)}}{x_{aj}^{(2)} + x_{bj}^{(2)}} - \frac{x_{aj'}^{(2)} v_{aj'}^{(2)}}{x_{aj'}^{(2)} + x_{bj'}^{(2)}}  \nonumber \\
        = &(v_{aj'}^{(2)} - v_{aj}^{(2)}) \left(\frac{x_{aj}^{(2)}}{x_{aj}^{(2)} + x_{bj}^{(2)}} - \frac{x_{aj'}^{(2)}}{x_{aj'}^{(2)} + x_{bj'}^{(2)}} \right).
    \end{align}
    By Equation \eqref{sec2-lem-equ-br}, we obtain $x_{bj}^{(2)}$ and $x_{bj'}^{(2)}$ respectively,
    
    \begin{equation*}
        \begin{cases}
            x_{bj}^{(2)} = \frac{(x_{aj}^{(2)})^\frac{1}{2}}{\sum_{l \in K(\bm{x}_a^{(2)})} (x_{al}^{(2)})^\frac{1}{2}}  (x_b + \sum_{l \in K(\bm{x}_a^{(2)})} x_{al}^{(2)}) - x_{aj}^{(2)};  \\
            x_{bj'}^{(2)} = \frac{(x_{aj'}^{(2)})^\frac{1}{2}}{\sum_{l \in K(\bm{x}_a^{(2)})} (x_{al}^{(2)})^\frac{1}{2}}  (x_b + \sum_{l \in K(\bm{x}_a^{(2)})} x_{al}^{(2)}) - x_{aj'}^{(2)}.
        \end{cases}
    \end{equation*}
    We substitute $x_{bj}^{(2)}$ and $x_{bj'}^{(2)}$ into Equation \eqref{equation-highvalue-resource}, and we find that the value of Equation \eqref{equation-highvalue-resource} is positive.
    Therefore, $u_a(\bm{x}_{a}^{(2)}, \bm{x}_{b}^{(2)}) < u_a(\bm{\hat{x}}_{a}^{(2)}, \bm{\hat{x}}_{b}^{(2)})$.
    This immediately implies that for the leader, allocating at least the same amount of resources to the battlefields with higher value within his optimal commitment.
\end{proof}

\begin{lemma}\label{lemma-equalvalue-resource}
    In $\mathcal{G}^{(2)}$, the leader allocates equal resources to battlefields with equal leader's valuations in the optimal commitment.
\end{lemma}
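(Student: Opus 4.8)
The plan is to argue by contradiction. Suppose the leader's optimal commitment $\bm{x}_{a}^{(2)}$ in $\mathcal{G}^{(2)}$ assigns $x_{aj}^{(2)} \neq x_{aj'}^{(2)}$ to two battlefields with $v_{aj}^{(2)} = v_{aj'}^{(2)} =: v_a$; without loss of generality $x_{aj}^{(2)} > x_{aj'}^{(2)}$. Recall that by construction of $\mathcal{G}^{(2)}$ the follower values every battlefield equally, say $v_{bl}^{(2)} = v_b$ for all $l$, so the ordering used in Lemma \ref{sec2-lemma-br} is simply by $x_{al}^{(2)}$ ascending and $K := K(\bm{x}_{a}^{(2)})$ is a prefix in that order. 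First I would record the payoff that the follower's best response induces on the leader: setting $P = x_b + \sum_{l \in K} x_{al}^{(2)}$ and $Q = \sum_{l \in K} \sqrt{x_{al}^{(2)}}$, equation \eqref{sec2-lem-equ-br} gives $x_{al}^{(2)} + x_{bl}^{(2)} = P \sqrt{x_{al}^{(2)}} / Q$ for every $l \in K$, so the leader earns $(Q/P)\, v_{al}^{(2)} \sqrt{x_{al}^{(2)}}$ on each $l \in K$ and the full value $v_{al}^{(2)}$ on each $l \in \overline{K}$.

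Next I would split into cases by the position of $j$ and $j'$ relative to $K$. If $j, j' \in \overline{K}$, then since $v_{bj}^{(2)} = v_{bj'}^{(2)}$ the threshold of Lemma \ref{sec4-lemma-leader-threshold} is identical for the two, forcing $x_{aj}^{(2)} = x_{aj'}^{(2)}$, a contradiction. If $j \in K$, then $x_{aj}^{(2)} > x_{aj'}^{(2)}$ makes the ratio $v_b / x_{aj'}^{(2)}$ at least as large as $v_b / x_{aj}^{(2)}$, and since $K$ is a prefix we get $j' \in K$ as well. So the essential case is $j, j' \in K$, together with the mixed case $j \in \overline{K}$, $j' \in K$ discussed below.

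For $j, j' \in K$, fix every $x_{al}^{(2)}$ with $l \notin \{j, j'\}$ together with the sum $s = x_{aj}^{(2)} + x_{aj'}^{(2)}$, and vary only the split of $s$ between the two battlefields. Then $P$ stays equal to a constant $P_0$, while $Q = t + Q_1$ with $t = \sqrt{x_{aj}} + \sqrt{x_{aj'}}$ and $Q_1 = \sum_{l \in K \setminus \{j,j'\}} \sqrt{x_{al}^{(2)}}$ constant; substituting into the payoff formula, the leader's utility equals $\frac{1}{P_0}(t + Q_1)(v_a t + S_1)$ plus the constant $\sum_{l \in \overline{K}} v_{al}^{(2)}$, where $S_1 = \sum_{l \in K \setminus \{j,j'\}} v_{al}^{(2)} \sqrt{x_{al}^{(2)}}$. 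This is a quadratic in $t$ with positive leading and linear coefficients, hence strictly increasing in $t > 0$, and by strict concavity of the square root $t$ is maximized over $x_{aj} + x_{aj'} = s$ uniquely at $x_{aj} = x_{aj'} = s/2$. It then remains to check that this re-split does not alter the support: the common ratio $v_b/(s/2)$ lies strictly between $v_b / x_{aj}^{(2)}$ and $v_b / x_{aj'}^{(2)}$, hence strictly above every ratio of a battlefield in $\overline{K}$, and one verifies that the inequalities defining $k^*$ in \eqref{shorteststave} persist, so $K$ is preserved and the formula above is valid at the new allocation. This produces a strict improvement over $\bm{x}_{a}^{(2)}$, contradicting optimality. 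I expect the bookkeeping that the follower's support $K$ is invariant under the re-allocation to be the main obstacle — especially the mixed case $j \in \overline{K}$, $j' \in K$, where equalizing the two budgets may itself push a battlefield across the $K$-boundary, so that one must compare the leader's utility across two different supports using Lemma \ref{sec4-lemma-leader-threshold} rather than within a single one; the square-root concavity argument that drives the contradiction is the easy core.
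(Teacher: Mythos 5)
Your core computation is sound and matches the paper's key insight: for $l \in K$ the leader's reduced payoff is $\frac{Q}{P}\sum_{l\in K} v_{al}\sqrt{x_{al}}$ with $P = x_b + \sum_{l\in K}x_{al}$, $Q=\sum_{l\in K}\sqrt{x_{al}}$, and equalizing two equal-value battlefields increases $\sqrt{x_{aj}}+\sqrt{x_{aj'}}$ by strict concavity, which raises this expression \emph{if the support stays $K$}. The genuine gap is exactly the step you wave through with ``one verifies that the inequalities defining $k^*$ persist, so $K$ is preserved'': that claim is false in general. Equalizing within $K$ keeps $P$ fixed but strictly increases $Q$, so the follower's water level rises and the abandonment threshold $P^2/Q^2$ falls; a \emph{third} battlefield of $K$ whose allocation sits just below the old threshold can therefore drop out of the follower's support. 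Concretely, with all follower values $1$, allocations $(1,9,100)$ and $x_b=33$: before the re-split, $33+110=143 > 10\,(1+3+10)=140$, so all three battlefields are in the support; after replacing $(1,9)$ by $(5,5)$, $143 < 10\,(2\sqrt{5}+10)\approx 144.7$, so the third battlefield is abandoned. Once the support changes, your closed-form comparison no longer applies, and since the game is not constant-sum there is no cheap monotonicity argument that the leader's true utility still exceeds the old one. This is precisely why the paper does not jump to the endpoint: it moves the transfer continuously via a parameter $\delta$, observes that the support can only shrink along the path, and on each interval of constant support shows $\frac{\partial \tilde{u}_a}{\partial x_{aj}} > \frac{\partial \tilde{u}_a}{\partial x_{aj'}}$, so the utility increases piecewise and, by continuity of the best response, overall.

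A second, smaller issue: you explicitly leave the mixed case ($j\in\overline{K}$ with the larger allocation, $j'\in K$) unresolved, whereas the paper's case (3) handles it with the same two-stage path argument (support unchanged up to some $\delta^*$, then a case-(1)-type improvement). On the other hand, your treatment of the case $j,j'\in\overline{K}$ via Lemma \ref{sec4-lemma-leader-threshold} (equal follower values force equal threshold allocations) is clean, non-circular, and arguably sharper than the paper's corresponding case (2), which only concludes that the utility is unchanged. So: right reduction and right inequality at the heart, but the support-invariance assertion is a real hole that needs the continuity/piecewise-derivative bookkeeping (or some substitute argument comparing utilities across different supports) to close.
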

\begin{proof}
    We prove it by contradiction. 
    Let $j$ denote the battlefield with the smallest index such that $x_{aj}^{(2)} < x_{aj+1}^{(2)}$ and $v_{aj}^{(2)} = v_{aj+1}^{(2)}$.
    Consider another commitment $\bm{\hat{x}}_{a}^{(2)}$ where $\hat{x}_{aj}^{(2)} = \hat{x}_{aj+1}^{(2)} = \frac{1}{2} (x_{aj}^{(2)} + x_{aj+1}^{(2)})$ and $\hat{x}_{ah}^{(2)} = x_{ah}^{(2)}$ for $\forall h \neq j, j+1$.
    Accordingly, let $\bm{\hat{x}}_{b}^{(2)} = BR(\bm{\hat{x}}_{a}^{(2)})$.
    Note that if battlefield $j$ is not in $K(\bm{x}_a^{(2)})$, then battlefield $j+1$ is also not in $K(\bm{x}_a^{(2)})$.
    Therefore, there are three cases for $K(\bm{x}_a^{(2)})$, namely (1) $j \in K(\bm{x}_a^{(2)})$ and $j + 1 \in K(\bm{x}_a^{(2)})$, (2) $j \notin K(\bm{x}_a^{(2)})$ and $j + 1 \notin K(\bm{x}_a^{(2)})$, and (3) $j \in K(\bm{x}_a^{(2)})$ and $j + 1 \notin K(\bm{x}_a^{(2)})$.
    For each case, we discuss the changes between $K(\bm{x}_a^{(2)})$ and $K(\bm{\hat{x}}_a^{(2)})$, and compare $u_a(\bm{x}_{a}^{(2)}, \bm{x}_{b}^{(2)})$ and $u_a(\bm{\hat{x}}_{a}^{(2)}, \bm{\hat{x}}_{b}^{(2)})$. 
    
    For the case (1): according to Lemma \ref{sec2-lemma-br}, we have
    \begin{equation}\label{equation-sec2-lemma-br-k}
        k^* = \max\{k \in [n_2]: x_b + \sum_{l = 1}^{j} x_{al}^{(2)} > (x_{aj}^{(2)})^{\frac{1}{2}} \sum_{l = 1}^{j} (x_{al}^{(2)})^{\frac{1}{2}}, \, \forall j \leq k  \}.
    \end{equation}
    Due to $j \in K(\bm{x}_a^{(2)})$ and $j + 1 \in K(\bm{x}_a^{(2)})$, we immediately obtain the following inequality by Equation \eqref{equation-sec2-lemma-br-k},
    \begin{equation*}
        \left\{
            \begin{aligned}
                x_b + \sum_{l = 1}^{j-1} x_{al}^{(2)} + x_{aj}^{(2)} > &(x_{aj}^{(2)})^{\frac{1}{2}} \left( \sum_{l = 1}^{j-1} (x_{al}^{(2)})^{\frac{1}{2}} + (x_{aj}^{(2)})^{\frac{1}{2}} \right); \\
                x_b + \sum_{l = 1}^{j-1} x_{al}^{(2)} + x_{aj}^{(2)} + x_{aj+1}^{(2)} > &(x_{aj+1}^{(2)})^{\frac{1}{2}} \left( \sum_{l = 1}^{j-1} (x_{al}^{(2)})^{\frac{1}{2}} + (x_{aj}^{(2)})^{\frac{1}{2}} + (x_{aj+1}^{(2)})^{\frac{1}{2}} \right).
            \end{aligned}
        \right.
    \end{equation*}
    Due to $x_{aj}^{(2)} < x_{aj+1}^{(2)}$, we can derive the following inequality
    \begin{equation*}
        \left\{
            \begin{aligned}
                x_b + \sum_{l = 1}^{j-1} x_{al}^{(2)} + \frac{x_{aj}^{(2)} + x_{aj+1}^{(2)}}{2} &> \left(\frac{x_{aj}^{(2)} + x_{aj+1}^{(2)}}{2}\right)^{\frac{1}{2}} \left( \sum_{l = 1}^{j-1} \left(x_{al}^{(2)}\right)^{\frac{1}{2}} + \left(\frac{x_{aj}^{(2)} + x_{aj+1}^{(2)}}{2}\right)^{\frac{1}{2}} \right); \\
                x_b + \sum_{l = 1}^{j-1} x_{al}^{(2)} + x_{aj}^{(2)} + x_{aj+1}^{(2)} &> \left(\frac{x_{aj}^{(2)} + x_{aj+1}^{(2)}}{2}\right)^{\frac{1}{2}} \left( \sum_{l = 1}^{j-1} \left(x_{al}^{(2)}\right)^{\frac{1}{2}} + 2\left(\frac{x_{aj}^{(2)} + x_{aj+1}^{(2)}}{2}\right)^{\frac{1}{2}} \right).
            \end{aligned}
        \right.
    \end{equation*}
    The above inequality implies that $j \in K(\bm{\hat{x}}_a^{(2)})$ and $j+1 \in K(\bm{\hat{x}}_a^{(2)})$.
    
    We analyze the changes in the leader's utility from $\bm{x}_a^{(2)}$ to $\bm{\hat{x}_a}^{(2)}$.
    Consider $\bm{x}_a^{(2)\delta} = (x_{a1}^{(2)}, \cdots, x_{aj}^{(2)} + \delta, x_{aj+1}^{(2)} - \delta, \cdots, x_{an_2}^{(2)})$.
    If $\delta = \frac{x_{aj+1}^{(2)} - x_{aj}^{(2)}}{2}$, we have $\bm{\hat{x}}_a^{(2)} = \bm{x}_a^{(2)\delta}$. 
    Let $K(\bm{x}_a^{(2)\delta})$ denote the battlefield set corresponding to $\bm{x}_a^{(2)\delta}$ and $\bm{x}_b^{(2)\delta} = BR(\bm{x}_a^{(2)\delta})$. 
    Due to Equation \eqref{equation-sec2-lemma-br-k}, we can know that the set $K(\bm{x}_a^{(2)\delta})$ does not expand as $\delta$ increases. 
    Suppose that (i) $\delta_1 \leq \delta < \delta_2$, we have $K(\bm{x}_a^{(2)\delta}) = K(\bm{x}_a^{(2)\delta_1})$, and (ii) when $r = |K(\bm{x}_a^{(2)\delta_2})|$, we have $x_b + \sum_{l = 1}^{r} x_{al}^{(2)\delta_2} = (x_{ar}^{(2)\delta_2})^{\frac{1}{2}} \sum_{l = 1}^{r}(x_{al}^{(2)\delta_2})^{\frac{1}{2}}$, which means that the battlefield $\{r\} = K(\bm{x}_a^{(2)\delta}) \backslash K(\bm{x}_a^{(2)\delta_2})$ where $\delta \in [\delta_1, \delta_2)$. 
    Note that both $\delta = \delta_2$ and $\delta_1 \leq \delta < \delta_2$ work for Equation \eqref{equation-sec2-lemma-br-k} because of continuity of $\bm{x}_a^{(2)}$, which can be viewed as battlefield $r$ is in the set $K(\bm{x}_a^{(2)\delta_2})$ but $\bm{x}_{br}^{(2)\delta_2}$ is exactly zero. 
    According to Equation \eqref{sec2-equ-ui}, we have the leader's aggregate utility $\sum_{j \in K(\bm{x}_a^{(2)\delta})} \frac{x_{aj}^{(2)\delta} v_{aj}^{(2)}}{x_{aj}^{(2)\delta} + x_{bj}^{(2)\delta}}$ in set $K(\bm{x}_a^{(2)\delta})$, and we substitute the follower's best response into the leader's aggregate utility in the set $K(\bm{x}_a^{(2)\delta})$ by Equation \eqref{sec2-lem-equ-br}, obtaining that
    \begin{align*}
        \tilde{u}_a(\bm{x}_a^{(2)\delta}) &= \left( \sum_{j \in K(\bm{x}_a^{(2)\delta})} (\frac{x_{aj}^{(2)\delta}}{v_{bj}^{(2)}})^{\frac{1}{2}} \cdot v_{aj}^{(2)} \right) \left( \frac{\sum_{j \in K(\bm{x}_a^{(2)\delta})} (x_{aj}^{(2)\delta} \cdot v_{bj}^{(2)})^\frac{1}{2}}{\sum_{j \in K(\bm{x}_a^{(2)\delta})} x_{aj}^{(2)\delta} + x_b} \right)  \\
        &= \left( \sum_{j \in K(\bm{x}_a^{(2)\delta})} (x_{aj}^{(2)\delta})^{\frac{1}{2}} \cdot v_{aj}^{(2)} \right)  \cdot \left( \frac{\sum_{j \in K(\bm{x}_a^{(2)\delta})} (x_{aj}^{(2)\delta})^\frac{1}{2}}{\sum_{j \in K(\bm{x}_a^{(2)\delta})} x_{aj}^{(2)\delta} + x_b} \right)
    \end{align*}
    We take the partial derivative of $\tilde{u}_a(\bm{x}_a^{(2)\delta})$ with regard to $\bm{x}_{aj}^{(2)\delta}$, that is
    \begin{align*}
        \frac{\partial \tilde{u}_a(\bm{x}_a^{(2)\delta})}{\partial x_{aj}^{(2)\delta}} = & \frac{1}{x_b + \sum_{j \in K(\bm{x}_a^{(2)\delta})} x_{aj}^{(2)\delta}} \bigg[ \frac{1}{2} \left(\sum_{j \in K(\bm{x}_a^{(2)\delta})}(x_{aj}^{(2)\delta})^\frac{1}{2}\right) v_{aj}^{(2)} \times (x_{aj}^{(2)\delta})^{-\frac{1}{2}} \\
        & + (\sum_{j \in K(\bm{x}_a^{(2)\delta})} v_{aj}^{(2)} \sqrt{x_{aj}^{(2)\delta}}) \times \left( \frac{\frac{1}{2} (x_{aj}^{(2)\delta})^{-\frac{1}{2}} (x_b + \sum_{j \in K(\bm{x}_a^{(2)\delta})} x_{aj}^{(2)\delta}) - \sum_{j \in K(\bm{x}_a^{(2)\delta})} (x_{aj}^{(2)\delta})^\frac{1}{2}}{x_b + \sum_{j \in K(\bm{x}_a^{(2)\delta})} x_{aj}^{(2)\delta}} \right) \bigg]
    \end{align*}
    We can find that when $\delta \in [\delta_1, \delta_2]$, $\frac{\partial \tilde{u}_a(\bm{x}_a^{(2)\delta})}{\partial \bm{x}_{aj}^{(2)\delta}} > \frac{\partial \tilde{u}_a(\bm{x}_a^{(2)\delta})}{\partial \bm{x}_{aj+1}^{(2)\delta}}$. Therefore, the utility of the leader is improved from $\bm{x}_a^{(2)\delta_1}$ to $\bm{x}_a^{(2)\delta_2}$. We keep increasing $\delta$ until it equals $\frac{x_{aj+1}^{(2)} - x_{aj}^{(2)}}{2}$, and the leader's utility is increasing. 

    For the case (2): according to Equation \eqref{equation-sec2-lemma-br-k}, we find that the set $K(\bm{x}_a^{(2)}) = K(\bm{\hat{x}}_a^{(2)})$. Given that $x_{aj'}^{(2)} = \hat{x}_{aj'}^{(2)}$ for $\forall j' \in K(\bm{x}_a^{(2)})$, it follows that the follower's best response remains unchanged, consequently, the leader's utility remains unchanged.

    For the case (3): since the set $K(\bm{x}_a^{(2)})$ will not expand, the set $K(\bm{x}_a^{(2)\delta})$ will undergo following change process: as $\delta$ increases from 0 up to a certain threshold $\delta^*$, we have $K(\bm{x}_a^{(2)}) = K(\bm{x}_a^{(2)\delta})$. However, if $\delta$ continues to increase beyond $\delta^*$, the set becomes $K(\bm{x}_a^{(2)}) \backslash \{|K(\bm{x}_a^{(2)})|\} = K(\bm{x}_a^{(2)\delta})$. During the first stage of changes, the situation is analogous to case (2), and thus the utility remains unchanged. In the second stage of changes, the situation mirror case (1), leading to an increase in utility.
    This immediately implies that for the leader, allocating equal resources to the battlefields with equal valuation within his optimal commitment.
\end{proof}

\begin{lemma}\label{lemma-merging}
    In $\mathcal{G}^{(2)}$, if $v_{ar}^{(2)} = v_{a(r+1)}^{(2)} = \cdots = v_{a(r+t-1)}^{(2)}$ and the leader's commitment is $\bm{x}_a^{(2)}$ where $x_{aj}^{(2)} = x_{ah}^{(2)}$ for $\forall j, h \in \{r, r+1, \cdots, r+t-1 \}$.
    Then in $\mathcal{G}^{(3)}$, for the commitment $\bm{x}_{a}^{(3)}$ where $x_{ar}^{(3)} = t x_{ar}^{(2)}$, $x_{aj}^{(3)} = x_{aj}^{(2)}$ for $\forall j \in \{1, 2, \cdots, r-1 \}$ and $x_{ah-t+1}^{(3)} = x_{ah}^{(2)}$ for $\forall h \in \{r+t, r+t+1, \cdots, n_2 \}$, we have $\bm{x}_{b}^{(3)} = BR(\bm{x}_{a}^{(3)})$ where $x_{br}^{(3)} = t x_{br}^{(2)}$ and $x_{bj}^{(3)} = x_{bj}^{(2)}$ for $\forall j \in \{1, 2, \cdots, r-1 \}$ and $x_{bh-t+1}^{(3)} = x_{bh}^{(2)}$ for $\forall h \in \{r+t, r+t+1, \cdots, n_2 \}$.
    Furthermore, $u_a(\bm{x}_{a}^{(2)}, \bm{x}_{b}^{(2)}) = u_a(\bm{x}_{a}^{(3)}, \bm{x}_{b}^{(3)})$.
\end{lemma}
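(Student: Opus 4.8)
The plan is to verify directly, via the best-response characterization of Lemma~\ref{sec2-lemma-br}, that the profile $\bm{x}_b^{(3)}$ claimed in the statement is the follower's unique best response in $\mathcal{G}^{(3)}$, and then to check utility preservation battlefield by battlefield. Conceptually this is the inverse of the splitting operation of Lemma~\ref{lemma-splitting}, so I would mirror that proof. Write $B := \{r, r+1, \dots, r+t-1\}$ for the block being merged, set $\xi := x_{ar}^{(2)}$ (the common leader allocation on $B$, available by hypothesis) and $\nu := v_{br}^{(2)}$ (the follower's value, which in $\mathcal{G}^{(2)}$ is the same on every battlefield). In $\mathcal{G}^{(3)}$ the merged battlefield $r$ carries $x_{ar}^{(3)} = t\xi$, $v_{ar}^{(3)} = t\,v_{ar}^{(2)}$, $v_{br}^{(3)} = t\nu$; the budgets $x_a,x_b$ are unchanged; and the remaining battlefields are only reindexed (old index $h>r+t-1$ becoming $h-t+1$, indices $<r$ staying put), with the block collapsed to index $r$.

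First I would record the elementary identities that make everything go through: $\frac{v_{br}^{(3)}}{x_{ar}^{(3)}} = \frac{\nu}{\xi}$, so the defining ratio of the merged battlefield equals that of each sub-battlefield and the ordering required in Lemma~\ref{sec2-lemma-br} is inherited by $\mathcal{G}^{(3)}$ with the merged battlefield taking the slot of the whole block; $x_{ar}^{(3)} = t\xi = \sum_{l\in B} x_{al}^{(2)}$; and $\bigl(x_{ar}^{(3)} v_{br}^{(3)}\bigr)^{1/2} = t\sqrt{\xi\nu} = \sum_{l\in B}\bigl(x_{al}^{(2)} v_{bl}^{(2)}\bigr)^{1/2}$. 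The last two identities show that every cumulative sum $\sum_{l\le j} x_{al}$ and $\sum_{l\le j}(x_{al}v_{bl})^{1/2}$ occurring in \eqref{shorteststave} and \eqref{sec2-lem-equ-br}, once evaluated at an index past the block, has the same numerical value in $\mathcal{G}^{(3)}$ (under the reindexing) as in $\mathcal{G}^{(2)}$. From this I would conclude that the threshold $k^*$, and hence $K(\bm{x}_a^{(3)})$, is exactly the image of $K(\bm{x}_a^{(2)})$ under the reindexing, and that substituting these sums into \eqref{sec2-lem-equ-br} reproduces precisely the claimed $\bm{x}_b^{(3)}$: $x_{br}^{(3)} = t\,x_{br}^{(2)}$ on the merged battlefield, and the reindexed old values elsewhere.

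The step I expect to be the real obstacle — everything else being bookkeeping — is showing that the merge is well defined, namely that the cutoff $k^*$ of Lemma~\ref{sec2-lemma-br} in $\mathcal{G}^{(2)}$ never lands strictly inside $B$, so that $B$ is either entirely in $K(\bm{x}_a^{(2)})$ or entirely outside it. Here I would substitute $x_{al}^{(2)}=\xi$ and $v_{bl}^{(2)}=\nu$ into the strict inequality defining $k^*$ in \eqref{shorteststave} for an index $j\in B$; after cancellation it collapses to the single condition $\sqrt{\nu/\xi}\,\bigl(x_b + \sum_{l=1}^{r-1} x_{al}^{(2)}\bigr) > \sum_{l=1}^{r-1}\bigl(x_{al}^{(2)} v_{bl}^{(2)}\bigr)^{1/2}$, which no longer depends on which $j\in B$ was taken — so the block is ``all-or-nothing''. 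Running the same computation at $j=r$ in $\mathcal{G}^{(3)}$ (now with $x_{ar}^{(3)}=t\xi$, $v_{br}^{(3)}=t\nu$) gives the identical inequality, so $r\in K(\bm{x}_a^{(3)})$ exactly when $B\subseteq K(\bm{x}_a^{(2)})$. If $B$ lies outside $K(\bm{x}_a^{(2)})$ then $x_{br}^{(3)}=0=t\cdot 0$; if $B\subseteq K(\bm{x}_a^{(2)})$ then the $x_{bl}^{(2)}$ for $l\in B$ are all equal (identical inputs in \eqref{sec2-lem-equ-br}), and the two identities above turn \eqref{sec2-lem-equ-br} for battlefield $r$ into $x_{br}^{(3)} = t\,x_{br}^{(2)}$.

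Finally I would close with the utility comparison. Outside the block the allocations of both players are literally unchanged, so the per-battlefield leader utilities agree; on the merged battlefield,
\begin{align*}
\frac{x_{ar}^{(3)} v_{ar}^{(3)}}{x_{ar}^{(3)} + x_{br}^{(3)}}
&= \frac{t\xi\cdot t\,v_{ar}^{(2)}}{t\xi + t\,x_{br}^{(2)}}
= \frac{t\,\xi\,v_{ar}^{(2)}}{\xi + x_{br}^{(2)}}
= \sum_{l\in B}\frac{x_{al}^{(2)} v_{al}^{(2)}}{x_{al}^{(2)} + x_{bl}^{(2)}},
\end{align*}
using $v_{al}^{(2)}=v_{ar}^{(2)}$, $x_{al}^{(2)}=\xi$ and $x_{bl}^{(2)}=x_{br}^{(2)}$ for $l\in B$. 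Summing over all battlefields then gives $u_a(\bm{x}_a^{(3)},\bm{x}_b^{(3)}) = u_a(\bm{x}_a^{(2)},\bm{x}_b^{(2)})$; the analogous computation handles the follower's utility if that is wanted. The crux, as indicated, is that the water-filling threshold in \eqref{shorteststave} is constant across a block of battlefields on which both the follower's value and the leader's allocation are constant — once that is in hand, the rest is the same mapping argument as in Lemma~\ref{lemma-splitting}.
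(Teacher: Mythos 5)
Your proposal is correct and takes essentially the same route as the paper: verify the claimed follower allocation by substituting into the best-response characterization of Lemma~\ref{sec2-lemma-br} (with the two cases where the merged block lies inside or outside $K(\bm{x}_a^{(2)})$), and then confirm utility preservation battlefield by battlefield using the equal per-battlefield allocations on the block. Your explicit cancellation argument showing that the water-filling cutoff in \eqref{shorteststave} cannot split the block is a detail the paper's proof leaves implicit, but it is a fill-in of the same argument rather than a different approach.
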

\begin{proof}
    Similar to the proof method outlined in Lemma \ref{lemma-splitting}, there exists a set $K(\bm{x}_a^{(2)})$ according to Lemma \ref{sec2-lemma-br}. There are two cases: (1) if $x_{br}^{(2)} = 0$, it follows that $x_{br}^{(3)} = 0$, and consequently, $K(\bm{x}_a^{(2)}) = K(\bm{x}_a^{(3)})$, and (2) if $x_{br}^{(2)} > 0$, it follows that $x_{br}^{(3)} > 0$, and consequently, $K(\bm{x}_a^{(3)}) = K(\bm{x}_a^{(2)}) \backslash \{ r + 1, \cdots, r + t - 1\}$. Substituting $K(\bm{x}_a^{(3)})$, $\bm{x}_b^{(3)}$ and $\bm{x}_a^{(3)}$ into Equation \eqref{sec2-lem-equ-br}, it is straightforward to confirm that Equation \eqref{sec2-lem-equ-br} holds. By Equation \eqref{sec2-equ-uij}, we find that $\sum_{j = r}^{r+t-1)} u_{aj}(x_{aj}^{(2)}, x_{bj}^{(2)}) = u_{ar}(x_{ar}^{(3)}, x_{br}^{(3)})$, which leads to the conclusion $u_a(\bm{x}_a^{(2)}, \bm{x}_b^{(2)}) = u_a(\bm{x}_a^{(3)}, \bm{x}_b^{(3)})$.
\end{proof}

{\noindent \bf Proof of Theorem \ref{theorem-characterization}}
\begin{proof}
    Let $\mathcal{G} := \langle \{a, b\}, [n], x_a, x_b, (v_{aj})_{j = 1}^{n}, (v_{bj})_{j = 1}^{n} \rangle$ denote this game.
    Then we split every battlefield, and the resulting auxiliary game is denoted as 
    $$\mathcal{G}^{(2)} := \langle \{a, b\}, [n_2], x_a, x_b, (v_{aj}^{(2)})_{j = 1}^{n_2}, (v_{bj}^{(2)})_{j = 1}^{n_2} \rangle.$$
    In $\mathcal{G}^{(2)}$, the follower's valuation for each battlefield is the same, and for the leader, for a certain battlefield $r$ in $\mathcal{G}$ and the sub-battlefields in $\mathcal{G}^{(2)}$ corresponding to battlefield $r$, the leader's valuation of these sub-battlefields is the same.
    Let $\bm{x}^*_a$ denote the leader's optimal commitment in $\mathcal{G}$.
    Let $\bm{x}_a^{*(2)}$ represent the leader's commitment in $\mathcal{G}^{(2)}$ that corresponds to $\bm{x}^*_a$.
    Let $\bm{x}_a^{(2)*}$ denote the optimal commitment in $\mathcal{G}^{(2)}$.
    According to Lemma \ref{lemma-splitting}, we obtain
    $$
    u_a(x_a^*, BR(x_a^*)) = u_a(\bm{x}_a^{*(2)}, BR(\bm{x}_a^{*(2)})) \leq u_a(\bm{x}_a^{(2)*}, BR(\bm{x}_a^{(2)*}))
    $$
    Due to $\frac{v_{a1}}{v_{b1}} \leq \frac{v_{a2}}{v_{b2}} \leq \cdots \leq \frac{v_{an}}{v_{bn}}$, it follows that in $\mathcal{G}^{(2)}$, we have $v_{a1}^{(2)} \leq v_{a2}^{(2)} \leq \cdots \leq v_{an_2}^{(2)}$.
    According to Lemma \ref{lemma-highervalue-resource}, it can be deduced that $ x_{a1}^{(2)*} \leq x_{a2}^{(2)*} \leq \cdots \leq x_{an_2}^{(2)*}$.
    According to Lemma \ref{lemma-equalvalue-resource}, we can infer that if $v_{aj}^{(2)} = v_{aj+1}^{(2)}$, then $x_{aj}^{(2)*} = x_{aj+1}^{(2)*}$, for all $j \in [n_2 - 1]$.
    For any battlefield \( r \) in \( \mathcal{G} \), we identify the corresponding battlefields in \( \mathcal{G}^{(2)} \) and merge these battlefields into a single one.
    By merging battlefields, we can eventually restore \( \mathcal{G}^{(2)} \) to the original game \( \mathcal{G} \).
    Let $\Tilde{\bm{x}}_a^*$ denote the commitment in $\mathcal{G}$ that corresponds to $\bm{x}_a^{(2)*}$. Note that $\Tilde{\bm{x}}_a^*$ is a commitment in $\mathcal{G}$.
    According to Lemma \ref{lemma-merging}, we can deduce that
    $$
    u_a(\bm{x}_a^{(2)*}, BR(\bm{x}_a^{(2)*})) = u_a(\Tilde{\bm{x}}_a^*, BR(\Tilde{\bm{x}}_a^*))
    $$
    Due to $\bm{x}^*_a$ is a optimal commitment in $\mathcal{G}$, we have
    $$
    u_a(\Tilde{\bm{x}}_a^*, BR(\Tilde{\bm{x}}_a^*)) \leq u_a(\bm{x}^*_a, BR(\bm{x}^*_a))
    $$
    Finally, we have following relational expression
    \begin{align*}
        u_a(x_a^*, BR(x_a^*)) &= u_a(\bm{x}_a^{*(2)}, BR(\bm{x}_a^{*(2)})) \\
        &\leq u_a(\bm{x}_a^{(2)*}, BR(\bm{x}_a^{(2)*})) \\
        &= u_a(\Tilde{\bm{x}}_a^*, BR(\Tilde{\bm{x}}_a^*)) \\
        &\leq u_a(\bm{x}^*_a, BR(\bm{x}^*_a))
    \end{align*}
    This implies that
    \begin{align*}
        u_a(x_a^*, BR(x_a^*)) &= u_a(\bm{x}_a^{*(2)}, BR(\bm{x}_a^{*(2)})) \\
        &= u_a(\bm{x}_a^{(2)*}, BR(\bm{x}_a^{(2)*})) \\
        &= u_a(\Tilde{\bm{x}}_a^*, BR(\Tilde{\bm{x}}_a^*)) \\
        &= u_a(\bm{x}^*_a, BR(\bm{x}^*_a))
    \end{align*}
    Thus, the optimal commitment \( x_a^* \) in \( \mathcal{G} \), and its corresponding \( \bm{x}_a^{*(2)} \), is the optimal commitment in \( \mathcal{G}^{(2)} \).
    For battlefield \( j \) in \( \mathcal{G} \) where \( j \in [n] \), the ratio \( \frac{x_{aj}^*}{v_{bj}} \) remains unchanged before and after the battlefield is split and merged.
    According to Lemma \ref{lemma-highervalue-resource}, \( x_{a1}^{(2)*} \leq x_{a2}^{(2)*} \leq \cdots \leq x_{an_2}^{(2)*} \), we can deduce that for all \( j \in [n] \):
    $$
    \frac{x_{a1}^*}{v_{b1}} \leq \frac{x_{a2}^*}{v_{b2}} \leq \cdots \leq \frac{x_{an}^*}{v_{bn}}
    $$
    Hence,
    $$
    \frac{v_{b1}}{x_{a1}^*} \geq \frac{v_{b2}}{x_{a2}^*} \geq \cdots \geq \frac{v_{bn}}{x_{an}^*}
    $$
    According to Lemma \ref{sec2-lemma-br}, we can deduce that there are \( n \) possible values for \( K \), which are $ \{1\}, \{1,2\}, \ldots,$  \\  $\{1,2,3, \ldots, n\}$.
    In the game \( \mathcal{G} \), for \( j, h \in [n] \), if \( \frac{v_{aj}}{v_{bj}} = \frac{v_{ah}}{v_{bh}} \), then the leader allocates the same amount of resources to the corresponding sub-battlegrounds of battlefields \( j \) and \( h \), respectively.
    Therefore, we have \( \frac{x_{aj}^*}{x_{ah}^*} = \frac{v_{aj}}{v_{ah}} \).
\end{proof}

\section{OMITTED PROOFS FROM SECTION 4}\label{appendix-section4}
{\bf Proof of Lemma \ref{sec4-lemma-leader-threshold}}
\begin{proof}
    By Lemma \ref{sec2-lemma-br}, we have $\frac{v_{bj}}{x_{aj}} \leq \frac{\left( \sum_{l \in K} (x_{al} v_{bl})^{\frac{1}{2}} \right)^2}{\left( x_b + \sum_{l \in K} x_{al} \right)^2}$ for $j \in \overline{K}$ and rearrange it, we have $x_{aj} \geq \frac{v_{bj} \left( x_b + \sum_{l \in K} x_{al} \right)^2}{\left( \sum_{l \in K} (x_{al} v_{bl})^{\frac{1}{2}} \right)^2}$.
    Below we show that for a given $K$, $x_{aj}$ cannot be greater than $\frac{v_{bj} \left( x_b + \sum_{l \in K} x_{al} \right)^2}{\left( \sum_{l \in K} (x_{al} v_{bl})^{\frac{1}{2}} \right)^2}$ within the leader's optimal commitment for $j \in \overline{K}$.

    We choose a battlefield $j \in \overline{K}$ such that $j \in \mathop{\arg\max}\limits_{l \in \overline{K}} x_{al}$.
    Then we transfer an amount $\varepsilon$ of resources from battlefield $j$ to a certain battlefield in $K$, while transferring an amount $\varepsilon$ of resources to other battlefields in $\overline{K}$.
    The purpose of transferring resources to other battlefields in $\overline{K}$ is to ensure that $K$ remains unchanged when transferring resources to a certain battlefield in $K$.
    We now prove that after transferring resources, the leader's utility increases.
    
    If the leader allocates more resources to battlefield $j \in \overline{K}$ than $\frac{v_{bj} \left( x_b + \sum_{l \in K} x_{al} \right)^2}{\left( \sum_{l \in K} (x_{al} v_{bl})^{\frac{1}{2}} \right)^2}$, for this battlefield $j$, we transfer an amount $\varepsilon$ of resources to some battlefield in $K$. If $\varepsilon$ is sufficiently small, $K$ remains unchanged. We demonstrate that the leader's utility improves after reallocating resources. First, when the leader chooses strategy $\bm{x}_a$ and the follower chooses the best response $\bm{x}_b$, the leader's utility on battlefield $j \in K$ is $u_{aj}(\bm{x}_a, \bm{x}_b) = \frac{x_{aj} v_{aj}}{x_{aj} + x_{bj}}$. According to Lemma \ref{sec2-lemma-br}, substituting the follower's best response can yield a utility function solely in terms of $\bm{x}_a$,
    $$
    u_{aj}(\bm{x}_a) = \left(\frac{x_{aj}}{v_{bj}}\right)^{\frac{1}{2}} \cdot v_{aj} \cdot \left( \frac{\sum_{h \in K} (x_{ah} \cdot v_{bh})^\frac{1}{2}}{x_b + \sum_{h \in K} x_{ah}} \right).
    $$
    The aggregate utility of the leader over $K$ is
    $$
    u_{aK}(\bm{x}_a) = \sum_{j \in K} u_{aj}(\bm{x}_a) = \frac{\sum_{h \in K}(x_{ah} v_{bh})^\frac{1}{2}}{x_b + \sum_{h \in K} x_{ah}} \sum_{j \in K} \left(\frac{x_{aj}}{v_{bj}}\right)^\frac{1}{2} v_{aj}.
    $$
    If the leader reallocates an infinitesimal amount $\varepsilon$ of resources to battlefield $j \in K$, the partial derivative is given by
    $$
    \frac{\partial u_{aK}}{\partial x_{aj}} = \frac{\frac{1}{2} (x_{aj})^{-\frac{1}{2}} (v_{bj})^\frac{1}{2}}{\sum_{h \in K} (x_{ah} v_{bh})^\frac{1}{2}} - \frac{1}{x_b + \sum_{h \in K} x_{ah}} + \frac{\frac{1}{2} \frac{(x_{aj})^{-\frac{1}{2}}}{(v_{bj})^\frac{1}{2}} v_{aj}}{\sum_{h \in K} \left(\frac{x_{ah}}{v_{bh}}\right)^\frac{1}{2} v_{ah}}.
    $$
    Thus, we obtain the following expression
    $$
    x_{aj} \frac{\partial u_{aK}}{\partial x_{aj}} = \frac{\frac{1}{2} (x_{aj})^\frac{1}{2} (v_{bj})^\frac{1}{2}}{\sum_{h \in K} (x_{ah} v_{bh})^\frac{1}{2}} - \frac{x_{aj}}{x_b + \sum_{h \in K} x_{ah}} + \frac{\frac{1}{2} \frac{(x_{aj})^\frac{1}{2}}{(v_{bj})^\frac{1}{2}} v_{aj}}{\sum_{h \in K} \left(\frac{x_{ah}}{v_{bh}}\right)^\frac{1}{2} v_{ah}}.
    $$
    Further, we derive
    $$
    \sum_{j \in K} x_{aj} \frac{\partial u_{aK}}{\partial x_{aj}} = 1 - \frac{\sum_{h \in K} x_{ah}}{x_b + \sum_{h \in K} x_{ah}} > 0.
    $$
    
    This implies that within the first $K$ battlefields, there must exist a battlefield $r \in K$ such that $\frac{\partial u_{aK}}{\partial x_{ar}} > 0$. In other words, if there exists a battlefield $j \in \overline{K}$ where the leader allocates more resources than $\frac{v_{bj} \left( x_b + \sum_{l \in K} x_{al} \right)^2}{\left( \sum_{l \in K} (x_{al} v_{bl})^{\frac{1}{2}} \right)^2}$, the leader can transfer $\varepsilon$ resources from this battlefield to battlefield $r$, thereby strictly increasing the leader's utility. Consequently, for all $j \in \overline{K}$, the resources allocated by the leader are precisely equal to $\frac{v_{bj} \left( x_b + \sum_{l \in K} x_{al} \right)^2}{\left( \sum_{l \in K} (x_{al} v_{bl})^{\frac{1}{2}} \right)^2}$.
\end{proof}

{\noindent \bf Proof of Lemma \ref{sec4-lemma-parameters}}
\begin{proof}
     Let $|K(\bm{x}_a)| = k$, $\Delta \bm{x} = (\Delta x_{j})_{j = 1}^k$. Consider another commitment $\bm{x'}_a = ((x_{aj} - \Delta x_j)_{j = 1}^k, x_{a(k+1)}, \cdots, x_{an})$ and $\bm{x'}_b = BR(\bm{x'}_a)$, and the corresponding battlefield set is $K(\bm{x'}_a)$.
     Note that $\bm{x}_a$ is the optimal commitment, and let $\bm{x}_b = BR(\bm{x}_a)$, therefore, we have that $u_a(\bm{x'}_a, \bm{x'}_b) \leq u_a(\bm{x}_a, \bm{x}_b)$.
     By Lemma \ref{sec2-lemma-br}, we can know that the follower's best response is a continuous function of $\bm{x}_a$.
     Note that $x_{bj}>0$ for $j\in K(\bm{x}_a)$.
     Therefore, as long as $\Delta \bm{x}$ is small enough, we still have that $x_{bj} > 0$, $\forall j \in K(\bm{x}_a)$.
     There are two cases: (1) $K(\bm{x}_a) = K(\bm{x'}_a)$, and (2) $K(\bm{x}_a) \subsetneq K(\bm{x'}_a)$.
     
     If $K(\bm{x'}_a)$ is greater than $K(\bm{x}_a)$, the follower must choose some battlefields in $\overline{K(\bm{x}_a)}$. Note that the action profile $(\bm{x}_a, \bm{x}_b)$, the battlefields in the set $\overline{K(\bm{x}_a)}$ have the same marginal utility for the follower, then the follower will simultaneously invest resources in these battlefields. Consequently, $K(\bm{x'}_a) = [n]$ when $K(\bm{x'}_a)$ is greater than $K(\bm{x}_a)$. Now we only discuss the case (1) and the case (2) is similar to the case (1), thus the case (2) can be omitted.
    
    First, we note that $\bm{x}_a$ is the optimal commitment. When the leader makes the commitment $\bm{x'}_a$, the leader's utility do not increase. Therefore, we have the following inequality
    \begin{equation*}
        \left( \frac{\partial \hat{u}_a(\bm{x}_a)}{\partial x_{a1}}, \frac{\partial \hat{u}_a(\bm{x}_a)}{\partial x_{a2}}, \cdots, \frac{\partial \hat{u}_a(\bm{x}_a)}{\partial x_{ak}} \right) \cdot \Delta \bm{x} \leq 0.
    \end{equation*}
    Due to the set $K(\bm{x}_a)$ does not changed and the battlefields in the set $\overline{K(\bm{x}_a)}$ does not enter the set $K(\bm{x}_a)$, according to Lemma \ref{sec2-lemma-br}, we have that $\sum_{h \in K(\bm{x}_a)}(x_{ah}' v_{bh})^\frac{1}{2} \leq \sum_{h \in K(\bm{x}_a)}(x_{ah} v_{bh})^\frac{1}{2}$. Let $L(\bm{x}_a) = \sum_{h \in K(\bm{x}_a)}(x_{ah} v_{bh})^\frac{1}{2}$. For any $\Delta \bm{x}$ satisfying $(1, 1, \cdots, 1) \cdot \Delta \bm{x} = 0$, which means that the amount of resources allocated by the leader remains unchanged in the set $K(\bm{x}_a)$, therefore, we have inequality $\left( \frac{\partial L(\bm{x}_a)}{\partial x_{a1}}, \frac{\partial L(\bm{x}_a)}{\partial x_{a2}}, \cdots, \frac{\partial L(\bm{x}_a)}{\partial x_{ak}} \right) \cdot \Delta \bm{x} \leq 0$ that can be induced by the fact that $\sum_{h \in K(\bm{x}_a)}(x_{ah}' v_{bh})^\frac{1}{2} \leq \sum_{h \in K(\bm{x}_a)}(x_{ah} v_{bh})^\frac{1}{2}$.
    Furthermore we have that $\left(\frac{\partial \hat{u}_a(\bm{x}_a)}{\partial x_{aj}}\right)_{j \in K(\bm{x}_a)}$ is a linear combination of $\left( \frac{\partial L(\bm{x}_a)}{\partial x_{aj}}\right)_{j \in K(\bm{x}_a)}$ and $(1, 1, \cdots, 1)$. We take the partial derivative $\hat{u}_a(\bm{x}_a)$ and $L(\bm{x}_a)$ with regard to $x_{aj}$ for $j\in K(\bm{x}_a)$ respectively,
    \begin{equation*}
        \begin{aligned}
            \frac{\partial \hat{u}_a(\bm{x}_a)}{\partial x_{aj}} = &\frac{1}{2} \frac{v_{aj}}{\sqrt{v_{bj}x_{aj}}} \frac{\sum_{h \in K(\bm{x}_a)} (x_{ah} v_{bh})^\frac{1}{2}}{x_b + \sum_{h \in K(\bm{x}_a)} x_{ah}} + \left(\sum_{h \in K(\bm{x}_a)} \frac{v_{ah}\sqrt{x_{ah}}}{\sqrt{v_{bh}}} \right) \frac{\frac{1}{2}\frac{\sqrt{v_{bj}}}{\sqrt{x_{aj}}}(x_b + \sum_{h \in K(\bm{x}_a)} x_{ah}) - \sum_{h \in K(\bm{x}_a)} (x_{ah} v_{bh})^\frac{1}{2}}{(x_b + \sum_{h \in K(\bm{x}_a)} x_{ah})^2}  \\
            = &\left( \frac{1}{2} \frac{\sum_{h \in K(\bm{x}_a)} (x_{ah} v_{bh})^\frac{1}{2}}{x_b + \sum_{h \in K(\bm{x}_a)} x_{ah}} \frac{v_{aj}}{v_{bj}} + \frac{1}{2} \frac{\sum_{h \in K(\bm{x}_a)} \frac{v_{ah}\sqrt{x_{ah}}}{\sqrt{v_{bh}}}}{x_b + \sum_{h \in K(\bm{x}_a)} x_{ah}} \right) \frac{\sqrt{v_{bj}}}{\sqrt{x_{aj}}} - \frac{\sum_{h \in K(\bm{x}_a)} (x_{ah} v_{bh})^\frac{1}{2} \sum_{h \in K(\bm{x}_a)} \frac{v_{ah}\sqrt{x_{ah}}}{\sqrt{v_{bh}}}}{(x_b + \sum_{h \in K(\bm{x}_a)} x_{ah})^2},  \\
        \end{aligned}
    \end{equation*}
    $$\frac{\partial L(\bm{x}_a)}{\partial x_{aj}} = \frac{\partial \sum_{h \in K(\bm{x})} (x_{ah} v_{bh})^\frac{1}{2}}{\partial x_{aj}} = \frac{1}{2} \frac{\sqrt{v_{bj}}}{\sqrt{x_{aj}}}.$$
     For convenience, we define
    \begin{equation*}
        \begin{cases}
            C_1 = \frac{1}{2} \frac{\sum_{h \in K(\bm{x}_a)} (x_{ah} v_{bh})^\frac{1}{2}}{x_b + \sum_{h \in K(\bm{x}_a)} x_{ah}};  \\
            C_2 = \frac{1}{2} \frac{\sum_{h \in K(\bm{x}_a)} \frac{v_{ah}\sqrt{x_{ah}}}{\sqrt{v_{bh}}}}{x_b + \sum_{h \in K(\bm{x}_a)} x_{ah}};  \\
            C_3 = \frac{\sum_{h \in K(\bm{x}_a)} (x_{ah} v_{bh})^\frac{1}{2} \sum_{h \in K(\bm{x}_a)} \frac{v_{ah}\sqrt{x_{ah}}}{\sqrt{v_{bh}}}}{(x_b + \sum_{h \in K(\bm{x}_a)} x_{ah})^2},
        \end{cases}
    \end{equation*}
    we have
    \begin{equation}\label{equation-partial-ua-xa}
        \frac{\partial \hat{u}_a(\bm{x}_a)}{\partial x_{aj}} = (C_1 \frac{v_{aj}}{v_{bj}} + C_2) \frac{\sqrt{v_{bj}}}{\sqrt{x_{aj}}} - C_3.
    \end{equation}
    $\left(\frac{\partial \hat{u}_a(\bm{x}_a)}{\partial x_{aj}}\right)_{j \in K(\bm{x}_a)}$ is a linear combination of $\left( \frac{\partial L(\bm{x}_a)}{\partial x_{aj}}\right)_{j \in K(\bm{x}_a)}$ and $(1, 1, \cdots, 1)$, we have
    \begin{equation}\label{equation-linear-combination}
        \frac{\partial \hat{u}_a(\bm{x}_a)}{\partial x_{aj}} = C_4 \frac{\partial L(\bm{x}_a)}{\partial x_{aj}} + C_5.
    \end{equation}
    We substitute $\frac{\partial L(\bm{x}_a)}{\partial x_{aj}} = \frac{1}{2} \frac{\sqrt{v_{bj}}}{\sqrt{x_{aj}}}$ into Equation \eqref{equation-linear-combination} and obtain
    \begin{equation*}
        \frac{\partial \hat{u}_a(\bm{x}_a)}{\partial x_{aj}} = \frac{1}{2} C_4 \frac{\sqrt{v_{bj}}}{\sqrt{x_{aj}}} + C_5
    \end{equation*}
    Therefore, we have
    \begin{equation}\label{equation-partial-combination}
        (C_1 \frac{v_{aj}}{v_{bj}} + C_2) \frac{\sqrt{v_{bj}}}{\sqrt{x_{aj}}} - C_3 = C_4 \frac{\partial L(\bm{x}_a)}{\partial x_{aj}} + C_5.
    \end{equation}
    Arranging Equation \eqref{equation-partial-combination} can yield
    \begin{equation}
        \frac{v_{aj}}{v_{bj}} \frac{\sqrt{v_{bj}}}{\sqrt{x_{aj}}} = \frac{C_4 - C_2}{C_1} \frac{\sqrt{v_{bj}}}{\sqrt{x_{aj}}} + \frac{C_5 + C_3}{C_1}
    \end{equation}
    Let $\alpha = \frac{C_4 - C_2}{C_1}$ and $\beta = \frac{C_5 + C_3}{C_1}$, we have $\frac{v_{aj}}{v_{bj}} \frac{\sqrt{v_{bj}}}{\sqrt{x_{aj}}} = \alpha \frac{\sqrt{v_{bj}}}{\sqrt{x_{aj}}} + \beta$. Rearranging it, we can obtain $\frac{v_{aj}}{\sqrt{v_{bj}}} - \alpha \sqrt{v_{bj}} = \sqrt{x_{aj}} \beta$ for $\forall j \in K$.
\end{proof}

{\noindent \bf The Proof of Theorem \ref{sec4-theorem-optimal-commitment}}
\begin{proof}
    For the sake of simplicity, we define the following expressions.
    \begin{define}
        For $i \in \{a, b\}$, $K \subseteq [n]$, let
        \begin{align*}
            v_{iK} &= \sum_{j \in K} v_{ij}, \\
            c_K &= \sum_{j \in K} \frac{v_{aj}^2}{v_{bj}},  \\
            B_1(K) &= x_a v_{bK}^2 - 2 x_b v_{bK} v_{b\overline{K}}, \\
            B_2(K) &= 4 x_b v_{aK} v_{b\overline{K}} - 2 x_a v_{aK} v_{bK},  \\
            B_3(K) &= x_a v_{aK}^2 - 2 x_b v_{b\overline{K}} c_K,  \\
            B_4(K) &= x_a^2 v_{bK}^2 - 4 x_b (x_a + x_b) v_{bK} v_{b\overline{K}},  \\
            B_5(K) &= 8 x_b (x_a + x_b) v_{aK} v_{b\overline{K}} - 2 x_a^2 v_{aK} v_{bK},  \\
            B_6(K) &= x_a^2 v_{aK}^2 - 4 x_b (x_a + x_b) c_K v_{b\overline{K}},  \\
            \phi_1(\theta, K) &= B_1(K) \theta^2 + B_2(K) \theta + B_3(K),  \\
            \phi_2(\theta, K) &= B_4(K) \theta^2 + B_5(K) \theta + B_6(K).
        \end{align*}
    \end{define}

    We consider three cases, namely 
    \begin{itemize}
        \item {\bf Case 1.} $\forall j, h \in K$, $\frac{v_{aj}}{v_{bj}} = \frac{v_{ah}}{v_{bh}}$, or $K = \{1\}$.
        \item {\bf Case 2.1.} there exist $j, h \in K$, such that $\frac{v_{aj}}{v_{bj}} \neq \frac{v_{ah}}{v_{bh}}$, and $K = [n]$.
        \item {\bf Case 2.2.} there exist $j, h \in K$, such that $\frac{v_{aj}}{v_{bj}} \neq \frac{v_{ah}}{v_{bh}}$, and $K \subsetneq [n]$.
    \end{itemize}
    
    Let $\bm{x}_a$ denote the leader's optimal commitment and $\bm{x}_b$ denote the follower's best response.
    First, we consider {\bf Case 1}.
    Let $x_{aK} = \sum_{l \in K} x_{al}$.
    By Theorem \ref{theorem-characterization}, it is easy to obtain that for $\forall j \in K$, $x_{aj} = x_{aK} \frac{v_{aj}}{v_{aK}}$.
    For $\forall j \in \overline{K}$, according to Lemma \ref{sec4-lemma-leader-threshold}, we have
    \begin{align*}
        x_{aj} = \frac{v_{bj}(x_b + x_{aK})^2}{\left(\sum_{l \in K} (x_{al}v_{bl})^\frac{1}{2}\right)^2} &= \frac{v_{bj}(x_b + x_{aK})^2}{\left(\sum_{l \in K} (x_{aK} \frac{v_{al}}{v_{aK}}  v_{bl})^\frac{1}{2}\right)^2} = \frac{v_{bj}(x_b + x_{aK})^2}{x_{aK}v_{bK}}.
    \end{align*}

    The amount of resources allocated by the leader in the set $\overline{K}$ is given by $\sum_{j \in \overline{K}} x_{aj} = \sum_{j \in \overline{K}} \frac{v_{bj}(x_b + x_{aK})^2}{x_{aK}v_{bK}}$.
    Then, we have
    \begin{align}\label{sec4-equ-case1}
        x_{aK} + \sum_{j \in \overline{K}} \frac{v_{bj}(x_b + x_{aK})^2}{x_{aK}v_{bK}} = x_a.
    \end{align}
    Rearranging Equation \eqref{sec4-equ-case1}, we obtain
    \begin{align*}
        (v_{b\overline{K}} + v_{bK}) (x_{aK})^2 + (2 x_b v_{b\overline{K}} - x_a v_{bK}) x_{aK} + x_b^2v_{b\overline{K}} = 0.
    \end{align*}
    This is a quadratic equation. 
    It has two positive real roots for $x_{aK}$.
    In this case, we choose the larger root to maximize the leader's utility:
    \begin{equation*}   
        \begin{aligned}
            &x_{aK} = &\frac{x_a v_{bK} - 2x_b v_{b\overline{K}} + \sqrt{x_a^2 v_{bK}^2 - 4x_a x_b v_{bK} v_{b\overline{K}} - 4x_b^2 v_{bK} v_{b\overline{K}}}}{2(v_{b\overline{K}} + v_{bK})}.
        \end{aligned}
    \end{equation*}
    Obviously, it requires that $x_a^2 v_{bK}^2 - 4x_a x_b v_{bK} v_{b\overline{K}} - 4x_b^2 v_{bK} v_{b\overline{K}} \geq 0$.
    If $x_a^2 v_{bK}^2 - 4x_a x_b v_{bK} v_{b\overline{K}} - 4x_b^2 v_{bK} v_{b\overline{K}} < 0$, it means that for the given set $K$, the leader does not have sufficient resources to cover the battlefield sets $\overline{K}$ or $K$.
    In this case, we should seek the next $K$.
    Therefore, if $x_a^2 v_{bK}^2 - 4x_a x_b v_{bK} v_{b\overline{K}} - 4x_b^2 v_{bK} v_{b\overline{K}} \geq 0$, the optimal commitment is as follows
    \begin{align*}
        x_{aj} = \begin{cases}
            x_{aK} \cdot \frac{v_{aj}}{v_{aK}}, &\text{if} \; j \in K;  \\
            \frac{v_{bj} \left( x_b + x_{aK} \right)^2}{\left( \sum_{l \in K} (x_{al} v_{bl})^{\frac{1}{2}} \right)^2}, &\text{if} \; j \in \overline{K}.
        \end{cases}
    \end{align*}
    This can be calculated in $O(n)$ time.

    Second, we consider the case where $\frac{v_{aj}}{v_{bj}} \neq \frac{v_{ah}}{v_{bh}}$, leading to cases {\bf 2.1} and {\bf 2.2}. In this case, $\beta \neq 0$.
    According to Lemma \ref{sec4-lemma-parameters}, for $\forall j \in K$, we have $(\frac{v_{aj}}{\sqrt{v_{bj}}} - \alpha v_{bj}^\frac{1}{2})\frac{1}{\beta} = x_{aj}^\frac{1}{2}$.
    The resources invested by the leader in $K$ are $\frac{1}{\beta^2} \sum_{j \in K}(\frac{v_{aj}}{\sqrt{v_{bj}}} - \alpha \sqrt{v_{bj}})^2$.
    For $\forall j \in \overline{K}$, we have
    \begin{equation*}
        \begin{aligned}
            x_{aj} &= v_{bj} \cdot \frac{(x_b + x_K)^2}{\left( \sum_{h \in K}(v_{ah} - \alpha v_{bh})\frac{1}{\beta} \right)^2},
        \end{aligned}
    \end{equation*}
    where $x_K = \sum_{j \in K} x_{aj} = \frac{1}{\beta^2} \sum_{j \in K} (\frac{v_{aj}}{\sqrt{v_{bj}}} - \alpha v_{bj}^\frac{1}{2})^2$.
    The resources invested by the leader in $\overline{K}$ are
    \begin{align*}
        \sum_{j \in \overline{K}} x_{aj} = \sum_{j \in \overline{K}} v_{bj} \cdot \beta^2 \frac{(x_b + x_K)^2}{\left(\sum_{h \in K} (v_{ah} - \alpha v_{bh}) \right)^2}.
    \end{align*}
    Let us analyze {\bf Case 2.1}.
    Because of $K = [n]$, we have $\frac{1}{\beta^2} \sum_{j \in [n]}(\frac{v_{aj}}{\sqrt{v_{bj}}} - \alpha \sqrt{v_{bj}})^2 = x_a$.
    It follows that $\beta^2 = \frac{\sum_{j \in [n]}(\frac{v_{aj}}{\sqrt{v_{bj}}} - \alpha \sqrt{v_{bj}})^2}{x_a}$.
    According to Problem \eqref{rewrite-optimal-commitment}, we have
    \begin{equation}\label{sec4-problem-subcase1-optimal-commitment}
        \begin{aligned}
            \max_{\bm{x}_a \in \bm{X}_a} \quad & \hat{u}_a(\bm{x}_a) = \frac{ \left(\sum_{j \in [n]}(\frac{v_{aj}}{v_{bj}} - \alpha)v_{aj}\right) \left(\sum_{j \in [n]} (v_{aj} - \alpha v_{bj})\right)}{\frac{\sum_{j \in [n]}(\frac{v_{aj}}{\sqrt{v_{bj}}} - \alpha \sqrt{v_{bj}})^2}{x_a} x_b + \sum_{j \in [n]}(\frac{v_{aj}}{\sqrt{v_{bj}}} - \alpha \sqrt{v_{bj}})^2} &  \\
            \mbox{s.t.} \quad & \frac{\sum_{j \in [n]}(\frac{v_{aj}}{\sqrt{v_{bj}}} - \alpha \sqrt{v_{bj}})^2}{x_a} > 0. &
        \end{aligned}
    \end{equation}
    We need to find a suitable parameter $\alpha$ that maximizes the leader's utility $\hat{u}_a(\bm{x}_a)$. Therefore, the optimal commitment is as follows
    \begin{align}\label{sec4-equa-subcase1-optimal-commitment}
        x_{aj} = \frac{(\frac{v_{aj}}{\sqrt{v_{bj}}} - \alpha \sqrt{v_{bj}})^2 \cdot x_a}{\sum\limits_{j \in [n]}(\frac{v_{aj}}{\sqrt{v_{bj}}} - \alpha \sqrt{v_{bj}})^2}, \; \forall j \in [n].
    \end{align}
    To find a suitable parameter $\alpha$, we rewrite the Problem \ref{sec4-problem-subcase1-optimal-commitment} as follows
    \begin{equation}\label{sec4-problem-subcase1-optimal-commitment-rewrite}
        \max\limits_{\bm{x}_a \in \bm{X}_a} \hat{u}_a(\bm{x}_a) = \frac{\left((\sum\limits_{j \in [n]} v_{aj})^2 - (\sum\limits_{j \in [n]} v_{bj}) (\sum\limits_{j \in [n]} \frac{v_{aj}^2}{v_{bj}})\right)\alpha}{\alpha^2 (\sum\limits_{j \in [n]} v_{bj}) - 2 \alpha (\sum\limits_{j \in [n]} v_{aj}) + \sum\limits_{j \in [n]} \frac{v_{aj}^2}{v_{bj}}}
    \end{equation}
    Note that $(\sum_{j \in [n]} v_{aj})^2 - (\sum_{j \in [n]} v_{bj}) (\sum_{j \in [n]} \frac{v_{aj}^2}{v_{bj}}) < 0$, we choose the $\alpha$ with negative value to maximize the leader's utility.
    Solve the Problem \ref{sec4-problem-subcase1-optimal-commitment-rewrite}, we have $\alpha = -\sqrt{\frac{1}{\sum_{j \in [n]} v_{bj}} \sum_{j \in [n]} \frac{v_{aj}^2}{v_{bj}}}$.
    Substitute $\alpha$ into Equation \eqref{sec4-equa-subcase1-optimal-commitment}, the closed-form solution is obtained, that is
    \begin{equation*}
        x_{aj} = \frac{ \left( \frac{v_{aj}}{\sqrt{v_{bj}}} + \left( \frac{\sum_{h = 1}^{n}\frac{(v_{ah})^2}{v_{bh}}}{\sum_{h = 1}^{n} v_{bh}} \right)^\frac{1}{2} \sqrt{v_{bj}} \right)^2} {\sum_{l = 1}^{n} \left( \frac{v_{al}}{\sqrt{v_{bl}}} + \left( \frac{\sum_{h = 1}^{n}\frac{(v_{ah})^2}{v_{bh}}}{\sum_{h = 1}^{n} v_{bh}} \right)^\frac{1}{2} \sqrt{v_{bl}} \right)^2} \cdot x_a.
    \end{equation*}
    This can be calculated in $O(n)$ time.

    Finally, let us analyse {\bf Case 2.2}.
    Because of $K \subsetneq [n]$, we can derive following equation
    \begin{equation}\label{sec4-equation-constrain}
        \begin{aligned}
            x_a = &\frac{1}{\beta^2} \sum_{j \in K}(\frac{v_{aj}}{\sqrt{v_{bj}}} - \alpha \sqrt{v_{bj}})^2 + (\sum_{j \in \overline{K}} v_{bj}) \frac{\left[\beta x_b + \frac{1}{\beta} \sum_{j \in K} (\frac{v_{aj}}{\sqrt{v_{bj}}} - \alpha \sqrt{v_{bj}})^2\right]^2}{\left(\sum_{h \in K} (v_{ah} - \alpha v_{bh})\right)^2}.
        \end{aligned}
    \end{equation}
    Let $y = \beta^2$, we have $x_{aj}^\frac{1}{2} = (\frac{v_{aj}}{\sqrt{v_{bj}}} - \alpha v_{bj}^\frac{1}{2})\frac{1}{\beta} = (\frac{v_{aj}}{\sqrt{v_{bj}}} - \alpha v_{bj}^\frac{1}{2})\frac{1}{\sqrt{y}}$.
    Substituting $x_{aj}^\frac{1}{2} = (\frac{v_{aj}}{\sqrt{v_{bj}}} - \alpha v_{bj}^\frac{1}{2})\frac{1}{\sqrt{y}}$ into the objective function of Problem \ref{rewrite-optimal-commitment}, we obtain the leader's utility, that is
    \begin{equation}\label{sec4-equation-subcase2-optimal-utility-solve}
        \begin{aligned}
            \hat{u}_a(\bm{x}_a) &= \left( \sum_{j \in K} (\frac{v_{aj}}{v_{bj}} - \alpha) \frac{1}{\sqrt{y}} v_{aj} \right) \frac{\sum_{j \in K} (v_{aj} - \alpha v_{bj})\frac{1}{\sqrt{y}}}{\frac{1}{y} \sum_{j \in K} (\frac{v_{aj}}{\sqrt{v_{bj}}} - \alpha v_{bj}^\frac{1}{2})^2 + x_b} \\
            &= \frac{1}{y} \cdot \frac{ \left(\sum_{j \in K}(\frac{v_{aj}}{v_{bj}} - \alpha)v_{aj}\right) \left(\sum_{j \in K} (v_{aj} - \alpha v_{bj})\right)}{x_b + \frac{1}{y} \sum_{j \in K}(\frac{v_{aj}}{\sqrt{v_{bj}}} - \alpha \sqrt{v_{bj}})^2}  \\
            &= \frac{ \left(\sum_{j \in K}(\frac{v_{aj}}{v_{bj}} - \alpha)v_{aj}\right) \left(\sum_{j \in K} (v_{aj} - \alpha v_{bj})\right)}{y x_b + \sum_{j \in K} (\frac{v_{aj}}{\sqrt{v_{bj}}} - \alpha \sqrt{v_{bj}})^2} 
        \end{aligned}
    \end{equation}
    Obviously, $\hat{u}_a(\bm{x}_a)$ increases as $y$ decreases for a given $\alpha$. We can rewrite Equation \eqref{sec4-equation-constrain} as
    \begin{equation}\label{sec4-equation-rewrite-constrain-y}
        \begin{aligned}
            x_a = y^{-1} \sum_{j \in K}(\frac{v_{aj}}{\sqrt{v_{bj}}} - \alpha \sqrt{v_{bj}})^2 + (\sum_{j \in \overline{K}} v_{bj}) \frac{y \left(x_b + y^{-1} \sum_{j \in K} (\frac{v_{aj}}{\sqrt{v_{bj}}} - \alpha \sqrt{v_{bj}})^2\right)^2}{\left(\sum_{h \in K} (v_{ah} - \alpha v_{bh})\right)^2}.
        \end{aligned}
    \end{equation}
    For convenience, we define
    \begin{equation*}
        \begin{cases}
            A_1 = \sum_{j \in K}(\frac{v_{aj}}{\sqrt{v_{bj}}} - \alpha \sqrt{v_{bj}})^2  \\
            A_2 = \frac{\sum_{j \in \overline{K}} v_{bj}}{\left(\sum_{h \in K} (v_{ah} - \alpha v_{bh})\right)^2}  \\
        \end{cases}
    \end{equation*}
    Therefore, we have
    \begin{equation*}
        y^{-1} A_1 + A_2 y \left( x_b + y^{-1} A_1 \right)^2  = x_a.
    \end{equation*}
    Furthermore, it can be rearranged as
    \begin{align*}
        A_2 x_b^2 y^2 + (2 A_1 A_2 x_b - x_a) y + A_1 + A_2 A_1^2 = 0,
    \end{align*}
    which is a quadratic equation about $y$.
    Let $y_1$ and $y_2$ be roots of this equation, respectively.
    We have $y_1 y_2 > 0$, which implies $y_1 > 0$ and $y_2 > 0$.
    For a given $\alpha$, we choose a smaller root $\hat{y} = \min\{y_1, y_2\}$ to maximize the leader's utility.
    Thus, we have
    \begin{equation}
        \begin{aligned}
            \hat{y} = & \frac{(x_a - 2 A_1 A_2 x_b) - \sqrt{(2 A_1 A_2 x_b - x_a)^2 - 4 A_2 x_b^2 (A_1 + A_2 A_1^2)}}{2 A_2 x_b^2}  \\
            =& \frac{(x_a - 2 A_1 A_2 x_b) - \sqrt{x_a^2 - 4 A_1 A_2 x_a x_b - 4 A_1 A_2 x_b^2}}{2 A_2 x_b^2}  \\
            =& \frac{\left(x_a - 2 \frac{\left(\sum_{j \in K}(\frac{v_{aj}}{\sqrt{v_{bj}}} - \alpha \sqrt{v_{bj}})^2 \right) (\sum_{j \in \overline{K}} v_{bj})}{\left(\sum_{h \in K} (v_{ah} - \alpha v_{bh})\right)^2} x_b\right)}{2 \frac{\sum_{j \in \overline{K}} v_{bj}}{\left(\sum_{h \in K} (v_{ah} - \alpha v_{bh})\right)^2} x_b^2} - \frac{\sqrt{x_a^2 - 4 \frac{\left(\sum_{j \in K}(\frac{v_{aj}}{\sqrt{v_{bj}}} - \alpha \sqrt{v_{bj}})^2 \right) (\sum_{j \in \overline{K}} v_{bj})}{\left(\sum_{h \in K} (v_{ah} - \alpha v_{bh})\right)^2} x_b (x_a + x_b)}}{2 \frac{\sum_{j \in \overline{K}} v_{bj}}{\left(\sum_{h \in K} (v_{ah} - \alpha v_{bh})\right)^2} x_b^2}
            \end{aligned}
    \end{equation}
    After simplification, we obtain
    \begin{align*}
        \hat{y} = \frac{\phi_1(\alpha, K) - (v_{aK} - v_{bK} \alpha) \sqrt{\phi_2(\alpha, K)}}{2 x_b^2 v_{b\overline{K}}}
    \end{align*}
    When $\phi_2(\alpha, K) \geq 0$, $y$ has real number solutions.
    Substitute $\hat{y}$ into Equation \eqref{sec4-equation-subcase2-optimal-utility-solve}, and obtain the following optimization problem
    \begin{equation}\label{sec4-problem-beta-neq0-subset-commitment}
        \begin{aligned}
            \max_{\bm{x_a} \in \bm{X}_a} \quad &\hat{u}(\bm{x}_a) = \frac{ \left(\sum_{j \in K}(\frac{v_{aj}}{v_{bj}} - \alpha)v_{aj}\right) \left(\sum_{j \in K} (v_{aj} - \alpha v_{bj})\right)}{\hat{y} x_b + \sum_{j \in K}(\frac{v_{aj}}{\sqrt{v_{bj}}} - \alpha \sqrt{v_{bj}})^2} &   \\
            \mbox{s.t.} \quad & x_a^2 - 4 \frac{\left(\sum_{j \in K}(\frac{v_{aj}}{\sqrt{v_{bj}}} - \alpha \sqrt{v_{bj}})^2 \right) (\sum_{j \in \overline{K}} v_{bj})}{\left(\sum_{h \in K} (v_{ah} - \alpha v_{bh})\right)^2} x_b (x_a + x_b) \geq 0. &
        \end{aligned}
    \end{equation}
    We need to find a suitable parameter $\alpha$ that maximizes the leader's utility $\hat{u}_a(\bm{x}_a)$. Therefore, the optimal commitment is as follows
    \begin{align*}
        x_{aj} = \begin{cases}
            \frac{2 x_b^2 v_{b\overline{K}} (\frac{v_{aj}}{\sqrt{v_{bj}}} - \alpha \sqrt{v_{bj}})^2}{\phi_1(\alpha, K) - (v_{aK} - v_{bK} \alpha) \sqrt{\phi_2(\alpha, K)}}, \; j \in K;  \\
            \frac{v_{bj} \left( x_b + \sum_{l \in K} x_{al} \right)^2}{\left( \sum_{l \in K} (x_{al} v_{bl})^{\frac{1}{2}} \right)^2}, \; j \in \overline{K}.
        \end{cases}
    \end{align*}
    This can be calculated in $O(n)$ time.
\end{proof}

{\noindent \bf Proof of Lemma \ref{sec4-lemma-allocate-positive-resources-allbattlefields}}
\begin{proof}
    We prove it by contradiction.
    Suppose the leader does not allocate resources in some battlefields.
    Let $Q = \{1, 2, \cdots, q \}$ be a battlefield set where the leader allocates positive resources, $q < n$.
    Let $\overline{Q} = \{q+1, \cdots, n \}$ be a battlefield set where the leader does not allocate resources.
    Without loss of generality, we assume that $\frac{v_{b1}}{x_{a1}} \geq \cdots \geq \frac{v_{bq}}{x_{aq}}$.
    Let $K$ denote the battlefield set where the follower allocates positive resources in his best response, $|K| = k$.
    Then we have $\frac{v_{bk}}{x_{ak}} > \frac{\left(\sum_{l \in K} (x_{al} v_{bl})^\frac{1}{2}\right)^2}{(x_b + \sum_{l \in K} x_{al})^2}$ according to Lemma \ref{sec2-lemma-br}.

    If $k+1 \leq q$, we have
    \begin{align*}
        \frac{v_{b(k+1)}}{x_{a(k+1)}} \leq \frac{\left(\sum_{l \in K}(x_{al} v_{bl})^\frac{1}{2} + (x_{a(k+1)}^\frac{1}{2} v_{b(k+1)}^\frac{1}{2})\right)^2}{(x_b + \sum_{l \in K} x_{al} + x_{a(k+1)})^2}.
    \end{align*}
    We define the another commitment $\hat{\bm{x}}_a(\varepsilon)$ such that $\hat{x}_{al}(\varepsilon) = (1 - \varepsilon) x_{al}$ for $1 \leq l \leq k$, $\hat{x}_{an}(\varepsilon) = \varepsilon \sum_{l \in K} x_{al}$, and $\hat{x}_{al}(\varepsilon) = x_{al}$ for $k+1 \leq l \leq n-1$.
    When $\varepsilon$ is a small enough positive number, according to Lemma \ref{sec2-lemma-br}, we can easily verify that the following inequality holds
    \begin{align*}
        \frac{v_{bk}}{\hat{x}_{ak}(\varepsilon)} > \frac{\left(\sum_{l \in K}(\hat{x}_{al}(\varepsilon) v_{bl})^\frac{1}{2} + (\hat{x}_{an}(\varepsilon) v_{bn})^\frac{1}{2}\right)^2}{\left(x_b + \sum_{l \in K} \hat{x}_{al}(\varepsilon) + \hat{x}_{an}(\varepsilon)\right)^2}.
    \end{align*}
    It implies that $K(\hat{\bm{x}}_a(\varepsilon)) = \{ n \} \cup K(\bm{x}_a)$.
    Thus, we have $\frac{v_{bn}}{\hat{x}_{an}(\varepsilon)} \geq \frac{v_{b1}}{\hat{x}_{a1}(\varepsilon)} \geq \cdots \geq \frac{v_{bk}}{\hat{x}_{ak}(\varepsilon)}$. On the other hand, because of $k+1 \leq q$, we can easily verify that
    \begin{equation*}
        \begin{aligned}
            \frac{v_{b(k+1)}}{\hat{x}_{a(k+1)}(\varepsilon)} \leq \frac{\left(\sum_{l \in K}(\hat{x}_{al}(\varepsilon) v_{bl})^\frac{1}{2} + (\hat{x}_{an}(\varepsilon) v_{bn})^\frac{1}{2} + (\hat{x}_{a(k+1)}(\varepsilon) v_{b(k+1)})^\frac{1}{2} \right)^2}{\left(x_b + \sum_{l \in K} \hat{x}_{al}(\varepsilon) + \hat{x}_{an}(\varepsilon)\right)^2}.
        \end{aligned}
    \end{equation*}
    Finally, we prove that the leader's utility increases when the leader makes commitment $\hat{\bm{x}}_a(\varepsilon)$ compared to $\bm{x}_a$.
    \begin{equation*}
        \begin{aligned}
            u_a(\hat{\bm{x}}_a(\varepsilon)) = &\left(\sum_{j \in K} (\frac{\hat{x}_{aj}(\varepsilon)}{v_{bj}})^\frac{1}{2} v_{aj} + (\frac{\hat{x}_{an}(\varepsilon)}{v_{bn}})^\frac{1}{2} v_{an} \right) \times \left(\frac{\sum_{l \in K} (\hat{x}_{al}(\varepsilon) v_{bl})^\frac{1}{2} + (\hat{x}_{an}(\varepsilon) v_{bn})^\frac{1}{2}}{\sum_{l \in K} \hat{x}_{al}(\varepsilon) + \hat{x}_{an}(\varepsilon) + x_b}\right) \\
            = & \left((1 - \varepsilon)^\frac{1}{2} \sum_{j \in K} (\frac{x_{aj}}{v_{bj}})^\frac{1}{2} v_{aj} + \frac{(\varepsilon x_{aK})^\frac{1}{2}}{v_{bn}^\frac{1}{2}} v_{an} \right) \times \left( \frac{(1 - \varepsilon)^\frac{1}{2} \sum_{j \in K} (x_{aj} v_{bj})^\frac{1}{2} + \varepsilon^\frac{1}{2} x_{aK}^\frac{1}{2} v_{bn}^\frac{1}{2}}{x_K + x_b}  \right)  \\
            = & \frac{1}{x_{aK} + x_b} \Bigg[ (1-\varepsilon) \sum_{j \in K} (\frac{x_{aj}}{v_{bj}})^\frac{1}{2} v_{aj} \sum_{j \in K} (x_{aj} v_{bj})^\frac{1}{2} + \varepsilon x_{aK} v_{an} + (\varepsilon - \varepsilon^2)^\frac{1}{2} \Big[(x_{aK} v_{bn})^\frac{1}{2} \sum_{j \in K} (\frac{x_{aj}}{v_{bj}})^\frac{1}{2} v_{aj} \Big] + \\
            &(\varepsilon - \varepsilon^2)^\frac{1}{2} (\frac{x_{aK}}{v_{bn}})^\frac{1}{2} v_{an} \sum_{j \in K} (x_{aj} v_{bj})^\frac{1}{2} \Bigg]
        \end{aligned}
    \end{equation*}
    where $x_{aK} = \sum_{j \in K} x_{aj}$.
    The change of the leader's utility is that
    \begin{align*}
        & u_a(\hat{\bm{x}}_a(\varepsilon)) - u_a(\bm{x}_a) \\
        = &\frac{\varepsilon}{x_{aK} + x_b} \Bigg[- \sum_{j \in K} (\frac{x_{aj}}{v_{bj}})^\frac{1}{2} \sum_{j \in K} (x_{aj} v_{bj})^\frac{1}{2} + x_{aK} v_{an} + (\frac{1 - \varepsilon}{\varepsilon})^\frac{1}{2} \Big[(x_{aK} v_{bn})^\frac{1}{2} \sum_{j \in K} (\frac{x_{aj}}{v_{bj}})^\frac{1}{2} + (\frac{x_{aK}}{v_{bn}})^\frac{1}{2} v_{an} \sum_{j \in K} (x_{aj} v_{bj})^\frac{1}{2}
        \Big] \Bigg]
    \end{align*}
    When $\varepsilon$ is small enough, we have $u_a(\hat{\bm{x}}_a(\varepsilon)) - u_a(\bm{x}_a) > 0$.
    If there is a battlefield where the leader does not allocate resources, then the leader can transfer $\varepsilon$ resources to that battlefield, thereby increasing the utility of the leader. The leader allocates positive resources across all battlefields.
\end{proof}

\section{OMITTED PROOFS FROM SECTION 5}\label{appendix-section5}
{\bf Proof of Lemma \ref{sec5-lemma-nece-suff}}
\begin{proof}
    Previous works have established that if the strategy profile $(\bm{x}_a, \bm{x}_b)$ is a NE, then $x_{ij} > 0$, for $\forall i \in \{a, b\}$, $\forall j \in [n]$ \citep{Kim-2018,Li-2022}.
    Hence, if $(\bm{x}_a, BR(\bm{x}_a))$ constitutes a NE, we must have $K(\bm{x}_a) = [n]$.
    Therefore, we examine the situation where the leader's commitment results in the follower having positive best responses across all battlefields.

    According to Equation \eqref{sec2-lem-equ-br}, if $\hat{\bm{x}}_b$ is the best response to $\hat{\bm{x}}_a$ and $\hat{\bm{x}}_a$ is also the best response to $\hat{\bm{x}}_b$, then we have
    \begin{equation}\label{sec5-equation-NE-xa-xb}
        \begin{cases}
            \hat{\bm{x}}_b = \left( \frac{(\hat{x}_{aj} v_{bj})^{\frac{1}{2}} (x_b + x_a)}{\sum_{j' = 1}^n (\hat{x}_{aj'} v_{bj'})^{\frac{1}{2}}} - \hat{x}_{aj} \right)_{j = 1, 2, \cdots, n};  \\
            \hat{\bm{x}}_a = \left( \frac{(\hat{x}_{bj} v_{aj})^\frac{1}{2} (x_a + x_b)}{\sum_{k = 1}^{n}(\hat{x}_{bk} v_{ak})^\frac{1}{2}} - \hat{x}_{bj} \right)_{j = 1, 2, \cdots, n}.
        \end{cases}
    \end{equation}
    By Equation \eqref{sec5-equation-NE-xa-xb}, we have the following equation
    \begin{equation}\label{sec4-equation-optimal-NE-conditon}
        \begin{aligned}
            \frac{(\hat{x}_{aj} v_{bj})^\frac{1}{2} (x_a + x_b)}{\sum_{k = 1}^{n}(\hat{x}_{ak} \cdot v_{bk})^\frac{1}{2}} = \hat{x}_{aj} + \hat{x}_{bj} = \frac{(\hat{x}_{bj} v_{aj})^\frac{1}{2} (x_a + x_b)}{\sum_{k = 1}^{n}(\hat{x}_{bk} \cdot v_{ak})^\frac{1}{2}}, \forall j \in [n].
        \end{aligned}
    \end{equation}
    
    In Equation \eqref{sec4-equation-optimal-NE-conditon}, the first equality holds by using Lemma \ref{sec2-lemma-br} and the fact that $\hat{\bm{x}}_b$ is best response to $\hat{\bm{x}}_a$. The second equality holds by using Lemma \ref{sec2-lemma-br} and the fact that $\hat{\bm{x}}_a$ is also best response to $\hat{\bm{x}}_b$.

    By Equation \eqref{sec4-equation-optimal-NE-conditon}, we have $\frac{(\hat{x}_{aj} \cdot v_{bj})^\frac{1}{2}}{(\hat{x}_{bj} \cdot v_{aj})^\frac{1}{2}} = \frac{\sum_{k = 1}^{n}(\hat{x}_{ak} \cdot v_{bk})^\frac{1}{2}}{\sum_{k = 1}^{n}(\hat{x}_{bk} \cdot v_{ak})^\frac{1}{2}}$, implying $\forall j \in [n]$, $\frac{\hat{x}_{aj} \cdot v_{bj}}{\hat{x}_{bj} \cdot v_{aj}}$ is a constant. We substitute $\hat{x}_{bj}$ into Equation \eqref{sec4-equation-optimal-NE-conditon} yielding
    \begin{equation}\label{sec4-equation-optimal-NE-calcute-1}
        \begin{aligned}
            \frac{\hat{x}_{aj} \cdot v_{bj}}{\hat{x}_{bj} \cdot v_{aj}} &= \frac{\hat{x}_{aj} \cdot v_{bj}}{\left[ \frac{(\hat{x}_{aj} \cdot v_{bj})^\frac{1}{2} \cdot (x_a + x_b)}{\sum_{k = 1}^{n}(\hat{x}_{ak} \cdot v_{bk})^\frac{1}{2}} - \hat{x}_{aj} \right] \cdot v_{aj}} = \frac{1}{\frac{x_a + x_b}{\sum_{k = 1}^{n}(\hat{x}_{ak} v_{bk})^\frac{1}{2}} \cdot \frac{v_{aj}}{(\hat{x}_{aj}v_{bj})^\frac{1}{2}} - \frac{v_{aj}}{v_{bj}}}.
        \end{aligned}
    \end{equation}
    According to Lemma \ref{sec4-lemma-parameters}, we have $\sqrt{\hat{x}_{aj}} = \frac{1}{\beta} (\frac{v_{aj}}{\sqrt{v_{bj}}} - \sqrt{v_{bj}} \alpha)$. Let $\gamma = \frac{1}{\beta}$ and $r = -\alpha$, we have $\sqrt{\hat{x}_{aj}} = \gamma (\frac{v_{aj}}{\sqrt{v_{bj}}} + r \sqrt{v_{bj}})$.
    Substituting it into Equation \eqref{sec4-equation-optimal-NE-calcute-1} yields
   \begin{equation}\label{sec4-equation-optimal-NE-calcute-3}
    \begin{aligned}
            \frac{\hat{x}_{bj} \cdot v_{aj}}{\hat{x}_{aj} \cdot v_{bj}}  &= \frac{x_a + x_b}{\sum_{k = 1}^{n}(\hat{x}_{ak} \cdot v_{bk})^\frac{1}{2}} \cdot \frac{v_{aj}}{\gamma (\frac{v_{aj}}{\sqrt{v_{bj}}} + r \sqrt{v_{bj}}) (v_{bj})^\frac{1}{2}} - \frac{v_{aj}}{v_{bj}}  \\
            &= \frac{x_a + x_b}{\sum_{k = 1}^{n}(\hat{x}_{ak} \cdot v_{bk})^\frac{1}{2}} \cdot \frac{v_{aj}}{\gamma (v_{aj} + r v_{bj})} - \frac{v_{aj}}{v_{bj}}  \\
            &= \frac{x_a + x_b}{\sum_{k = 1}^{n}(\hat{x}_{ak} \cdot v_{bk})^\frac{1}{2}} \cdot \frac{1}{\gamma (1 + r \frac{v_{bj}}{v_{aj}})} - \frac{v_{aj}}{v_{bj}}.
    \end{aligned}
    \end{equation}
    Let $t = \frac{v_{aj}}{v_{bj}}$, $D = \frac{x_a + x_b}{\sum_{k = 1}^{n}(\hat{x}_{ak} \cdot v_{bk})^\frac{1}{2}}$, and let $E = \frac{\hat{x}_{bj} \cdot v_{aj}}{\hat{x}_{aj} \cdot v_{bj}}$, then Equation \eqref{sec4-equation-optimal-NE-calcute-3} can be rewritten as
    \begin{equation}\label{sec4-equation-optimal-NE-calcute-4}
        E = D \cdot \frac{1}{\gamma (1 + \frac{r}{t})} - t.
    \end{equation}
    We rearrange Equation \eqref{sec4-equation-optimal-NE-calcute-4} and obtain $\gamma t^2 + (\gamma E + \gamma r - D)t + \gamma E r = 0$. The terms $r, D, E, \gamma$ are constants independent of $t$, therefore we can view $t$ as a variable. Equation \eqref{sec4-equation-optimal-NE-calcute-4} is a quadratic equation with regard to $t$. Therefore, for any profile $(D, \gamma, r, E)$, there are at most two solutions for $t$ in the Equation \eqref{sec4-equation-optimal-NE-calcute-4}, which means that $t = \frac{v_{aj}}{v_{bj}}$ may have two different values at most such that the optimal commitment and its best response constitute a Nash equilibrium.
\end{proof}

{\noindent\bf Proof of Theorem \ref{finally-theorem}}
\begin{proof}
    First, for simplicity, let's define a few expressions.
    \begin{define}
        For any $T \subseteq [n]$, $i \in \{a,b\}$, let
        \begin{align*}
            v_{iT} &= \sum_{j \in T} v_{ij}, \\
            \Psi_1(\xi_1, \xi_2, \xi_3, \xi_4) &= \left( \xi_1 \xi_3^{-\frac{1}{2}} + \left( \frac{\xi_1^2 \xi_3^{-1} + \xi_2^2 \xi_4^{-1}}{\xi_3 + \xi_4}\right)^\frac{1}{2} \cdot \xi_3^\frac{1}{2} \right)^2, \\
            t_1(T) &= \Psi_1(v_{aT}, v_{a\overline{T}}, v_{bT}, v_{b\overline{T}}), \\
            t_2(T) &= \Psi_1(v_{a\overline{T}}, v_{aT}, v_{b\overline{T}}, v_{bT}), \\
            \Psi_2(\xi_1, \xi_2, \xi_3, \xi_4, \xi_5, \xi_6) &= \frac{\xi_6 \xi_4 \xi_1 (\xi_5 \xi_3)^\frac{1}{2}}{(\xi_5 \xi_3)^\frac{1}{2} + (\xi_6 \xi_4)^\frac{1}{2}} - \frac{\xi_5 \xi_3 \xi_2 (\xi_6 \xi_4)^\frac{1}{2}}{ (\xi_5 \xi_3)^\frac{1}{2} + (\xi_6 \xi_4)\frac{1}{2}}, \\
            \Psi_3(\xi_1, \xi_2, \xi_3, \xi_4, \xi_5, \xi_6) &= \frac{\xi_1 \xi_4 \xi_6 \xi_5}{\xi_5 + \xi_6} - \frac{\xi_6 \xi_5 \xi_3 \xi_2}{\xi_5 + \xi_6}, \\
            t_3(T) &= \Psi_2(v_{a\overline{T}}, v_{aT}, v_{b\overline{T}}, v_{bT}, t_2(T), t_1(T))  + \Psi_3(v_{aT}, v_{a\overline{T}}, v_{bT}, v_{b\overline{T}}, t_1(T), t_2(T)), \\
            t_4(T) &= \Psi_2(v_{aT}, v_{a\overline{T}}, v_{bT}, v_{b\overline{T}}, t_1(T), t_2(T)), \\
            f(v_{aT}, v_{bT}, v_{a\overline{T}}, v_{b\overline{T}}) &= \frac{t_4(T)}{t_3(T)}
        \end{align*}
    \end{define}
    Let $\bm{x}_a$ denote the leader's optimal commitment and $\bm{x}_b$ denote the follower's best response.
    For the first case, it is easy to verify that $\forall j \in [n]$, $x_{aj} = \frac{v_{aj} x_a}{\sum_{h \in [n]} v_{ah}}$ and $x_{bj} = \frac{v_{bj} x_b}{\sum_{h \in [n]} v_{bh}}$ can establish a NE.
    
    For the second case, we consider the new game after merging $n$ battlefields into two battlefields.
    Let $\hat{\bm{x}}_a = (\hat{x}_{a1}, \hat{x}_{a2})$ and $\hat{\bm{x}}_b = (\hat{x}_{b1}, \hat{x}_{b2})$ denote the leader's optimal commitment strategy and the follower's best response respectively.
    By Theorem \ref{sec4-theorem-optimal-commitment}, we can get the leader's optimal commitment, that is
    \begin{equation*}
        \begin{cases}
            \hat{x}_{a1} = \frac{t_1}{t_1 + t_2} x_a;  \\
            \hat{x}_{a2} = \frac{t_2}{t_1 + t_2} x_a.
        \end{cases}
    \end{equation*}
    If $\hat{\bm{x}}_a = (\hat{x}_{a1}, \hat{x}_{a2})$ is also a leader's Nash equilibrium strategy, we have
    \begin{equation}\label{sec5-equation-thm-cal-1}
        \frac{\hat{x}_{a1} v_{bM}}{\hat{x}_{b1} v_{aM}} = \frac{\hat{x}_{a2} v_{b\overline{M}}}{\hat{x}_{b2} v_{a\overline{M}}}.
    \end{equation}
    By Lemma \ref{sec2-lemma-br}, we have $\hat{x}_{bj} = \frac{\left( \hat{x}_{aj} v_{bM} \right)^{\frac{1}{2}} (x_b + x_a)}{\left( \hat{x}_{a1} v_{bM} \right)^{\frac{1}{2}} + \left( \hat{x}_{a2} v_{b\overline{M}} \right)^{\frac{1}{2}}} - \hat{x}_{aj}$, and we substitute it into Equation \eqref{sec5-equation-thm-cal-1}. After some calculations we get
    \begin{equation}\label{sec5-equation-thm-cal-2}
        \frac{x_a}{x_b} = \frac{t_4(M)}{t_3(M)} = f(v_{aT}, v_{bT}, v_{a\overline{T}}, v_{b\overline{T}})
    \end{equation}
    Therefore, we can know $\hat{x}_{a1} > 0$, $\hat{x}_{a2} > 0$, and Equation \eqref{sec5-equation-thm-cal-1} holds when Equation \eqref{sec5-equation-thm-cal-2} is established, and $\hat{x}_{b1} > 0$, $\hat{x}_{b2} > 0$.

    Finally, we need to verify that when Equation \eqref{sec5-equation-thm-cal-2} is established, $\hat{\bm{x}}_a = (\hat{x}_{a1}, \hat{x}_{a2})$ we solved is indeed optimal. By Lemma \ref{sec2-lemma-br}, we know that there are two cases in the follower's best response: (1) positive resources are invested in all battlefields in the follower's best response, and (2) there is one battlefield with no resources invested, and another battlefield with all resources invested in the follower's best response. Without loss of generality, let $(\bm{x'}_{a1}, \bm{x'}_{a2})$ be the another commitment such that $\bm{x'}_{b1} = x_b$ and $\bm{x'}_{b2} = 0$. Because of $\bm{x'}_{b2} = 0$, the leader must invest more resources in the battlefield 2. Thus, we have $x'_{a2} > \hat{x}_{a2}$ and $x'_{a1} < \hat{x}_{a1}$. By Equation \eqref{rewrite-optimal-commitment}, we can calculate the partial derivative of $u_a$ with respect to $\bm{x}_a$, that is
    \begin{equation}
        \frac{\partial u_a}{\partial x_{a1}} = \frac{1}{2(x_a + x_b)} \left( 2v_{aM} + \left(\frac{v_{aM} v_{b\overline{M}}^\frac{1}{2}}{v_{bM}^\frac{1}{2}} + \frac{v_{a\overline{M}} v_{bM}^\frac{1}{2}}{v_{b\overline{M}}^\frac{1}{2}}\right) \frac{x_{a2}^\frac{1}{2}}{x_{a1}^\frac{1}{2}} \right).
    \end{equation}
    Then we have $\frac{\partial u_a}{\partial x'_{a1}} > \frac{\partial u_a}{\partial \hat{x}_{a1}}$ and $\frac{\partial u_a}{\partial x'_{a2}} < \frac{\partial u_a}{\partial \hat{x}_{a2}}$. Due to $\frac{\partial u_a}{\partial \hat{x}_{a1}} = \frac{\partial u_a}{\partial \hat{x}_{a2}}$, then $\frac{\partial u_a}{\partial x'_{a1}} > \frac{\partial u_a}{\partial x'_{a2}}$. Therefore, the leader is willing to increase $x'_{a1}$ and decrease $x'_{a2}$. We can conclude that $\bm{x'}_a$ is not a optimal commitment and $\bm{\hat{x}}_a$ is indeed an optimal commitment.

    It is worth noting that when $\frac{v_{aj}}{v_{bj}} = c$, a constant, case (1) is equivalent to case (2), and $t_4(M) = t_3(M) = 0$.
\end{proof}

\section{OMITTED PROOFS FROM SECTION 6}\label{appendix-section6}
{\bf Proof of Theorem \ref{theorem-ratio}}
\begin{proof}
    Using Lemma \ref{lemma-nash-equilibrium}, we can compute the leader's utility in the Nash equilibrium, that is $u_a^{NE} = \frac{\mu^*}{\mu^* + 1} + \frac{\mu^* v_{a1}}{\mu^* v_{a1} + v_{b1}} v_{a1}$, where $\mu^*$ is a positive solution to $f(\mu) = 0$ defined in Lemma \ref{lemma-nash-equilibrium} and $\mu^* \in [z, \frac{v_{b1}z}{v_{a1}}]$.
    Observe that $u_a^{NE}$ is a monotonically increasing function about $\mu^*$.
    Therefore, we have $\frac{x_av_{a1}^2}{x_av_{a1}+v_{b1}} + \frac{x_a}{x_a+1} \leq u_a^{NE} \leq \frac{v_{b1}x_av_{a1}}{v_{b1}x_a+v_{b1}} + \frac{v_{b1}x_a}{v_{b1}x_a+v_{a1}}$.
    Let $\bm{x}_a$ denote the leader's commitment, and $\bm{x}_b$ denote the follower's best response.
    Then, we can derive that $u_a(\bm{x}_a, \bm{x}_b) \leq u_a(\hat{\bm{x}}_a, \hat{\bm{x}}_b) \leq v_{a1} + 1$, in which the second inequality sign is obtained by allowing the leader to achieve full utility.
    Therefore, we have the following inequalities,
    \begin{align*}
        \frac{u_a(\bm{x}_a, \bm{x}_b)}{\frac{v_{b1}x_av_{a1}}{v_{b1}x_a+v_{b1}} + \frac{v_{b1}x_a}{v_{b1}x_a+v_{a1}}} \leq \frac{u_a(\bm{x}_a, \bm{x}_b)}{u_a^{NE}} \leq \frac{u_a(\hat{\bm{x}}_a, \hat{\bm{x}}_b)}{u_a^{NE}} \leq \frac{v_{a1} + 1}{\frac{x_av_{a1}^2}{x_av_{a1}+v_{b1}} + \frac{x_a}{x_a+1}}.
    \end{align*}
    Consider a certain commitment $\bm{x}_a = (\frac{v_{b1}}{v_{b1} + 1} x_a, \frac{1}{v_{b1} + 1} x_a)$, we have $\bm{x}_b = (\frac{v_{b1}}{v_{b1} + 1}, \frac{1}{v_{b1} + 1})$.
    In this commitment, the leader's utility is
    \begin{align*}
        u_a(\bm{x}_a, \bm{x}_b) = \frac{\frac{v_{b1}}{v_{b1} + 1} x_a}{\frac{v_{b1}}{v_{b1} + 1} x_a + \frac{v_{b1}}{v_{b1} + 1}} v_{a1} + \frac{\frac{1}{v_{b1} + 1} x_a}{\frac{1}{v_{b1} + 1} x_a + \frac{1}{v_{b1} + 1}} = \frac{x_av_{a1} + x_a}{x_a + 1}.
    \end{align*}
    Substituting $u_a(\bm{x}_a, \bm{x}_b)$ into the inequality above yields
    \begin{align*}
        \frac{\frac{x_av_{a1} + x_a}{x_a + 1}}{\frac{v_{b1}x_av_{a1}}{v_{b1}x_a+v_{b1}} + \frac{v_{b1}x_a}{v_{b1}x_a+v_{a1}}} \leq \frac{\frac{x_av_{a1} + x_a}{x_a + 1}}{u_a^{NE}} \leq \frac{u_a(\hat{\bm{x}}_a, \hat{\bm{x}}_b)}{u_a^{NE}} \leq \frac{v_{a1} + 1}{\frac{x_av_{a1}^2}{x_av_{a1}+v_{b1}} + \frac{x_a}{x_a+1}}.
    \end{align*}
    After simplification, we obtain
    \begin{align*}
        \frac{v_{a1} + 1}{v_{a1} + \frac{v_{b1}(x_a+1)}{v_{b1}x_a+v_{a1}}} \leq \frac{u_a(\hat{\bm{x}}_a, \hat{\bm{x}}_b)}{u_a^{NE}} \leq \frac{v_{a1} + 1}{\frac{x_av_{a1}^2}{x_av_{a1}+v_{b1}} + \frac{x_a}{x_a+1}}.
    \end{align*}
    On the one hand, it is easy to obtain that $\lim_{z \rightarrow \infty} \frac{v_{a1} + 1}{\frac{x_av_{a1}^2}{x_av_{a1}+v_{b1}} + \frac{x_a}{x_a+1}} = 1$.
    On the other hand, we let $v_{a1} = x_a$ and $v_{b1} = v_{a1}^2$, we have $\lim_{x_a \rightarrow 0} \frac{v_{a1} + 1}{v_{a1} + \frac{v_{b1}(x_a+1)}{v_{b1}x_a+v_{a1}}} = \infty$.
\end{proof}

\begin{lemma}\label{lemma-nash-equilibrium}
    [Li and Zheng (2022)] In a two-player Lottery Colonel Blotto game, assume that $\frac{v_{b1}}{v_{a1}} \leq \frac{v_{b2}}{v_{a2}} \leq \cdots \leq \frac{v_{bn}}{v_{an}}$. Let
    $$
    f(\mu) = \sum_{h=1}^n \left[ v_{ah}\left(\frac{v_{bh}}{v_{ah}}\right) \mu \left(\mu-\frac{v_{bh}}{v_{ah}} \frac{x_a}{x_b}\right) \cdot \prod_{j \neq h}\left( \mu + \frac{v_{bj}}{v_{aj}} \right)^2 \right],
    $$
    and $\mu^*$ is a positive solution to $f(\mu) = 0$.
    Then, $(\bm{x^*}_a, \bm{x^*}_b)$ is a Nash equilibrium of the game if and only if
    \begin{align*}
        x_{ah}^* &= \mu^* \cdot \frac{v_{bh} \mu^* / \left( \mu^* + \left( \frac{v_{bh}}{v_{ah}}\right) \right )^2}{\sum_{j=1}^n v_{bj} \left( \frac{v_{bj}}{v_{aj}} \right) \mu^* / \left( \mu^* + \left( \frac{v_{bj}}{v_{aj}} \right) \right)^2} \cdot x_b, \\
        x_{bh}^* &= \frac{v_{bh} \left(\frac{v_{bh}}{v_{ah}} \right) \mu^* / \left( \mu^* + \left( \frac{v_{bh}}{v_{ah}}\right) \right )^2}{\sum_{j=1}^n v_{bj} \left( \frac{v_{bj}}{v_{aj}} \right) \mu^* / \left( \mu^* + \left( \frac{v_{bj}}{v_{aj}} \right) \right)^2} \cdot x_b,
    \end{align*}
    and $\mu^* \in [\min_h \frac{v_{bh}}{v_{ah}} \frac{x_a}{x_b}, \max_h \frac{v_{bh}}{v_{ah}}\frac{x_a}{x_b}]$.
\end{lemma}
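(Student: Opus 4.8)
The plan is to characterize the Nash equilibria through the mutual first-order conditions, applying the best-response logic behind Lemma \ref{sec2-lemma-br} symmetrically to both players. First I would invoke the known fact that at any Nash equilibrium $(\bm{x}^*_a,\bm{x}^*_b)$ each player places strictly positive budget on every battlefield \citep{Kim-2018,Li-2022}, so each player's budget-constrained maximization is solved in the interior; since $u_{ij}$ is concave in the own variable by the sign of the second derivative in \eqref{equation-marginal-utility-follower}, the Karush--Kuhn--Tucker conditions are both necessary and sufficient, and they read $\frac{v_{aj}x^*_{bj}}{(x^*_{aj}+x^*_{bj})^2}=\lambda_a$ and $\frac{v_{bj}x^*_{aj}}{(x^*_{aj}+x^*_{bj})^2}=\lambda_b$ for all $j$, with $\lambda_a,\lambda_b>0$ the budget multipliers.

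Dividing the two families of conditions cancels the common denominator and gives $\frac{v_{bj}x^*_{aj}}{v_{aj}x^*_{bj}}=\lambda_b/\lambda_a$, which is independent of $j$; I would call this scalar $\mu^*$ (the same quantity that surfaces via Lemma \ref{sec4-lemma-parameters} in the reformulation $\sqrt{x_{aj}}=\gamma(v_{aj}/\sqrt{v_{bj}}+r\sqrt{v_{bj}})$). Writing $\rho_j=v_{bj}/v_{aj}$ and substituting $x^*_{aj}=\mu^*\rho_j^{-1}x^*_{bj}$ into player $b$'s condition solves $x^*_{bj}$ as a fixed scalar times $v_{bj}\rho_j\mu^*/(\mu^*+\rho_j)^2$, hence $x^*_{aj}$ as the same scalar times $v_{bj}(\mu^*)^2/(\mu^*+\rho_j)^2$; imposing $\sum_j x^*_{bj}=x_b$ pins the scalar and reproduces exactly the displayed formulas (using $v_{ah}\rho_h=v_{bh}$ to match the stated shape).

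The only constraint not yet used is $\sum_j x^*_{aj}=x_a$. Substituting the formula for $x^*_{aj}$ and clearing the denominators $(\mu^*+\rho_j)^2$ --- i.e.\ multiplying through by $\prod_k(\mu^*+\rho_k)^2$ --- turns this one scalar equation into exactly $f(\mu^*)=0$ after regrouping (one $\mu$ factor splits off, giving the harmless root $\mu=0$ carried by $f$). Conversely, any positive root $\mu^*$ of $f$ yields, via the displayed formulas, strategies that are automatically positive, satisfy both budget constraints, and meet the sufficient KKT conditions, hence constitute a Nash equilibrium. For the range claim $\mu^*\in[\min_h\rho_h x_a/x_b,\ \max_h\rho_h x_a/x_b]$ I would use a sign/intermediate-value argument: outside that interval the induced allocation of one of the players would fail its own budget constraint, while $f$ changes sign across the endpoints, so an admissible root exists and every equilibrium corresponds to one.

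I expect the main obstacle to be the ``if and only if'' bookkeeping around $\mu^*$: the forward direction relies on the imported interiority result plus a clean collapse of the two budget equations into a single polynomial, and the converse needs the concavity-based sufficiency of the first-order conditions together with confirming that $f$ has a root in the stated interval (and, if desired, that this equilibrium is the unique full-support one for that $\mu^*$). The algebra that reshapes $\sum_j x^*_{aj}=x_a$ into the precise form of $f$ is routine but is exactly where sign and bookkeeping errors are easiest to make.
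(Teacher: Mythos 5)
This statement is imported: the paper attributes it to Li and Zheng (2022) (\cite{Li-2022}) and gives no proof of its own, so there is nothing internal to compare against; what can be judged is whether your derivation is sound, and it is. Your route is the standard one and the steps check out: full support at any Nash equilibrium is correctly imported from \cite{Kim-2018,Li-2022} (the paper itself leans on the same fact in the proof of Lemma \ref{sec5-lemma-nece-suff}); concavity of $u_{ij}$ in the own variable makes the interior first-order conditions necessary and sufficient; dividing the two families of conditions gives $\frac{v_{bj}x^*_{aj}}{v_{aj}x^*_{bj}}=\mu^*$ constant in $j$, and back-substitution with $\sum_j x^*_{bj}=x_b$ reproduces exactly the displayed allocations, with $x^*_{aj}+x^*_{bj}$ proportional to $v_{bj}\mu^*/(\mu^*+\rho_j)$ so that both players' marginals are indeed equalized. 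Writing $\rho_h=v_{bh}/v_{ah}$, the remaining constraint $\sum_h x^*_{ah}=x_a$ is equivalent to $\sum_h \frac{v_{bh}\,\mu^*\left(\mu^*-\rho_h\frac{x_a}{x_b}\right)}{(\mu^*+\rho_h)^2}=0$, and clearing denominators with $v_{bh}=v_{ah}\rho_h$ gives precisely $f(\mu^*)=0$, as you claim. Two small tidiness points: the interval claim is most cleanly obtained by noting that for $\mu>\max_h\rho_h\frac{x_a}{x_b}$ every summand of $f$ is strictly positive and for $0<\mu<\min_h\rho_h\frac{x_a}{x_b}$ strictly negative, so any positive root must lie in the closed interval, while $f(\min)\le 0\le f(\max)$ plus continuity gives existence of such a root (your budget-violation phrasing is the same fact in disguise); and the parenthetical identification of $\mu^*$ with the parameters of Lemma \ref{sec4-lemma-parameters} is not needed and is best dropped, since that lemma concerns the leader's optimal commitment rather than a Nash equilibrium.
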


\end{document}